\tikzset{every picture/.style={scale=.38}}
\def\u{\mathtt u}
\def\v{\mathtt v}
\def\s{\mathtt s}
\def\a{\mathtt a}
\def\b{\mathtt b}
\def\c{\mathtt c}
\def\d{\mathtt d}
\def\e{\mathtt e}
\def\A{\mathcal A}
\def\H{\mathcal H}
\def\Zn{\mathbb Z_n}
\def\Cq{\mathbb C^q}
\newcommand{\prd}[1]{\mbox{$\prod_{#1}$}}
\def\unity{\mathbbm 1}
\def\mitg{\frac 1 2}
\theoremstyle{definition}
\newtheorem{Lemma}{Lemma}
\title{\bf Discrete holography in dual-unitary circuits}
\author{Llu\'{\i}s Masanes\footnote{l.masanes@ucl.ac.uk}
\\
\em London Centre for Nanotechnology, University College London, UK
\\
\em Department of Computer Science, University College London, UK
}
\newcommand{\xg}[2]{
  \draw[thin] (#1-.5,#2-.5) --  (#1+.5,#2+.5);
  \draw[thin] (#1-.5,#2+.5) --  (#1+.5,#2-.5);
}
\newcommand{\porta}[2]{
  \draw[fill=black!10, thin] (#1-.4,#2-.4) rectangle (#1+.4,#2+.4); 
  \draw[thick,color=black!50] (#1-.2,#2-.3) -- (#1-.2,#2); \draw[draw=none,fill=black!50] (#1-.2,#2+.3) -- (#1-.3,#2-.15) -- (#1-.1,#2-.15);
  \draw[thin] (#1-.4,#2-.4) --  (#1-.5,#2-.5);
  \draw[thin] (#1+.4,#2-.4) --  (#1+.5,#2-.5);
  \draw[thin] (#1+.4,#2+.4) --  (#1+.5,#2+.5);
  \draw[thin] (#1-.4,#2+.4) --  (#1-.5,#2+.5);
}
\newcommand{\portac}[2]{
  \draw[fill=black!40, thin] (#1-.4,#2-.4) rectangle (#1+.4,#2+.4); 
  \draw[thick,color=white] (#1-.2,#2-.3) -- (#1-.2,#2); \draw[draw=none,fill=white] (#1-.2,#2+.3) -- (#1-.3,#2-.15) -- (#1-.1,#2-.15);
  \draw[thin] (#1-.4,#2-.4) --  (#1-.5,#2-.5);
  \draw[thin] (#1+.4,#2-.4) --  (#1+.5,#2-.5);
  \draw[thin] (#1+.4,#2+.4) --  (#1+.5,#2+.5);
  \draw[thin] (#1-.4,#2+.4) --  (#1-.5,#2+.5);
}
\newcommand{\portabar}[2]{
  \draw[fill=black!10, thin] (#1-.4,#2-.4) rectangle (#1+.4,#2+.4); 
  \draw[thick,color=black!50] (#1-.2,#2-.3) -- (#1-.2,#2); \draw[draw=none,fill=black!50] (#1-.2,#2+.3) -- (#1-.3,#2-.15) -- (#1-.1,#2-.15);
  \draw[thick,color=black!50] (#1+.2,#2+.3) -- (#1+.2,#2); \draw[draw=none,fill=black!50] (#1+.2,#2-.3) -- (#1+.3,#2+.15) -- (#1+.1,#2+.15);
  \draw[thin] (#1-.4,#2-.4) --  (#1-.5,#2-.5);
  \draw[thin] (#1+.4,#2-.4) --  (#1+.5,#2-.5);
  \draw[thin] (#1+.4,#2+.4) --  (#1+.5,#2+.5);
  \draw[thin] (#1-.4,#2+.4) --  (#1-.5,#2+.5);
}
\newcommand{\portabart}[2]{
  \draw[fill=black!10, thin] (#1-.4,#2-.4) rectangle (#1+.4,#2+.4); 
  \draw[thick,color=black!50] (#1-.2,#2+.3) -- (#1-.2,#2); \draw[draw=none,fill=black!50] (#1-.2,#2-.3) -- (#1-.3,#2+.15) -- (#1-.1,#2+.15);
  \draw[thick,color=black!50] (#1+.2,#2-.3) -- (#1+.2,#2); \draw[draw=none,fill=black!50] (#1+.2,#2+.3) -- (#1+.3,#2-.15) -- (#1+.1,#2-.15);
  \draw[thin] (#1-.4,#2-.4) --  (#1-.5,#2-.5);
  \draw[thin] (#1+.4,#2-.4) --  (#1+.5,#2-.5);
  \draw[thin] (#1+.4,#2+.4) --  (#1+.5,#2+.5);
  \draw[thin] (#1-.4,#2+.4) --  (#1-.5,#2+.5);
}
\newcommand{\portabarc}[2]{
  \draw[fill=black!40, thin] (#1-.4,#2-.4) rectangle (#1+.4,#2+.4); 
  \draw[thick,color=white] (#1-.2,#2-.3) -- (#1-.2,#2); 
  \draw[draw=none,fill=white] (#1-.2,#2+.3) -- (#1-.3,#2-.15) -- (#1-.1,#2-.15);
  \draw[thick,color=white] (#1+.2,#2+.3) -- (#1+.2,#2); 
  \draw[draw=none,fill=white] (#1+.2,#2-.3) -- (#1+.3,#2+.15) -- (#1+.1,#2+.15);
  \draw[thin] (#1-.4,#2-.4) --  (#1-.5,#2-.5);
  \draw[thin] (#1+.4,#2-.4) --  (#1+.5,#2-.5);
  \draw[thin] (#1+.4,#2+.4) --  (#1+.5,#2+.5);
  \draw[thin] (#1-.4,#2+.4) --  (#1-.5,#2+.5);
}
\newcommand{\portacbar}[2]{
  \draw[fill=black!40, thin] (#1-.4,#2-.4) rectangle (#1+.4,#2+.4); 
  \draw[thick,color=white] (#1+.3,#2+.2) -- (#1,#2+.2);
  \draw[draw=none,fill=white] (#1-.3,#2+.2) -- (#1+.15,#2+.3) -- (#1+.15,#2+.1);
  \draw[thick,color=white] (#1-.3,#2-.2) -- (#1,#2-.2);
  \draw[draw=none,fill=white] (#1+.3,#2-.2) -- (#1-.15,#2-.3) -- (#1-.15,#2-.1);
  \draw[thin] (#1-.4,#2-.4) --  (#1-.5,#2-.5);
  \draw[thin] (#1+.4,#2-.4) --  (#1+.5,#2-.5);
  \draw[thin] (#1+.4,#2+.4) --  (#1+.5,#2+.5);
  \draw[thin] (#1-.4,#2+.4) --  (#1-.5,#2+.5);
}
\newcommand{\portaabar}[2]{
  \draw[fill=black!40, thin] (#1-.4,#2-.4) rectangle (#1+.4,#2+.4); 
  \draw[thick,color=white] (#1+.2,#2-.3) -- (#1+.2,#2); \draw[draw=none,fill=white] (#1+.2,#2+.3) -- (#1+.3,#2-.15) -- (#1+.1,#2-.15);
  \draw[thick,color=white] (#1-.2,#2+.3) -- (#1-.2,#2); \draw[draw=none,fill=white] (#1-.2,#2-.3) -- (#1-.3,#2+.15) -- (#1-.1,#2+.15);
  \draw[thin] (#1-.4,#2-.4) --  (#1-.5,#2-.5);
  \draw[thin] (#1+.4,#2-.4) --  (#1+.5,#2-.5);
  \draw[thin] (#1+.4,#2+.4) --  (#1+.5,#2+.5);
  \draw[thin] (#1-.4,#2+.4) --  (#1-.5,#2+.5);
}
\newcommand{\portacabar}[2]{
  \draw[fill=black!10, thin] (#1-.4,#2-.4) rectangle (#1+.4,#2+.4); 
  \draw[thick,color=black!50] (#1+.3,#2-.2) -- (#1,#2-.2);
  \draw[draw=none,fill=black!50] (#1-.3,#2-.2) -- (#1+.15,#2-.3) -- (#1+.15,#2-.1);
  \draw[thick,color=black!50] (#1-.3,#2+.2) -- (#1,#2+.2);
  \draw[draw=none,fill=black!50] (#1+.3,#2+.2) -- (#1-.15,#2+.3) -- (#1-.15,#2+.1);
  \draw[thin] (#1-.4,#2-.4) --  (#1-.5,#2-.5);
  \draw[thin] (#1+.4,#2-.4) --  (#1+.5,#2-.5);
  \draw[thin] (#1+.4,#2+.4) --  (#1+.5,#2+.5);
  \draw[thin] (#1-.4,#2+.4) --  (#1-.5,#2+.5);
}
\newcommand{\portatq}[2]{
  \draw[fill=black!10, thin] (#1-.4,#2-.4) rectangle (#1+.4,#2+.4); 
  \draw[thick,color=black!50] (#1+.3,#2-.2) -- (#1,#2-.2);
  \draw[draw=none,fill=black!50] (#1-.3,#2-.2) -- (#1+.15,#2-.3) -- (#1+.15,#2-.1);
  \draw[thin] (#1-.4,#2-.4) --  (#1-.5,#2-.5);
  \draw[thin] (#1+.4,#2-.4) --  (#1+.5,#2-.5);
  \draw[thin] (#1+.4,#2+.4) --  (#1+.5,#2+.5);
  \draw[thin] (#1-.4,#2+.4) --  (#1-.5,#2+.5);
}
\newcommand{\portacabars}[2]{
  \draw[fill=black!10, thin] (#1-.4,#2-.4) rectangle (#1+.4,#2+.4); 
  \draw[thick,color=black!50] (#1-.3,#2-.2) -- (#1,#2-.2);
  \draw[draw=none,fill=black!50] (#1+.3,#2-.2) -- (#1-.15,#2-.3) -- (#1-.15,#2-.1);
  \draw[thick,color=black!50] (#1+.3,#2+.2) -- (#1,#2+.2);
  \draw[draw=none,fill=black!50] (#1-.3,#2+.2) -- (#1+.15,#2+.3) -- (#1+.15,#2+.1);
  \draw[thin] (#1-.4,#2-.4) --  (#1-.5,#2-.5);
  \draw[thin] (#1+.4,#2-.4) --  (#1+.5,#2-.5);
  \draw[thin] (#1+.4,#2+.4) --  (#1+.5,#2+.5);
  \draw[thin] (#1-.4,#2+.4) --  (#1-.5,#2+.5);
}
\newcommand{\portarr}[2]{
  \draw[fill=black!10, thin] (#1-.4,#2-.4) rectangle (#1+.4,#2+.4); 
  \draw[thick,color=black!50] (#1+.2,#2+.3) -- (#1+.2,#2);
  \draw[draw=none,fill=black!50] (#1+.2,#2-.3) -- (#1+.3,#2+.15) -- (#1+.1,#2+.15);
  \draw[thin] (#1-.4,#2-.4) --  (#1-.5,#2-.5);
  \draw[thin] (#1+.4,#2-.4) --  (#1+.5,#2-.5);
  \draw[thin] (#1+.4,#2+.4) --  (#1+.5,#2+.5);
  \draw[thin] (#1-.4,#2+.4) --  (#1-.5,#2+.5);
}
\newcommand{\portas}[2]{
  \draw[fill=black!10, thin] (#1-.4,#2-.4) rectangle (#1+.4,#2+.4); 
  \draw[thick,color=black!50] (#1+.2,#2-.3) -- (#1+.2,#2);
  \draw[draw=none,fill=black!50] (#1+.2,#2+.3) -- (#1+.3,#2-.15) -- (#1+.1,#2-.15);
  \draw[thin] (#1-.4,#2-.4) --  (#1-.5,#2-.5);
  \draw[thin] (#1+.4,#2-.4) --  (#1+.5,#2-.5);
  \draw[thin] (#1+.4,#2+.4) --  (#1+.5,#2+.5);
  \draw[thin] (#1-.4,#2+.4) --  (#1-.5,#2+.5);
}
\newcommand{\portasr}[2]{
  \draw[fill=black!10, thin] (#1-.4,#2-.4) rectangle (#1+.4,#2+.4); 
  \draw[thick,color=black!50] (#1-.3,#2+.2) -- (#1,#2+.2);
  \draw[draw=none,fill=black!50] (#1+.3,#2+.2) -- (#1-.15,#2+.3) -- (#1-.15,#2+.1);
  \draw[thin] (#1-.4,#2-.4) --  (#1-.5,#2-.5);
  \draw[thin] (#1+.4,#2-.4) --  (#1+.5,#2-.5);
  \draw[thin] (#1+.4,#2+.4) --  (#1+.5,#2+.5);
  \draw[thin] (#1-.4,#2+.4) --  (#1-.5,#2+.5);
}
\newcommand{\portat}[2]{
  \draw[fill=black!10, thin] (#1-.4,#2-.4) rectangle (#1+.4,#2+.4); 
  \draw[thick,color=black!50] (#1-.2,#2+.3) -- (#1-.2,#2);
  \draw[draw=none,fill=black!50] (#1-.2,#2-.3) -- (#1-.3,#2+.15) -- (#1-.1,#2+.15);
  \draw[thin] (#1-.4,#2-.4) --  (#1-.5,#2-.5);
  \draw[thin] (#1+.4,#2-.4) --  (#1+.5,#2-.5);
  \draw[thin] (#1+.4,#2+.4) --  (#1+.5,#2+.5);
  \draw[thin] (#1-.4,#2+.4) --  (#1-.5,#2+.5);
}
\newcommand{\portad}[2]{
  \draw[fill=black!40, thin] (#1-.4,#2-.4) rectangle (#1+.4,#2+.4); 
  \draw[thick,color=white] (#1-.2,#2+.3) -- (#1-.2,#2);
  \draw[draw=none,fill=white] (#1-.2,#2-.3) -- (#1-.3,#2+.15) -- (#1-.1,#2+.15);
  \draw[thin] (#1-.4,#2-.4) --  (#1-.5,#2-.5);
  \draw[thin] (#1+.4,#2-.4) --  (#1+.5,#2-.5);
  \draw[thin] (#1+.4,#2+.4) --  (#1+.5,#2+.5);
  \draw[thin] (#1-.4,#2+.4) --  (#1-.5,#2+.5);
}
\newcommand{\portacs}[2]{
  \draw[fill=black!40, thin] (#1-.4,#2-.4) rectangle (#1+.4,#2+.4); 
  \draw[thick,color=white] (#1+.2,#2-.3) -- (#1+.2,#2);
  \draw[draw=none,fill=white] (#1+.2,#2+.3) -- (#1+.3,#2-.15) -- (#1+.1,#2-.15);
  \draw[thin] (#1-.4,#2-.4) --  (#1-.5,#2-.5);
  \draw[thin] (#1+.4,#2-.4) --  (#1+.5,#2-.5);
  \draw[thin] (#1+.4,#2+.4) --  (#1+.5,#2+.5);
  \draw[thin] (#1-.4,#2+.4) --  (#1-.5,#2+.5);
}
\newcommand{\portadrr}[2]{
  \draw[fill=black!40, thin] (#1-.4,#2-.4) rectangle (#1+.4,#2+.4); 
  \draw[thick,color=white] (#1+.3,#2+.2) -- (#1,#2+.2);
  \draw[draw=none,fill=white] (#1-.3,#2+.2) -- (#1+.15,#2+.3) -- (#1+.15,#2+.1);
  \draw[thin] (#1-.4,#2-.4) --  (#1-.5,#2-.5);
  \draw[thin] (#1+.4,#2-.4) --  (#1+.5,#2-.5);
  \draw[thin] (#1+.4,#2+.4) --  (#1+.5,#2+.5);
  \draw[thin] (#1-.4,#2+.4) --  (#1-.5,#2+.5);
}
\newcommand{\portadr}[2]{
  \draw[fill=black!40, thin] (#1-.4,#2-.4) rectangle (#1+.4,#2+.4); 
  \draw[thick,color=white] (#1-.3,#2-.2) -- (#1,#2-.2);
  \draw[draw=none,fill=white] (#1+.3,#2-.2) -- (#1-.15,#2-.3) -- (#1-.15,#2-.1);
  \draw[thin] (#1-.4,#2-.4) --  (#1-.5,#2-.5);
  \draw[thin] (#1+.4,#2-.4) --  (#1+.5,#2-.5);
  \draw[thin] (#1+.4,#2+.4) --  (#1+.5,#2+.5);
  \draw[thin] (#1-.4,#2+.4) --  (#1-.5,#2+.5);
}
\newcommand{\portau}[2]{
  \draw[fill=black!10, thin] (#1-.4,#2-.4) rectangle (#1+.4,#2+.4); 
  \draw (#1,#2) node {$\u$};
  \draw[thin] (#1-.4,#2-.4) --  (#1-.5,#2-.5);
  \draw[thin] (#1+.4,#2-.4) --  (#1+.5,#2-.5);
  \draw[thin] (#1+.4,#2+.4) --  (#1+.5,#2+.5);
  \draw[thin] (#1-.4,#2+.4) --  (#1-.5,#2+.5);
}
\newcommand{\portav}[2]{
  \draw[fill=black!10, thin] (#1-.4,#2-.4) rectangle (#1+.4,#2+.4); 
  \draw (#1,#2) node {$\v$};
  \draw[thin] (#1-.4,#2-.4) --  (#1-.5,#2-.5);
  \draw[thin] (#1+.4,#2-.4) --  (#1+.5,#2-.5);
  \draw[thin] (#1+.4,#2+.4) --  (#1+.5,#2+.5);
  \draw[thin] (#1-.4,#2+.4) --  (#1-.5,#2+.5);
}
\newcommand{\portaud}[2]{
  \draw[fill=black!10, thin] (#1-.4,#2-.4) rectangle (#1+.4,#2+.4); 
  \draw (#1,#2) node {$\,\u^{\hspace{-.6mm}\dag}$};
  \draw[thin] (#1-.4,#2-.4) --  (#1-.5,#2-.5);
  \draw[thin] (#1+.4,#2-.4) --  (#1+.5,#2-.5);
  \draw[thin] (#1+.4,#2+.4) --  (#1+.5,#2+.5);
  \draw[thin] (#1-.4,#2+.4) --  (#1-.5,#2+.5);
}
\newcommand{\portavd}[2]{
  \draw[fill=black!10, thin] (#1-.4,#2-.4) rectangle (#1+.4,#2+.4); 
  \draw (#1,#2) node {$\,\v^{\hspace{-.6mm}\dag}$};
  \draw[thin] (#1-.4,#2-.4) --  (#1-.5,#2-.5);
  \draw[thin] (#1+.4,#2-.4) --  (#1+.5,#2-.5);
  \draw[thin] (#1+.4,#2+.4) --  (#1+.5,#2+.5);
  \draw[thin] (#1-.4,#2+.4) --  (#1-.5,#2+.5);
}
\newcommand{\barra}[2]{
  \draw[thin] (#1-.5,#2-.5) -- (#1+.5,#2+.5);
}
\newcommand{\antibarra}[2]{
  \draw[thin] (#1-.5,#2+.5) -- (#1+.5,#2-.5);
}
\newcommand{\cables}[2]{
  \draw[thin] (#1-.5,#2-.5) -- (#1-.5,#2-.7);
  \draw[thin] (#1+.5,#2-.5) -- (#1+.5,#2-.7);
}
\newcommand{\cablesh}[2]{
  \draw[thin] (#1+.5,#2+.5) -- (#1+.7,#2+.5);
  \draw[thin] (#1+.5,#2-.5) -- (#1+.7,#2-.5);
}
\begin{document}
\maketitle
\begin{abstract}
We introduce a family of dual-unitary circuits in 1+1 dimensions which constitute a discrete analog of conformal field theories. These circuits are quantum cellular automata which are invariant under the joint action of Lorentz and scale transformations. 
Dual unitaries are four-legged tensors which satisfy the unitarity condition across the time as well as the space direction, a property that makes the model mathematically tractable.
Using dual unitaries too, we construct tensor-network states for our 1+1 model, which are interpreted as spatial slices of curved 2+1 discrete geometries, where the metric distance is defined by the entanglement structure of the state, following Ryu-Takayanagi's prescription. The dynamics of the circuit induces a natural dynamics on these geometries, which we study for flat and anti-de Sitter spaces, and in the presence or absence of matter. We observe that the dynamics of a particle in such spaces strongly depends on the presence of other particles, suggesting gravitational back-reaction.
\end{abstract}



\section{Introduction}


In 1997 Juan Maldacena proposed a duality between (i) certain theories of quantum gravity with negative cosmological constant in $d+1$ spacetime dimensions, and (ii) conformally-symmetric quantum field theories (CFT) in $d$ spacetime dimensions \cite{Maldacena_1999}. 
This duality is also known as the AdS/CFT correspondence, because when the cosmological constant is negative, the ground state of classical gravity is anti-de Sitter space (AdS).
Important aspects of the correspondence were soon elaborated by other authors \cite{Gubser_1998, Witten_98}, and today Maldacena's paper gathers 22889 citations.

A central feature of this duality is that the spacetime curvature on the gravity side corresponds to the entanglement structure and its dynamics on the CFT side \cite{Ryu_2006, Ryu_2007, Hubeny, Lashkari_2014, Czech_2012, Wall_2014, Headrick_2014, Esp_ndola_2018, Aaronson_2022, Bao_2015}. 
In \cite{Bao_2015} this is expressed as holography being a hydrodynamic description of the boundary entanglement with entropies as its macroscopic phase space.
This opens the possibility of describing the entanglement dynamics of other many-body systems (different than CFT) in terms of geometry with an extra dimension.
One direction where this exploration has been fruitful is that of discrete holography, where one considers quantum systems with discrete degrees of freedom (e.g.~spin chains) that are dual to discrete geometries \cite{Swingle_12, Pastawski_2015, Hayden_2016, Basteiro_2022, Erdmenger_2022, Niermann_2022, Osborne_2020, Jahn_2019, Asaduzzaman_2020, Jahn_2021, Jahn_2020, Brower_2021, Jahn_2022, Asaduzzaman_2022}.
Some advantages of the discrete approach are: mathematical tractability and  rigour, more straightforward simulation on classical and quantum computers, and more accessible experiments.

In this work we present a family of quantum circuits which are invariant under the joint action of Lorentz and scale transformations. (Recall that invariance under Lorentz transformations alone is impossible on the lattice.)
Our circuits are constructed with any given two-site dual-unitary gate $\tikz[baseline]{\porta{0}{.3}}$ and its complex conjugated $\tikz[baseline]{\portac{0}{.3}}$ following the pattern
\begin{align}\label{intro T}
  T=\cdots 
\begin{tikzpicture}[baseline=13]
  \portadrr{-1}{3}\portadr{-1}{1}\cables{-1}{4.2}
  \cables{0}{0}
  \porta{0}{0} \portadrr{1}{1} \portarr{0}{2} \portadr{1}{3}
  \cables{1}{4.2}
  \cables{2}{0}
  \portarr{2}{0} \portadr{3}{1} \porta{2}{2} \portadrr{3}{3}
  \cables{3}{4.2}
  \cables{4}{0}
  \porta{4}{0} \portadrr{5}{1} \portarr{4}{2} \portadr{5}{3}
  \cables{5}{4.2}
  \cables{6}{0}
  \portarr{6}{0} \portadr{7}{1} \porta{6}{2} \portadrr{7}{3}
  \cables{7}{4.2}
  \cables{8}{0} 
  \porta{8}{0} \portadrr{9}{1} \portarr{8}{2} \portadr{9}{3}
  \cables{9}{4.2}
  \portarr{10}{0} \porta{10}{2}\cables{10}{0}
\end{tikzpicture}
  \cdots
\end{align}
so that causality is strictly respected. 
This type of dynamics is called quantum cellular automata (QCA) \cite{Arrigui, G_tschow_2010, Freedman_2022, Farrelly_2020}, and it is the discrete-spacetime version of quantum field theories.
(Recall that in ``lattice field theory" time is continuous, which implies the loss of either causality or unitarity, both respected by QCAs.)
In this work we introduce conformal QCAs in 1+1 dimensions, and leave the generalisation to higher dimensions for the future. 
To our knowledge, these are the first QCAs whose evolution operator $T$ has a form of Lorentz and scale invariance.
An essential ingredient of these QCAs are dual unitaries: four-legged tensors which satisfy the unitarity condition across the space as well as the time direction (\ref{eq:unitarity}-\ref{eq:dual unitarity}), implying the unitarity of circuit \eqref{intro T}.
Dual-unitary circuits have recently attracted a lot of attention because they provide mathematically-tractable models of quantum chaos \cite{Bertini_2019, Bertini_2021, Piroli_2020, kos_20}.

After a detailed presentation of conformal QCAs (Section~\ref{sec:model}) we explore the holographic properties of these models. 
In Section~\ref{sec:holography empty} we analyse how tensor-network states for 1+1-dimensional QCAs define 2+1 discrete geometries with metric distance defined by the entanglement structure of the state, following Ryu–Takayanagi's prescription \cite{Ryu_2006, Ryu_2007, Hubeny, Lashkari_2014}. 
General Relativity in 2+1 dimensions does not have propagating gravitational degrees of freedom, but it can have non-trivial dynamics for the manifold boundary, which is what we analyse in this work.
The fact that our tensor-network states are constructed with the same dual-unitary tensor as the evolution operator $T$ makes the dynamics of these discrete geometries very natural (see Figure~\ref{flat_g_cycle}), avoiding the need of any duality map.
We construct tensor networks states representing AdS space, AdS double-sided black hole and thermal AdS. 
The discreteness of time in QCAs implies the absence of ground and thermal states, which opens the possibility of describing other spaces, like the finite piece of flat space with boundary shown in Figure~\ref{flat_g_cycle}.

\begin{figure*}
  \centering
  \includegraphics[width=175mm]{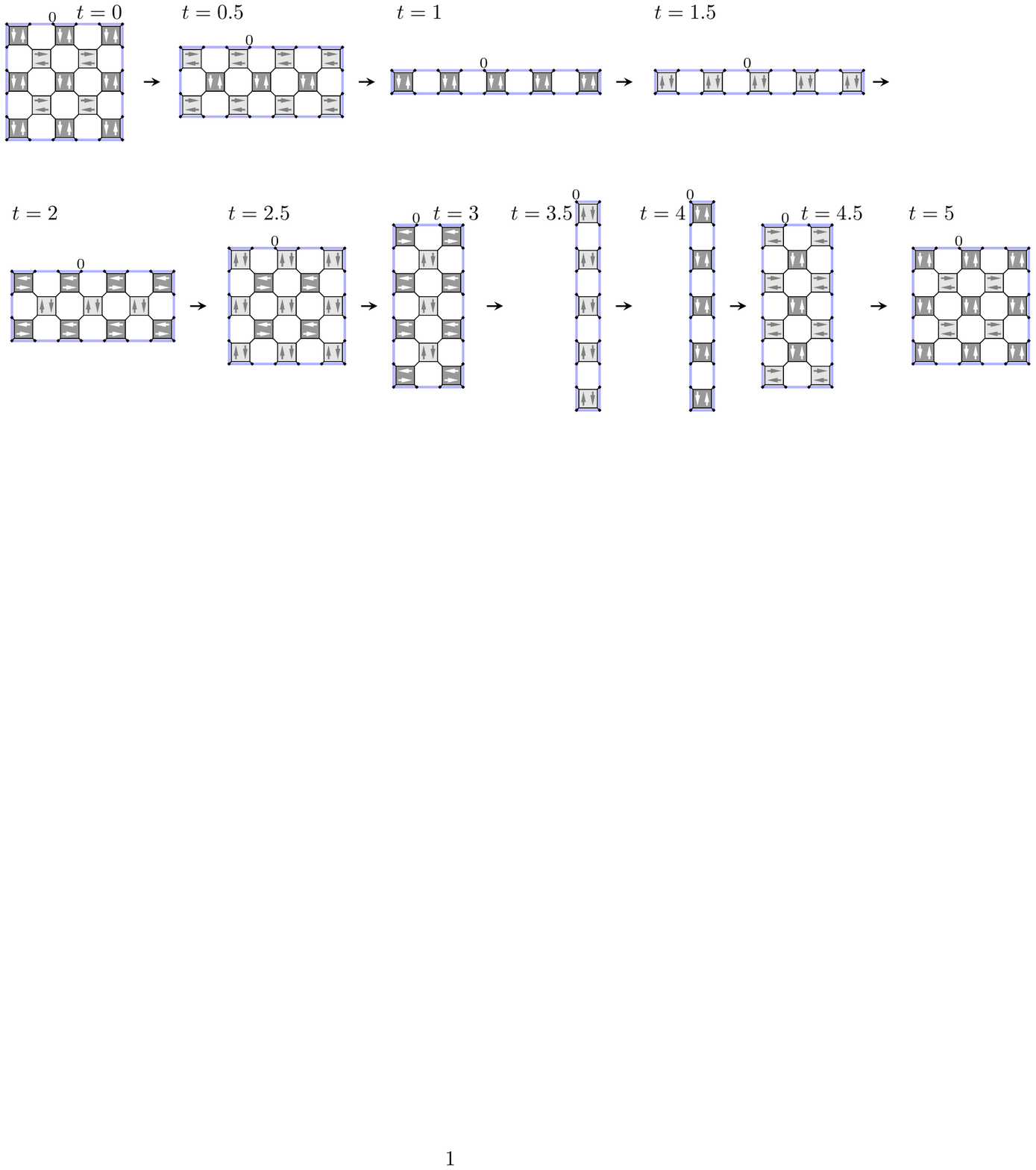}  
  \vspace{-2mm}
  \caption{\textbf{Dynamics of flat space with a boundary} of $n=20$ sites. 
  Each rectangle represents a tensor-network state of the QCA, where black dots in a blue line are the free legs that correspond to $\Cq$ systems, and site $x=0$ is marked. 
  At each time step $t$ we depict the state $T^t \! \ket{\Psi^{\rm fl}} \in \H_{20}$, and at semi-integer times $t$ we depict $T_\mathrm{even} T^{\lfloor t\rfloor}\! \ket{\Psi^{\rm fl}}$, where $\lfloor t \rfloor$ is the largest integer less than or equal to $t$. The sequence produces an orbit of period $\Delta t = 5$. If we approximate one of these rectangles by an infinite spatial strip, then this dynamics resembles that of General Relativity, where the width of the spatial strip decreases until it collapses and bounces back (see \cite{Barcelo:2000ta, Barvinsky_2006}).
  }
  \label{flat_g_cycle}
\end{figure*}

In Section~\ref{sec:matter} we analyse the dynamics of these spaces in the presence of matter, concluding that states with one or zero particles have a very different behaviour than states with two or more.
When there is only one particle, this oscillates from one point of the boundary to its antipodal counterpart, producing a closed orbit with a short period, implying that the state vector evolves within a small subspace.
On the other hand, when there are two particles or more, the dynamics explores a large space of states and does not close an orbit. In other words, the dynamics of a particle strongly depends on the presence of other particles, which can be interpreted as gravitational interaction or back-reaction.
Finally, simulations indicate that conformal QCAs do not have quantum-chaotic dynamics \cite{Bertini_2021}. That is, even if we construct a circuit \eqref{intro T} with a random dual unitary $\tikz[baseline]{\porta{0}{.3}}$ (which displays Wigner-Dyson level statistics), the particular structure of the evolution operator \eqref{intro T} produces Poissonian level statistics. This is a very anomalous phenomenon that deserves to be explored with more detail in future work. 


\section{Conformal QCAs}\label{sec:model}

In this section we introduce conformal QCAs and discuss some of their properties, like scale and Lorentz invariance.

\subsection{Quantum circuits in 1+1 dimensions}

Consider a chain of $n$ sites labelled by integers  $x\in \Zn$ modulo $n$, where $n$ is multiple of four.  In each site there is a quantum systems with Hilbert space $\Cq$, so the Hilbert space of the chain is $\mathcal H_n = (\Cq)^{\otimes n}$.
We denote by $\A$ the algebra of matrices acting on $\Cq$, and by $\A_{x,y,\ldots}$ the algebra of matrices acting on sites $x,y,\ldots \in \Zn$.
The dynamics of the system is given by the time-translation operator $T$ defined through the circuit
\begin{align}\label{def:T}
  T &= 
  \big ( \prd{x\, \mathrm{odd}} \v_{x} \big ) 
  \big (\prd{x\, \mathrm{even}} \u_{x} \big )
  \\ &=\cdots 
\begin{tikzpicture}[baseline]
  \portau{0}{0}\cables{0}{0}
  \portav{1}{1}\cables{1}{2.2}
  \portau{2}{0}\cables{2}{0}
  \portav{3}{1}\cables{3}{2.2}
  \portau{4}{0}\cables{4}{0}
  \portav{5}{1}\cables{5}{2.2}
  \portau{6}{0}\cables{6}{0}
  \portav{7}{1}\cables{7}{2.2}
  \portau{8}{0}\cables{8}{0}
  \portav{9}{1}\cables{9}{2.2}
  \draw(1.2,-1) node {$_{\ x=}$};
  \draw(2.4,-1) node {$_{-1}$};
  \draw(3.6,-1) node {$_0$};
  \draw(4.6,-1) node {$_1$};
  \draw(5.6,-1) node {$_2$};
  \draw(1.2,2) node {$_{\ x=}$};
  \draw(2.4,2) node {$_{-1}$};
  \draw(3.6,2) node {$_0$};
  \draw(4.6,2) node {$_1$};
  \draw(5.6,2) node {$_2$};
\end{tikzpicture}
  \cdots\ \ 
\begin{tikzpicture}[baseline]
  \draw[thick,->,>=stealth] (0,-1) -- (0,2);
  \draw (.4,.5) node {$t$};
\end{tikzpicture}
\end{align}
where the two-site operators $\u_{x}, \v_{x} \in \A_{x,x+1}$ are unitary.
The action of $T$ on a local operator at the origin $\a \in \A_0$ is 
\begin{align}
  T\a T^\dagger 
  =
  \begin{tikzpicture}[baseline=-2.5]
    \portaud{0}{-1}\portau{0}{1}
    \draw (-.5,-.5)--(-.5,-.3);
    \draw (-.5,.5)--(-.5,.3);
    \draw (-.5,0) node {$\a$};
    \draw (.5,-.5)--(.5,.5);
    \portav{-1}{2}\portav{1}{2}
    \portavd{-1}{-2}\portavd{1}{-2}
    \draw (-1.5,1.5)--(-1.5,-1.5);
    \draw (1.5,1.5)--(1.5,-1.5);
  \end{tikzpicture}
  \in \A_{-1,0,1,2}\ ,
\end{align}
reflecting the causality of $T$.
The (two-site) space-translation operator
\begin{align}\label{def:S}
  S=\ \cdots
  \begin{tikzpicture}[baseline]
    \draw[thin,color=black!10] (-.5,-.7) -- (-.5,1.7);
    \draw[thin,color=black!10] (.5,-.7) -- (.5,1.7);
    \draw[thin,color=black!10] (1.5,-.7) -- (1.5,1.7);
    \draw[thin,color=black!10] (2.5,-.7) -- (2.5,1.7);
    \draw[thin,color=black!10] (3.5,-.7) -- (3.5,1.7);
    \draw[thin,color=black!10] (4.5,-.7) -- (4.5,1.7);
    \draw[thin,color=black!10] (5.5,-.7) -- (5.5,1.7);
    \draw[thin,color=black!10] (6.5,-.7) -- (6.5,1.7);
    \draw[thin,color=black!10] (7.5,-.7) -- (7.5,1.7);
    \draw[thin,color=black!10] (8.5,-.7) -- (8.5,1.7);
    \cables{0}{0} \barra{0}{1} \cables{0}{2.2}
    \barra{0}{0} \barra{1}{1}
    \barra{1}{0} \barra{2}{1}
    \cables{2}{0} \cables{2}{2.2}
    \barra{2}{0} \barra{3}{1}
    \barra{3}{0} \barra{4}{1}
    \barra{4}{0} \barra{5}{1}
    \cables{4}{0} \cables{4}{2.2}
    \barra{5}{0} \barra{6}{1}
    \barra{6}{0} \barra{7}{1}
    \cables{6}{0} \cables{6}{2.2}
    \barra{7}{0} \barra{8}{1}
    \barra{8}{0}
    \cables{8}{0} \cables{8}{2.2}
  \end{tikzpicture}
  \cdots
\end{align}
allows for writing the translation-invariance of the dynamics as
\begin{align}
  S T = T S\ .
\end{align}

\subsection{Dual unitaries}\label{sec:dual unitaries}

If we represent the two-site unitary $\u\in \A_{0,1}$ as a four-legged tensor $\u= \tikz[baseline]{\porta{0}{.3}}$ then we can write its transpose as $\u^T=\tikz[baseline]{\portat{0}{.3}}\,$ and its conjugation by the swap operator $\s= \tikz[baseline]{\draw(-.4,-.1)--(.4,.7); \draw(-.4,.7)--(.4,-.1);}\,$ as $\s\u \s^\dagger = \tikz[baseline] {\portas{0}{.3}}\,$.
Also, we denote complex conjugation with a darker shade $\u^*=\tikz[baseline]{\portac{0}{.3}}\,$, so that the Hermitian conjugate is $\u^\dagger =\tikz[baseline]{\portad{0}{.3}}\,$.
With this notation we can say that $\tikz[baseline]{\porta{0}{.3}}$ is unitary if it satisfies the two equivalent conditions
\begin{align}\label{eq:unitarity}
  \begin{tikzpicture}[baseline]
    \cables{0}{-.5} \portad{0}{-.5}
    \cables{0}{.7} \porta{0}{.7} \cables{0}{1.9}
    \draw (1.2,.1) node {$=$};
    \draw[thin] (2,-1.2) -- (2,1.4);
    \draw[thin] (3,-1.2) -- (3,1.4);
  \end{tikzpicture}
  \quad \mbox{and} \quad
  \begin{tikzpicture}[baseline]
    \cables{0}{-.5} \porta{0}{-.5}
    \cables{0}{.7} \portad{0}{.7} \cables{0}{1.9}
    \draw (1.2,.1) node {$=$};
    \draw[thin] (2,-1.2) -- (2,1.4);
    \draw[thin] (3,-1.2) -- (3,1.4);
  \end{tikzpicture}\ .
\end{align}
Also, we say that $\tikz[baseline]{\porta{0}{.3}}$ is a dual unitary if, in addition to unitarity, it satisfies the two equivalent conditions
\begin{align}\label{eq:dual unitarity}
  \begin{tikzpicture}[baseline]
    \porta{0}{.3} \portacs{1.2}{.3}
    \cablesh{-1.2}{.3}\cablesh{0}{.3}
    \cablesh{1.2}{.3}
    \draw (2.6,.2) node {$=$};
    \draw[thin] (3.4,.8) -- (4.6,.8);
    \draw[thin] (3.4,-.2) -- (4.6,-.2);
  \end{tikzpicture}
  \quad \mbox{and} \quad
  \begin{tikzpicture}[baseline]
    \portacs{0}{.3} \porta{1.2}{.3}
    \cablesh{-1.2}{.3}\cablesh{0}{.3}
    \cablesh{1.2}{.3}
    \draw (2.6,.2) node {$=$};
    \draw[thin] (3.4,.8) -- (4.6,.8);
    \draw[thin] (3.4,-.2) -- (4.6,-.2);
  \end{tikzpicture}
  \ ,
\end{align}
which can be phrased as ``unitarity in the space direction".

Dual unitarity implies that, for any traceless local operator $\a\in \A_0$, the partial trace of $\u\a\u^\dag \in \A_{0,1}$ on site $1$ vanishes,
\begin{align}
  \mathrm{tr}_1\! (\u_0\a_0\u_0^\dag)
  =
  \begin{tikzpicture}[baseline=-2]
    \portad{0}{-1}\porta{0}{1}
    \draw (-.5,-.5)--(-.5,-.3);
    \draw (-.5,.5)--(-.5,.3);
    \draw (-.5,0) node {$\a$};
    \draw (.5,-.5)--(.5,.5);
    \draw (.5,1.5)--(.7,1.5)--(.9,1.3)--(.9,-1.3)--(.7,-1.5)--(.5,-1.5);
  \end{tikzpicture}
  =\,
  \begin{tikzpicture}[baseline=-2]
    \draw (-.5,.3)--(-.5,.5)--(-.4,.6)--(-.1,.6)--(0,.5)--(0,-.5)--(-.1,-.6)--(-.4,-.6)--(-.5,-.5)--(-.5,-.3);
    \draw (-.5,0) node {$\a$};
    \draw (-1,-1.4)--(-1,1.4);
  \end{tikzpicture}
  = \unity_{\!0}\, \mathrm{tr} (\a) =0\ .
\end{align}
In other words, the operator $\u\a\u^\dagger \in \A_{0,1}$ is a linear combination of terms of the form $\unity_0 \otimes \c_1$ and $\b_0 \otimes \c_1$ with traceless $\b, \c$, but not of the form $\b_0 \otimes \unity_1$.
That is, all terms in $\u\a\u^\dagger$ act non-trivially on site $1$.

Returning to our model, if $\u,\v$ are dual unitaries then a local operator $\a \in \A_x$ on an even site $x$ evolving as $T^t \a T^{-t}$ grows towards the right at maximal speed, which in lattice units is $c=2$.
The operator $T^t \a T^{-t}$ may also grow towards the left and develop a highly non-local and complex structure, but that is not necessary.
Alternatively, if the initial operator $\a \in \A_x$ is located on an odd site $x$ then $T^t \a T^{-t}$ grows towards the left at maximal speed $-2$.
Hence, we see that the even/odd location $x\in \Zn$ plays the role of a momentum $\pm 2$ quantum number.
In summary, every perturbation in a dual-unitary circuit grows at maximal speed towards the right, the left, or both, as in CFT.

\subsection{Free particles and quantum chaos}

In order to simplify the discussion of this subsection (only) we restrict ourselves to circuits \eqref{def:T} with $\v=\u$.
The first thing to do when we are given a dual unitary $\u=\tikz[baseline]{\porta{0}{.3}}\,$ is to obtain the spectral decomposition of the maps $\Omega_+: \A_0 \to \A_1$ and $\Omega_-: \A_1 \to \A_0$, defined as
\begin{align}
  \Omega_+(\a_0) 
  &= \frac 1 q \, \mathrm{tr}_0 (\u_0\a_0\u_0^\dag)
  = \frac 1 q \,
  \begin{tikzpicture}[baseline=-2]
    \portad{0}{-1}\porta{0}{1}
    \draw (-.5,-.5)--(-.5,-.3);
    \draw (-.5,.5)--(-.5,.3);
    \draw (-.5,0) node {$\a$};
    \draw (.5,-.5)--(.5,.5);
    \draw (-.5,1.5)--(-.7,1.5)--(-.9,1.3)--(-.9,-1.3)--(-.7,-1.5)--(-.5,-1.5);
  \end{tikzpicture}\ ,
  \\
  \Omega_-(\a_1) 
  &= \frac 1 q \, \mathrm{tr}_1 (\u_0\a_1\u_0^\dag)
  = \frac 1 q \ \, 
  \begin{tikzpicture}[baseline=-2]
    \portad{0}{-1}\porta{0}{1}
    \draw (.5,-.5)--(.5,-.3);
    \draw (.5,.5)--(.5,.3);
    \draw (.5,0) node {$\a$};
    \draw (-.5,-.5)--(-.5,.5);
    \draw (.5,1.5)--(.7,1.5)--(.9,1.3)--(.9,-1.3)--(.7,-1.5)--(.5,-1.5);
  \end{tikzpicture}\ .
\end{align}
The eigenvectors of $\Omega_+$ with unimodular eigenvalue $\Omega_+(\e) = e^{im} \e$ satisfy 
\begin{align}
  T\e_x T^\dagger = e^{im} \e _{x+2}\ ,
\end{align}
for all even $x$.
Analogously, the eigenvectors of $\Omega_-$ with unimodular eigenvalue $\Omega_-(\e) = e^{im} \e$ satisfy 
\begin{align}
  T\e_x T^\dagger = e^{im} \e _{x-2}\ ,
\end{align}
for all odd $x$.
In the dual-unitary and QCA literature, these operators are respectively called right/left-moving solitons \cite{kos_20} and gliders \cite{G_tschow_2010}.
When acting on a state, these operators can create a free particle with velocity $\pm 2$ and quasi-mass $m$.

The eigenvectors $\Omega_\pm (\e) = \lambda \e$ with eigenvalue modulus less than one $|\lambda| <1$ grow under $T$ in a scrambled fashion which fills up all the lightcone. This dynamics displays many signatures of quantum chaos, including the profile of the spectral form factor \cite{Bertini_2019, Bertini_2021, Piroli_2020}.


\subsection{Definition of conformal QCA} 


Let us define the family of circuits introduced and analysed in this work. For any given dual unitary $\u = \tikz[baseline]{\porta{0}{.3}}$ we define the following time-translation operator
\begin{align}\label{eq:T gauge}
  T=\cdots 
\begin{tikzpicture}[baseline=13]
  \portadrr{-1}{3}\portadr{-1}{1}\cables{-1}{4.2}
  \cables{0}{0}
  \porta{0}{0} \portadrr{1}{1} \portarr{0}{2} \portadr{1}{3}
  \cables{1}{4.2}
  \cables{2}{0}
  \portarr{2}{0} \portadr{3}{1} \porta{2}{2} \portadrr{3}{3}
  \cables{3}{4.2}
  \cables{4}{0}
  \porta{4}{0} \portadrr{5}{1} \portarr{4}{2} \portadr{5}{3}
  \cables{5}{4.2}
  \cables{6}{0}
  \portarr{6}{0} \portadr{7}{1} \porta{6}{2} \portadrr{7}{3}
  \cables{7}{4.2}
  \cables{8}{0} 
  \porta{8}{0} \portadrr{9}{1} \portarr{8}{2} \portadr{9}{3}
  \cables{9}{4.2}
  \draw(3.6,-1) node {$_0$};
  \draw(3.6,4) node {$_0$};
\end{tikzpicture}
  \cdots
\end{align}
where site $x=0$ is marked.
Note that, for any four local unitaries $\a, \b, \c, \d$, the new dual unitary 
\begin{align}
  \u' = \a_0 \b_1 \u\, \c_0 \d_1 = 
  \begin{tikzpicture}[baseline=-1.6]
    \porta{0}{0} \cables{0}{0} \cables{0}{1.2}
    \draw (-.5,-1.05) node {$\a$};
    \draw (.5,-1) node {$\b$};
    \draw (-.5,1) node {$\c$};
    \draw (.5,1.05) node {$\d$};
  \end{tikzpicture}
\end{align}
defines a new circuit $T'$ via \eqref{eq:T gauge}  which is equal to $T$ up to a local change of basis, 
\begin{align}
  \nonumber
  T'= (\cdots \a_0 \b_1 \d_2 \c_3 \a_4 \cdots) T 
  (\cdots \a_0 \b_1 \d_2 \c_3 \a_4 \cdots)^\dagger\ .
\end{align}
This reminds the structure of a gauge theory, but it is not the same.

Another property of the structure of $T$ is that it produces a cancellation of the phases of travelling solitons, forcing them all to be massless. 
That is, if $\e_0$ satisfies $\u_0 \e_0 \u_0 = e^{im} \e_1$ then $T\e_x T^\dagger = \e_{x+4}$ for all even $x$ (and analogously for odd $x$).
Recall that CFTs can only have massless particles - however this is not sufficient to be conformal. In the next section we show that $T$ is invariant under scale transformations.

\subsection{Scale invariance}

The discussion in this section requires the size of the chain $n$ to be a multiple of $8$.
We start by defining the contraction isometry $C: \H_n \to \H_{\frac n 2}$ as
\begin{align}\label{def:C}
  C = 
  q^{-\frac n 8} \Big(\cdots
  \begin{tikzpicture}[baseline=-4]
  \draw (-.5,-1) node {\small $_0$};
  \barra{-1}{0}\porta{0}{0}\antibarra{1}{0}
  \cables{-1}{0}\cables{1}{0}
  \draw (2.5,-.7)--(2.5,.5);
  \draw (3.5,-.7)--(3.5,.5);
  \draw (4.5,-.7)--(4.5,.5);
  \draw (5.5,-.7)--(5.5,.5);
  \barra{7}{0}\porta{8}{0}\antibarra{9}{0}
  \cables{7}{0}\cables{9}{0}
  \draw (10.5,-.7)--(10.5,.5);
  \draw (11.5,-.7)--(11.5,.5);
  \draw (12.5,-.7)--(12.5,.5);
  \draw (13.5,-.7)--(13.5,.5);
  \end{tikzpicture}
  \cdots\Big)
\end{align}
which maps an $n$-site chain to an $\frac n 2$-site chain. 
The particular structure of operators $C$ and $T$ together with the dual unitarity of their building block $\tikz[baseline]{\porta{2}{.3}}$ allows to easily calculate the product $CT$ in a manner that is independent of the choice of $\tikz[baseline]{\porta{2}{.3}}\,$: 
\begin{align}
  \nonumber
  CT=&\cdots 
\begin{tikzpicture}[baseline=13]
  \draw (2.5,3.5)--(2.5,4.5);
  \draw (3.5,3.5)--(3.5,4.5);
  \draw (4.5,3.5)--(4.5,4.5);
  \draw (5.5,3.5)--(5.5,4.5);
  \portadrr{-1}{3}\portadr{-1}{1}
  \cables{0}{0}
  \porta{0}{4}\barra{-1}{4}\antibarra{1}{4}
  \porta{0}{0} \portadrr{1}{1} \portarr{0}{2} \portadr{1}{3}
  \cables{2}{0}
  \portarr{2}{0} \portadr{3}{1} \porta{2}{2} \portadrr{3}{3}
  \cables{4}{0}
  \porta{4}{0} \portadrr{5}{1} \portarr{4}{2} \portadr{5}{3}
  \cables{6}{0}
  \portarr{6}{0} \portadr{7}{1} \porta{6}{2} 
  \portadrr{7}{3}
  \cables{8}{0} 
  \porta{8}{0} \portadrr{9}{1} \portarr{8}{2} 
  \portadr{9}{3}
  \porta{8}{4}\barra{7}{4}\antibarra{9}{4}
\end{tikzpicture}
  \cdots
  \\ \nonumber
  =&\cdots 
\begin{tikzpicture}[baseline=13]
  \draw (2.5,3.5)--(2.5,4.5);
  \draw (3.5,3.5)--(3.5,4.5);
  \draw (4.5,3.5)--(4.5,4.5);
  \draw (5.5,3.5)--(5.5,4.5);
  \barra{-1}{3}
  \portadr{-1}{1}
  \cables{0}{0}
  \barra{0}{4}\barra{0}{3}
  \antibarra{1}{4}
  \porta{0}{0} \portadrr{1}{1} \portarr{0}{2} \portadr{1}{3}
  \cables{2}{0}
  \portarr{2}{0} \portadr{3}{1} \porta{2}{2} \portadrr{3}{3}
  \cables{4}{0}
  \porta{4}{0} \portadrr{5}{1} \portarr{4}{2} \portadr{5}{3}
  \cables{6}{0}
  \portarr{6}{0} \portadr{7}{1} \porta{6}{2} 
  \cables{8}{0} 
  \porta{8}{0} \portadrr{9}{1} \portarr{8}{2} 
  \portadr{9}{3}
  \barra{8}{4}\barra{7}{3}\antibarra{9}{4}
  \barra{8}{3}
\end{tikzpicture}
  \cdots
  \\ \nonumber
  =&\cdots 
\begin{tikzpicture}[baseline=13]
  \draw (2.5,3.5)--(2.5,4.5);
  \draw (3.5,3.5)--(3.5,4.5);
  \draw (4.5,3.5)--(4.5,4.5);
  \draw (5.5,3.5)--(5.5,4.5);
  \barra{-1}{3}
  \portadr{-1}{1}
  \cables{0}{0}
  \barra{0}{4}\barra{1}{3}
  \barra{0}{2}\barra{1}{2}
  \antibarra{1}{4}
  \porta{0}{0} \portadrr{1}{1} 
  \cables{2}{0}
  \portarr{2}{0} \portadr{3}{1} \porta{2}{2} \portadrr{3}{3}
  \cables{3}{4.2}
  \cables{4}{0}
  \porta{4}{0} \portadrr{5}{1} \portarr{4}{2} \portadr{5}{3}
  \cables{5}{4.2}
  \cables{6}{0}
  \portarr{6}{0} \portadr{7}{1} \porta{6}{2} 
  \cables{8}{0} 
  \porta{8}{0} \portadrr{9}{1} 
  \barra{8}{4}\barra{7}{3}\antibarra{9}{4}
  \barra{8}{2}\barra{9}{3}\barra{9}{2}
\end{tikzpicture}
  \cdots
  \\ 
  =&\cdots 
\begin{tikzpicture}[baseline=13]
  \portadr{-1}{1}
  \cables{0}{0}
  \barra{2}{1}
  \antibarra{-1}{2}\porta{0}{0} 
  \barra{1}{1} 
  \cables{2}{0}
  \portarr{2}{0} \portadr{3}{1} \barra{2}{2} \portadrr{3}{3}
  \cables{3}{4.2}
  \cables{4}{0}
  \porta{4}{0} \portadrr{5}{1} \portarr{4}{2} \portadr{5}{3}
  \cables{5}{4.2}
  \cables{6}{0}
  \portarr{6}{0} \portadr{7}{1} \porta{6}{2} 
  \cables{8}{0} 
  \porta{8}{0} \portadrr{9}{1} 
  \antibarra{7}{2}
  \barra{9}{2}
\end{tikzpicture}
  \cdots
\end{align}
Continuing in a similar fashion we obtain
\begin{align}  \nonumber
  CT=&\cdots 
\begin{tikzpicture}[baseline=13]
  \antibarra{-1}{1}
  \cables{0}{0}
  \barra{3}{0}
  \porta{0}{0} 
  \barra{1}{1} 
  \cables{2}{0}
  \barra{2}{0} 
  \barra{3}{1} \barra{2}{2} \portadrr{3}{3}
  \cables{3}{4.2}
  \cables{4}{0}
  \porta{4}{0}\antibarra{5}{0} 
  \antibarra{5}{1} \portarr{4}{2} \portadr{5}{3}
  \cables{5}{4.2}
  \cables{6}{0}
  \antibarra{6}{0}\antibarra{7}{1}\antibarra{6}{2} 
  \cables{8}{0} 
  \porta{8}{0} \barra{9}{1} 
\end{tikzpicture}
  \cdots
  \\ \label{CT=TC}
  =&\cdots 
\begin{tikzpicture}[baseline=13]
  \portadrr{-1}{3}\cables{-1}{4.2}
  \portadr{1}{3}\cables{1}{4.2}
  \portarr{0}{2}\porta{2}{2}
  \barra{-1}{0}\barra{-1}{1}\barra{0}{1}
  \cables{-1}{0}
  \barra{3}{0}\barra{0}{0}\barra{1}{1} 
  \cables{1}{0}
  \barra{2}{0}\barra{3}{1}\barra{2}{1}\barra{1}{0} 
  \portadrr{3}{3}
  \cables{3}{4.2}
  \cables{3}{0}
  \porta{4}{0}\antibarra{5}{0} 
  \antibarra{5}{1}\portarr{4}{2}
  \portadr{5}{3}\cables{5}{4.2}
  \portadr{9}{3} 
  \cables{7}{4.2}\cables{9}{4.2}
  \cables{5}{0}
  \antibarra{6}{0}\antibarra{7}{1}
  \porta{6}{2}\portadrr{7}{3}\portarr{8}{2} 
  \cables{7}{0}\cables{9}{0} 
  \antibarra{8}{0}\antibarra{7}{0}\antibarra{6}{1} \antibarra{9}{1}\antibarra{8}{1}\antibarra{9}{0}
\end{tikzpicture}
  \cdots
\end{align}
The above can be synthesised as
\begin{align}
  C_{2n} T_{2n}^2 = T_{n} C_{2n} \ ,  
\end{align}
where we have added a subindex to the operators to indicate the size of the chain where they act on.
The above equation tells us that, the action of $C_n$ produces a rescaling of space and time by a factor $\mitg$.

By using the dual unitary constraints \eqref{eq:unitarity} and \eqref{eq:dual unitarity} we can calculate the action of $C$ on a local operator $\a \in \A_x$, 
\begin{align}\label{eq:C on a}
  C \a_x C^\dagger  = \left\{
  \begin{array}{ll}
    0 & \mbox{if $x=0,1,2,3$ mod 8}
    \\
    \a_{\frac x 2} & \mbox{if $x=4,5,6,7$ mod 8}
  \end{array}
  \right.\ .
\end{align}
This action depends on the position $x$ in a very non-smooth way.
However, the action of $C$ is smooth on operators having a good field-theory limit
\begin{align}\label{def: Phi}
  \Phi_n(x) = \sum_{y\in \Zn} \varphi(y-x) \a_y\ ,
\end{align}
where $\varphi(y)$ is a smearing function (e.g.~a gaussian) centred around the origin and spreaded over a large number of lattice units. In this case we have
\begin{align}\label{action of C}
  C \Phi_n(x) C^\dagger 
  \approx \frac 1 2 \Phi_{\frac n 2} (\mbox{$\frac x 2$})\ ,
\end{align}
where the subindex of $\Phi_n$ stresses that the field operators on the left- and right-hand sides act on chains of different sizes.

It is possible to construct contraction isometries with scale factor different than $\mitg$. This can be achieved by separating the vectors $\tikz[baseline]{\barra{1}{.3}\porta{2}{.3}\antibarra{3}{.3}}$ in \eqref{def:C} by a length different than 4.
  %

\subsection{Lorentz transformations}\label{sub LT}

In this subsection we give a brief summary of Lorentz transformations on conformal QCAs, and refer to Section~\ref{sec:lorentz transf} for the complete presentation.
Lorentz transformations are more clearly discussed in an infinite chain, so here and in Section~\ref{sec:lorentz transf} we assume $n=\infty$.

In Section~\ref{sec:lorentz transf} we define the isometry $R_l :\H_\infty \to \H_\infty$ which jointly implements a contraction and a spacetime transformation which resembles a Lorentz boost towards the right, with velocity $v$ parametrised by the positive integer $l$ as 
\begin{align}\label{eq:v of l}
  v = \frac 2 {\sqrt{4-2 l +l^2}}\ .
\end{align}
These transformations commute with the translations in the diagonal direction $(x,t)=(1,1)$, 
\begin{align}
  R_l (ST) = (ST) R_l\ .
\end{align}

The action of $R_l$ on a local operator $\a(x,t) := T^t \a_x T^{-t}$ can be informally described as 
\begin{align}\label{eq:21}
  R_l \a(x,t) R_l 
  = \left\{
  \begin{array}{ll}
    \a(x',t') & \mbox{ for most $(x,t)$}
    \\
    \mbox{complicated} & \mbox{ for a few $(x,t)$}
  \end{array}\right.
\end{align}
where, in the regime $|x|\gg l$, the transformed coordinates $(x',t')$ can be written as
\begin{align}\label{B-transform}
  \left.\begin{array}{ll}
    x' &= 
    \left(1-\frac 1 l \right) x +\frac 2 l t 
    \\
    t' &= 
    \left(1-\frac 1 l \right) t +\frac 1 {2l} x
  \end{array}\right\} \ .   
\end{align}
The label ``complicated" in \eqref{eq:21} stands for a transformation that is not purely spacetime, as in the first case.
Next, note that transformation \eqref{B-transform} preserves Minkowski's metric up to a scale factor 
\begin{align}
  (ct')^2-x'^2 = \left(1-\frac 2 l \right) \left[ (ct)^2-x^2 \right]\ .
\end{align}
(Recall that the speed of light is $c=2$.) 
Now, we can remove the scale transformation from \eqref{B-transform} by dividing the new coordinates $(x',t')$ by the scale factor $\sqrt{1-2/l}$, so obtaining the pure Lorentz transformation. 
Once this is done, we simplify the first equation by imposing $x=0$, obtaining
\begin{align}
  \frac {x'} {\sqrt{1-2/l}} = \frac {2/l} {\sqrt{1-2/l}} t = \frac {v} {\sqrt{1-v^2}} t\ ,
\end{align}
where the second equality follows from the standard form of a Lorentz transformation with velocity $v$ and $x=0$. 
This second equality can be use to isolate $v$ as a function of $l$ and confirm the relation \eqref{eq:v of l}.

In Section~\ref{sec:lorentz transf} we also define the isometry $L_l :\H_\infty \to \H_\infty$, which jointly implements a contraction and a spacetime transformation which resembles a Lorentz boost towards the left, with velocity $v$ parametrised by the positive integer $l$ as 
\begin{align}\label{eq:-v of l}
  v = \frac {-2} {\sqrt{4-2 l +l^2}}\ .
\end{align}
The action of $L_l$ on a local operator $\a(x,t) := T^t \a_x T^{-t}$ is 
\begin{align}\label{eq:27}
  L_l \a(x,t) L_l 
  = \left\{
  \begin{array}{ll}
    \a(x',t') & \mbox{ for most $(x,t)$}
    \\
    \mbox{complicated} & \mbox{ for a few $(x,t)$}
  \end{array}\right.
\end{align}
where, in the regime $|x|\gg l$, the transformed coordinates $(x',t')$ can be written as
\begin{align}\label{L-transform}
  \left.\begin{array}{ll}
    x' &= 
    \left(1-\frac 1 l \right) x -\frac 2 l t 
    \\
    t' &= 
    \left(1-\frac 1 l \right) t -\frac 1 {2l} x
  \end{array}\right\} \ .   
\end{align}
Note that this transformation also preserves Minkowski's metric up to the same scale factor $\sqrt{1-2/l}$. 

Like with the scale transformations described in the previous subsection, the action of $B_l$ and $L_l$ becomes smooth on operators of the form $\Phi(x,t) = T^t \Phi(x) T^{-t}$, where the smeared operator $\Phi(x)$ is defined in \eqref{def: Phi}. In particular we have
\begin{align}\label{eq:smooth R}
  R_l \Phi(x,t) R_l ^\dagger
  \approx 
  \left(1-\frac 1 l\right) \Phi(x',t') 
\end{align}
for all $(x,t)$, not just ``most", avoiding the ``complicated" cases of \eqref{eq:21} and \eqref{eq:27}. Naturally, the coordinates $(x',t')$ in \eqref{eq:smooth R} satisfy \eqref{B-transform}.

Interestingly, the conjugated operators $R_l^\dagger$ and $L_l^\dagger$ implement a Lorentz boost together with a dilation (instead of a contraction). And in this case, the boost directions are reversed: $R_l^\dagger$ is a left boost and $L_l^\dagger$ a right boost.
Therefore, the composition $R_l^\dagger L_{l}$ produces a Lorentz transformation without a scale transformation, but it has a non-trivial kernel, and hence, it is not an isometry.
 
\subsection{Two-layer conformal circuits}

To simplify notation we redefine $\u$ as the coarse-grained (dual) unitary
\begin{align}
  \begin{tikzpicture}[baseline]
    \draw (-10,1) node {$\u\ =$};
    \portabar{-7}{1}
    \barra{-6}{2} \barra{-8}{0}
    \antibarra{-8}{2} \antibarra{-6}{0}
    \draw (-4,1) node {$=$};
    \portadr{-1}{1} \portadrr{1}{1}
    \porta{0}{0} \portarr{0}{2} 
    \barra{2}{2} \barra{1}{3} 
    \barra{-1}{-1} \barra{-2}{0}
    \antibarra{-2}{2} \antibarra{-1}{3}
    \antibarra{1}{-1} \antibarra{2}{0}
\end{tikzpicture}
\end{align}
where the double-arrow notation encapsulates the new symmetry $\u^T = \s \u \s^\dagger$.
We also redefine $q$ so that the Hilbert space of a coarse-grained site has dimension $q$.
Now we can write the evolution operator \eqref{eq:T gauge} as a two-layer circuit 
\begin{align}\label{eq:2 layer}
  T = \cdots
\begin{tikzpicture}[baseline=2.5]
  \portabar{0}{0}\cables{0}{0}\cables{1}{2.2}
  \portabar{2}{0}\cables{2}{0}\cables{3}{2.2}
  \portabar{4}{0}\cables{4}{0}\cables{5}{2.2}
  \portabar{6}{0}\cables{6}{0}\cables{7}{2.2}
  \portabar{8}{0}\cables{8}{0}\cables{9}{2.2}
  \portacbar{1}{1}
  \portacbar{3}{1}
  \portacbar{5}{1}
  \portacbar{7}{1}
  \portacbar{9}{1}
\end{tikzpicture}  
  \cdots
\end{align}
In the rest of this work, our starting point is a dual unitary $\u = \tikz[baseline]{\portabar{2}{.3}}$ with the symmetry $\u^T = \s\u\s^\dagger$, and the time-translation operator \eqref{eq:2 layer}. Note that this dynamics is more general than the coarse-grained four-layer circuit.

\section{Discrete holography}\label{sec:holography empty}

In this section we construct tensor-network states for 1+1 QCAs, and show that they can be interpreted as spacial slices of 2+1 discrete geometries with metric distance defined by their entanglement structure.

\subsection{Tensor-network states and dynamics}\label{sec:flat}

By using the four-legged tensor \tikz[baseline]{\portabar{0}{.3}}\,, its complex conjugated and their rotated versions, we can construct tensor-network states for the chain $\H_n$. One example of such states for a chain of $n=20$ sites is 
\begin{align}
  \ket{\Psi^{\rm fl}} = \frac 1 {q^5}\, 
\begin{tikzpicture}[baseline=-2]
  \draw[blue!30, very thick] (-2.5,2.5)--(2.5,2.5)--(2.5,-2.5)--(-2.5,-2.5)--(-2.5,2.5);
  \portaabar{-2}{-2} \portaabar{0}{-2} \portaabar{2}{-2}
  \portacabar{-1}{-1} \portacabar{1}{-1}
  \portaabar{-2}{0} \portaabar{0}{0} \portaabar{2}{0}
  \portacabar{-1}{1} \portacabar{1}{1}
  \portaabar{-2}{2} \portaabar{0}{2} \portaabar{2}{2}
  \filldraw (-2.5,-2.5) circle (.05);
  \filldraw (-1.5,-2.5) circle (.05);
  \filldraw (-.5,-2.5) circle (.05);
  \filldraw (.5,-2.5) circle (.05);
  \filldraw (1.5,-2.5) circle (.05);
  \filldraw (2.5,-2.5) circle (.05);
  \filldraw (-2.5,2.5) circle (.05);
  \filldraw (-1.5,2.5) circle (.05);
  \filldraw (-.5,2.5) circle (.05);
  \filldraw (.5,2.5) circle (.05);
  \filldraw (1.5,2.5) circle (.05);
  \filldraw (2.5,2.5) circle (.05);
  \filldraw (-2.5,-1.5) circle (.05);
  \filldraw (-2.5,-.5) circle (.05);
  \filldraw (-2.5,.5) circle (.05);
  \filldraw (-2.5,1.5) circle (.05);
  \filldraw (2.5,-1.5) circle (.05);
  \filldraw (2.5,-.5) circle (.05);
  \filldraw (2.5,.5) circle (.05);
  \filldraw (2.5,1.5) circle (.05);
  \draw (-1.8,2.9) node {$_{-1}$};
  \draw (-.5,2.9) node {$_0$};
  \draw (.5,2.9) node {$_1$};
  \draw (1.5,2.9) node {$_2$};
\end{tikzpicture}\  ,
\end{align}
where each black dot in the blue line represents a free leg of the tensor network, and hence, a $\Cq$ system of the chain. 
The label ``fl" stands for ``flat".
Naturally, the black dots represent the sites of the chain $\mathbb Z_{20}$ in the same order, like the marked sites $x=-1,0,1,2$. 

When the state evolves via the quantum circuit $\ket{\Psi^{\rm fl}} \to T \ket{\Psi^{\rm fl}}$, the corresponding tensor network also evolves. 
To simplify the calculation of this evolution, it is convenient to separate each of the two layers of the time-translation operator as $T= T_\mathrm{odd} T_\mathrm{even}$. With this notation we can write the evolution of $\ket{\Psi^{\rm fl}}$ as
\begin{align}
  \nonumber
  T_\mathrm{even} \ket{\Psi^{\rm fl}} &=
\begin{tikzpicture}[baseline=-2]
  \draw[blue!30, very thick] (-2.5,2.5)--(2.5,2.5)--(2.5,-2.5)--(-2.5,-2.5)--(-2.5,2.5);
  \portaabar{-2}{-2} \portaabar{0}{-2} \portaabar{2}{-2}
  \portacabar{-1}{-1} \portacabar{1}{-1}
  \portaabar{-2}{0} \portaabar{0}{0} \portaabar{2}{0}
  \portacabar{-1}{1} \portacabar{1}{1}
  \portaabar{-2}{2} \portaabar{0}{2} \portaabar{2}{2}
  \portabar{-2}{3} \portabar{0}{3} \portabar{2}{3}
  \portacabar{-3}{1} \portacabar{-3}{-1}
  \portabar{-2}{-3} \portabar{0}{-3} \portabar{2}{-3}
  \portacabar{3}{1} \portacabar{3}{-1}
  \filldraw (-2.5,-2.5) circle (.05);
  \filldraw (-1.5,-2.5) circle (.05);
  \filldraw (-.5,-2.5) circle (.05);
  \filldraw (.5,-2.5) circle (.05);
  \filldraw (1.5,-2.5) circle (.05);
  \filldraw (2.5,-2.5) circle (.05);
  \filldraw (-2.5,2.5) circle (.05);
  \filldraw (-1.5,2.5) circle (.05);
  \filldraw (-.5,2.5) circle (.05);
  \filldraw (.5,2.5) circle (.05);
  \filldraw (1.5,2.5) circle (.05);
  \filldraw (2.5,2.5) circle (.05);
  \filldraw (-2.5,-1.5) circle (.05);
  \filldraw (-2.5,-.5) circle (.05);
  \filldraw (-2.5,.5) circle (.05);
  \filldraw (-2.5,1.5) circle (.05);
  \filldraw (2.5,-1.5) circle (.05);
  \filldraw (2.5,-.5) circle (.05);
  \filldraw (2.5,.5) circle (.05);
  \filldraw (2.5,1.5) circle (.05);
\end{tikzpicture}
  \\ \label{eq: Teven Psi} &=  
\begin{tikzpicture}[baseline=-2]
  \draw[blue!30, very thick] (-3.5,1.5)--(3.5,1.5)--(3.5,-1.5)--(-3.5,-1.5)--(-3.5,1.5);
  \portacabar{-1}{-1} \portacabar{1}{-1}
  \portaabar{-2}{0} \portaabar{0}{0} \portaabar{2}{0}
  \portacabar{-1}{1} \portacabar{1}{1}
  \portacabar{-3}{1} \portacabar{-3}{-1}
  \portacabar{3}{1} \portacabar{3}{-1}
  \filldraw (-2.5,-1.5) circle (.05);
  \filldraw (-1.5,-1.5) circle (.05);
  \filldraw (-.5,-1.5) circle (.05);
  \filldraw (.5,-1.5) circle (.05);
  \filldraw (1.5,-1.5) circle (.05);
  \filldraw (2.5,-1.5) circle (.05);
  \filldraw (-2.5,1.5) circle (.05);
  \filldraw (-1.5,1.5) circle (.05);
  \filldraw (-.5,1.5) circle (.05);
  \filldraw (.5,1.5) circle (.05);
  \filldraw (1.5,1.5) circle (.05);
  \filldraw (2.5,1.5) circle (.05);
  \filldraw (-3.5,-1.5) circle (.05);
  \filldraw (-3.5,-.5) circle (.05);
  \filldraw (-3.5,.5) circle (.05);
  \filldraw (-3.5,1.5) circle (.05);
  \filldraw (3.5,-1.5) circle (.05);
  \filldraw (3.5,-.5) circle (.05);
  \filldraw (3.5,.5) circle (.05);
  \filldraw (3.5,1.5) circle (.05);
\end{tikzpicture}  \ ,
\end{align}
and
\begin{align}
  \nonumber
  T_\mathrm{odd} T_\mathrm{even} \ket{\Psi^{\rm fl}} &=
\begin{tikzpicture}[baseline=-2]
  \draw[blue!30, very thick] (-3.5,1.5)--(3.5,1.5)--(3.5,-1.5)--(-3.5,-1.5)--(-3.5,1.5);
  \portacabar{-1}{-1} \portacabar{1}{-1}
  \portaabar{-2}{0} \portaabar{0}{0} \portaabar{2}{0}
  \portacabar{-1}{1} \portacabar{1}{1}
  \portacabar{-3}{1} \portacabar{-3}{-1}
  \portacabar{3}{1} \portacabar{3}{-1}
  \portaabar{-4}{0}\portaabar{4}{0}
  \portacbar{-3}{2}\portacbar{-1}{2}\portacbar{1}{2}\portacbar{3}{2}
  \portacbar{-3}{-2}\portacbar{-1}{-2}\portacbar{1}{-2}\portacbar{3}{-2}
  \filldraw (-2.5,-1.5) circle (.05);
  \filldraw (-1.5,-1.5) circle (.05);
  \filldraw (-.5,-1.5) circle (.05);
  \filldraw (.5,-1.5) circle (.05);
  \filldraw (1.5,-1.5) circle (.05);
  \filldraw (2.5,-1.5) circle (.05);
  \filldraw (-2.5,1.5) circle (.05);
  \filldraw (-1.5,1.5) circle (.05);
  \filldraw (-.5,1.5) circle (.05);
  \filldraw (.5,1.5) circle (.05);
  \filldraw (1.5,1.5) circle (.05);
  \filldraw (2.5,1.5) circle (.05);
  \filldraw (-3.5,-1.5) circle (.05);
  \filldraw (-3.5,-.5) circle (.05);
  \filldraw (-3.5,.5) circle (.05);
  \filldraw (-3.5,1.5) circle (.05);
  \filldraw (3.5,-1.5) circle (.05);
  \filldraw (3.5,-.5) circle (.05);
  \filldraw (3.5,.5) circle (.05);
  \filldraw (3.5,1.5) circle (.05);
\end{tikzpicture}  
  \\ \label{eq: Todd Psi} &=
\begin{tikzpicture}[baseline=-2]
  \draw[blue!30, very thick] (-4.5,.5)--(4.5,.5)--(4.5,-.5)--(-4.5,-.5)--(-4.5,.5);
  \portaabar{-2}{0} \portaabar{0}{0} \portaabar{2}{0}
  \portaabar{-4}{0}\portaabar{4}{0}
  \filldraw (-2.5,-.5) circle (.05);
  \filldraw (-1.5,-.5) circle (.05);
  \filldraw (-.5,-.5) circle (.05);
  \filldraw (.5,-.5) circle (.05);
  \filldraw (1.5,-.5) circle (.05);
  \filldraw (2.5,-.5) circle (.05);
  \filldraw (-2.5,.5) circle (.05);
  \filldraw (-1.5,.5) circle (.05);
  \filldraw (-.5,.5) circle (.05);
  \filldraw (.5,.5) circle (.05);
  \filldraw (1.5,.5) circle (.05);
  \filldraw (2.5,.5) circle (.05);
  \filldraw (-3.5,-.5) circle (.05);
  \filldraw (-4.5,-.5) circle (.05);
  \filldraw (-4.5,.5) circle (.05);
  \filldraw (-3.5,.5) circle (.05);
  \filldraw (3.5,-.5) circle (.05);
  \filldraw (4.5,-.5) circle (.05);
  \filldraw (4.5,.5) circle (.05);
  \filldraw (3.5,.5) circle (.05);
\end{tikzpicture}  \ ,
\end{align}
where here and in the rest of the paper we ignore the normalisation of states.
Note that the action of $T$ on the tensor network is not fully determined until we know the position of one site, like for example $x=0$.

The complete evolution of $\ket{\Psi^{\rm fl}}$ is depicted in Figure~\ref{flat_g_cycle}.
Remarkably, this evolution is cyclic and has a very small period ($\Delta t =5$) when compared with the approximate recurrence time of a typical state in $\H_{20}$, which would be doubly exponential in the size ($\sim\exp q^{20}$).
This (flat-space) tensor network can be generalised to any boundary size $n$ multiple of four, with corresponding period $\Delta t = \frac n 4$.
Of course, there are many other tensor-network states, and we analyse some of them below.

Consider a state $\ket\Phi$ and a positive integer $\Delta t$ satisfying $T^{\Delta t} \ket\Phi = \ket\Phi$. If $\Delta t$ is the smallest such integer, then they generate the orbit $T^t \ket\Phi$ for $t= 0,\ldots, \Delta t -1$. (An example of orbit is depicted in Figure~\ref{flat_g_cycle}.) Given any such orbit we can construct some eigenstates of $T$ as
\begin{align}\label{eq:eigen}
  &\ket{\Phi_\omega} = \sum_{t=0}^{\Delta t -1}
  e^{i \frac 1 {\Delta t} \omega t}\, T^t \ket\Phi\, ,
  \\
  &\ T \ket{\Phi_\omega} = 
  e^{i \frac 1 {\Delta t} \omega} \ket{\Phi_\omega}
\end{align}
for $\omega = 0,\ldots, \Delta t -1$.
Hence, it is meaningful to associate to this subspace a dynamical mode of (quasi) energy
\begin{align}\label{E delta t}
  E = \frac 1 {\Delta t}\ .
\end{align}

\subsection{Entanglement geometry}

The above tensor-network states can be interpreted as 2D spatial geometries, where the metric distance is fixed the entanglement structure of the state, via Ryu-Takayanagi's prescription \cite{Ryu_2006, Ryu_2007, Hubeny, Lashkari_2014}. This identifies the von Neumann entropy of a set of consecutive sites in the chain with the length of the shortest curve beginning and ending at the boundary of the set (see Figure~\ref{fig:R-T}).
Note that we can apply this prescription to spaces other than AdS.

\begin{figure}
  \centering  
  \includegraphics[width=85mm]{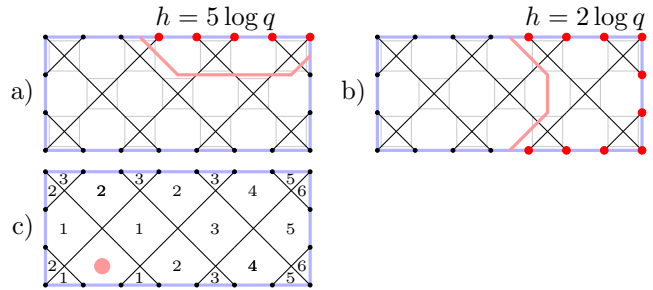}
  \vspace{-2mm}
  \caption{\textbf{Entanglement and geodesics in flat space} (tensor network from Figure~\ref{flat_g_cycle} at $t=.5$). Figures a) b): the entanglement entropy $h$ between the region consisting of red dots and the rest of the chain is equal to the smallest number of black lines (times $\log q$) that are crossed by a curve enclosing the red dots. In very light grey, the position of the gates.
  Figure c): distance between each location and that with the pink dot.
Each white rhombus represents a location in the bulk, which in this case has flat curvature. Locations in the boundary of the bulk correspond to links in the chain $\mathbb Z_{20}$.}
\label{fig:R-T}
\end{figure}

Let us obtain the metric distance of the simplest tensor-network state
\begin{align}
 \ket{\Psi^{\rm fl}} = \frac 1 q 
\begin{tikzpicture}[baseline=-2]
  \draw[blue!30, very thick] (-.5,-.5)--(-.5,.5)--(.5,.5)--(.5,-.5)--(-.5,-.5);
  \portabar{0}{0}
  \filldraw (.5,.5) circle (.05);
  \filldraw (-.5,.5) circle (.05);
  \filldraw (-.5,-.5) circle (.05);
  \filldraw (.5,-.5) circle (.05);
  \draw (-.8,.8) node {$_0$};
  \draw (.8,.8) node {$_1$};
  \draw (.8,-.8) node {$_2$};
  \draw (-.8,-.8) node {$_3$};
\end{tikzpicture}\ .
\end{align}
The unitary condition \eqref{eq:unitarity} implies that the reduced density matrix of subsystem $\{0,1\}$ is proportional to the identity matrix, which implies that $\ket{\Psi^{\rm fl}}$ is maximally entangled with respect to the bipartition $01|23$.
Similarly, the dual-unitary condition \eqref{eq:dual unitarity} implies that $\ket{\Psi^{\rm fl}}$ is maximally entangled with respect to the bipartition $03|12$. This in turn implies that each site (0,1,2 or 3) is maximally entangled with the rest.
This precise entanglement structure is contained in the geometry
\begin{tikzpicture}[baseline=-3]
  \draw[blue!30, very thick] (-.5,-.5)--(-.5,.5)--(.5,.5)--(.5,-.5)--(-.5,-.5);
  \xg{0}{0}
  \filldraw (.5,.5) circle (.05);
  \filldraw (-.5,.5) circle (.05);
  \filldraw (-.5,-.5) circle (.05);
  \filldraw (.5,-.5) circle (.05);
\end{tikzpicture}\,,
where each of the 4 triangles represents a location in the surface, and the distance between two locations is given by the number of black lines that are crossed when travelling from one location to the other. 

In the previous example, the emergent geometry does not contain interior points. Figure~\ref{fig:R-T} depicts the geometry of a more complex tensor-network state, which has a non-trivial interior. We have checked the equality between entropy and distance for all bipartitions, but we have not proven that this geometry uniquely captures the entanglement structure of the state. But in any case, this geometry is special, because it has the same structure as the underlying tensor network.
Now, we can use the distance defined in Figure~\ref{fig:R-T} to calculate the length of any curve in the bulk, not necessarily the shortest one connecting a pair boundary points.
This reveals that the geometry in question is a piece flat space with boundary.

In the continuum setup (AdS/CFT) there is an extensive literature \cite{Czech_2012, Wall_2014, Headrick_2014, Esp_ndola_2018, Bao_2015, Aaronson_2022} addressing the problem of how to obtain the bulk geometry given the entanglement structure of the CFT state. 
In the following subsections we discuss the geometry of relevant states in $\H_n$, which generate discrete versions of AdS space with and without a black hole, and the double-sided AdS black hole.


\subsection{Anti-de Sitter state}

The dilation isometry $D_n : \H_n \to \H_{2n}$ maps a chain of length $n$ (multiple of 4) to a chain of length $2n$. Its particular form is
\begin{align}\label{def:D}
  D_n = 
  q^{-n/8} \Big( \cdots
\begin{tikzpicture}[baseline=-4]
  \draw (-3.5,-.5)--(-3.5,.7);
  \draw (-2.5,-.5)--(-2.5,.7);
  \antibarra{-1}{0}\portacbar{0}{0}\barra{1}{0}
  \cables{-1}{1.2}\cables{1}{1.2}
  \draw (2.5,-.5)--(2.5,.7);
  \draw (3.5,-.5)--(3.5,.7);
  \draw (4.5,-.5)--(4.5,.7);
  \draw (5.5,-.5)--(5.5,.7);
  \antibarra{7}{0}\portacbar{8}{0}\barra{9}{0}
  \cables{7}{1.2}\cables{9}{1.2}
  %
  \draw (-3.7,1) node {\small $_{-2}$};
  \draw (-2.7,1) node {\small $_{-1}$};
  \draw (-1.5,1) node {\small $_{0}$};
  \draw (-.5,1) node {\small $_1$};
  \draw (.5,1) node {\small $_2$};
  \draw (1.5,1) node {\small $_3$};
  \draw (2.5,1) node {\small $_4$};
  \draw (3.5,1) node {\small $_5$};
  \draw (4.5,1) node {\small $_6$};
  \draw (5.5,1) node {\small $_7$};
  \draw (6.5,1) node {\small $_8$};
  \draw (7.5,1) node {\small $_9$};
  \draw (8.5,1) node {\small $_{10}$};
  \draw (9.5,1) node {\small $_{11}$};
  \draw (-3.5,-.9) node {\small $_0$};
  \draw (-2.5,-.9) node {\small $_1$};
  \draw (2.5,-.9) node {\small $_2$};
  \draw (3.5,-.9) node {\small $_3$};
  \draw (4.5,-.9) node {\small $_4$};
  \draw (5.5,-.9) node {\small $_5$};
\end{tikzpicture}
  \cdots \Big) ,
\end{align}
where we have included the site labels $x$ of the input and output chains.
Note that the dilation $D$ is the Hermitian conjugate of the contraction $C$ defined in \eqref{def:C}, except that $C$ is defined for the four-layer circuit \eqref{eq:T gauge} and here we define $D$ for the two-layer circuit \eqref{eq:2 layer}.

\begin{figure*}
  \centering
  \includegraphics[width=3.7cm]{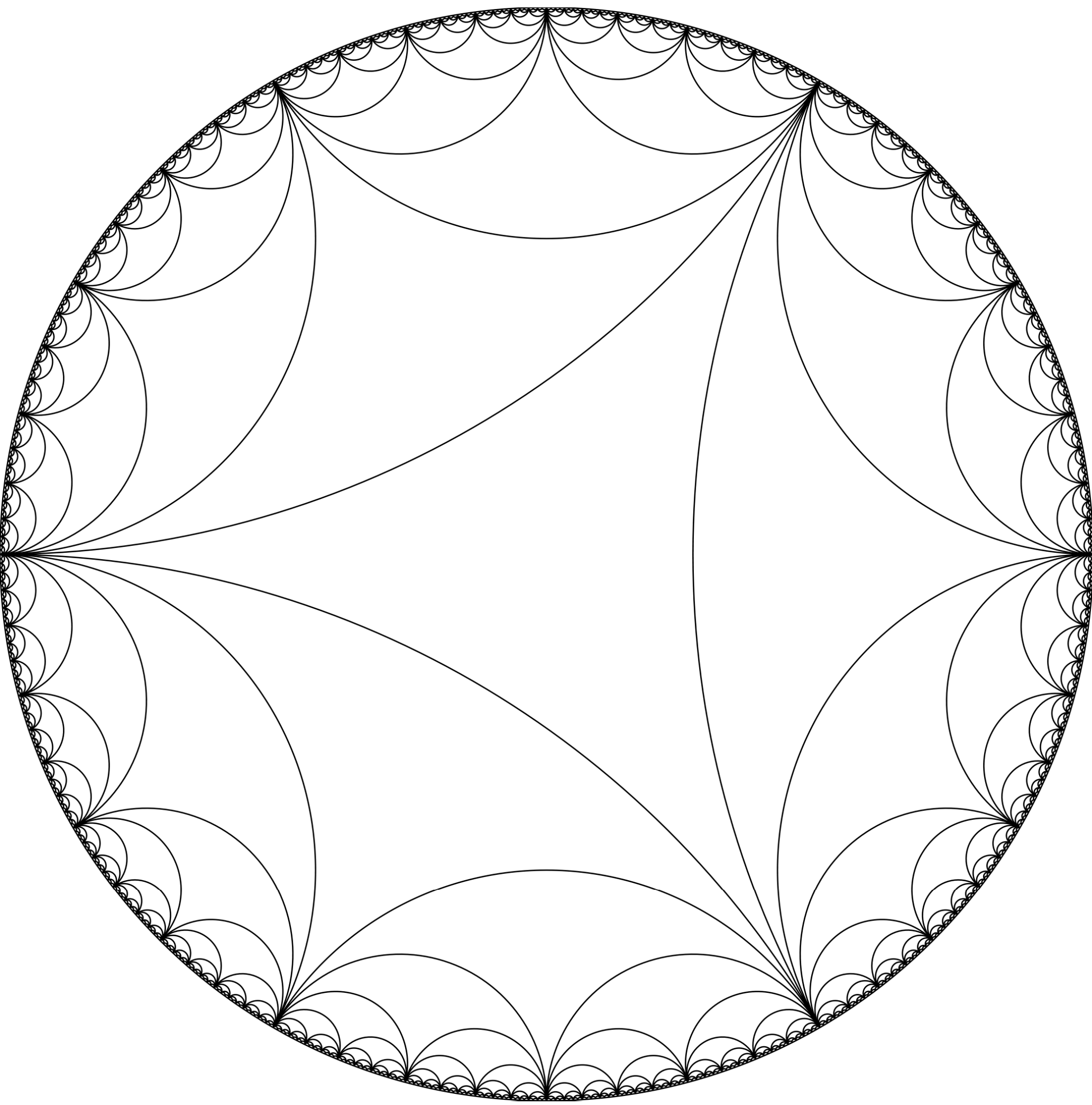}
  \hspace{2mm}
  \includegraphics[width=13.8cm]{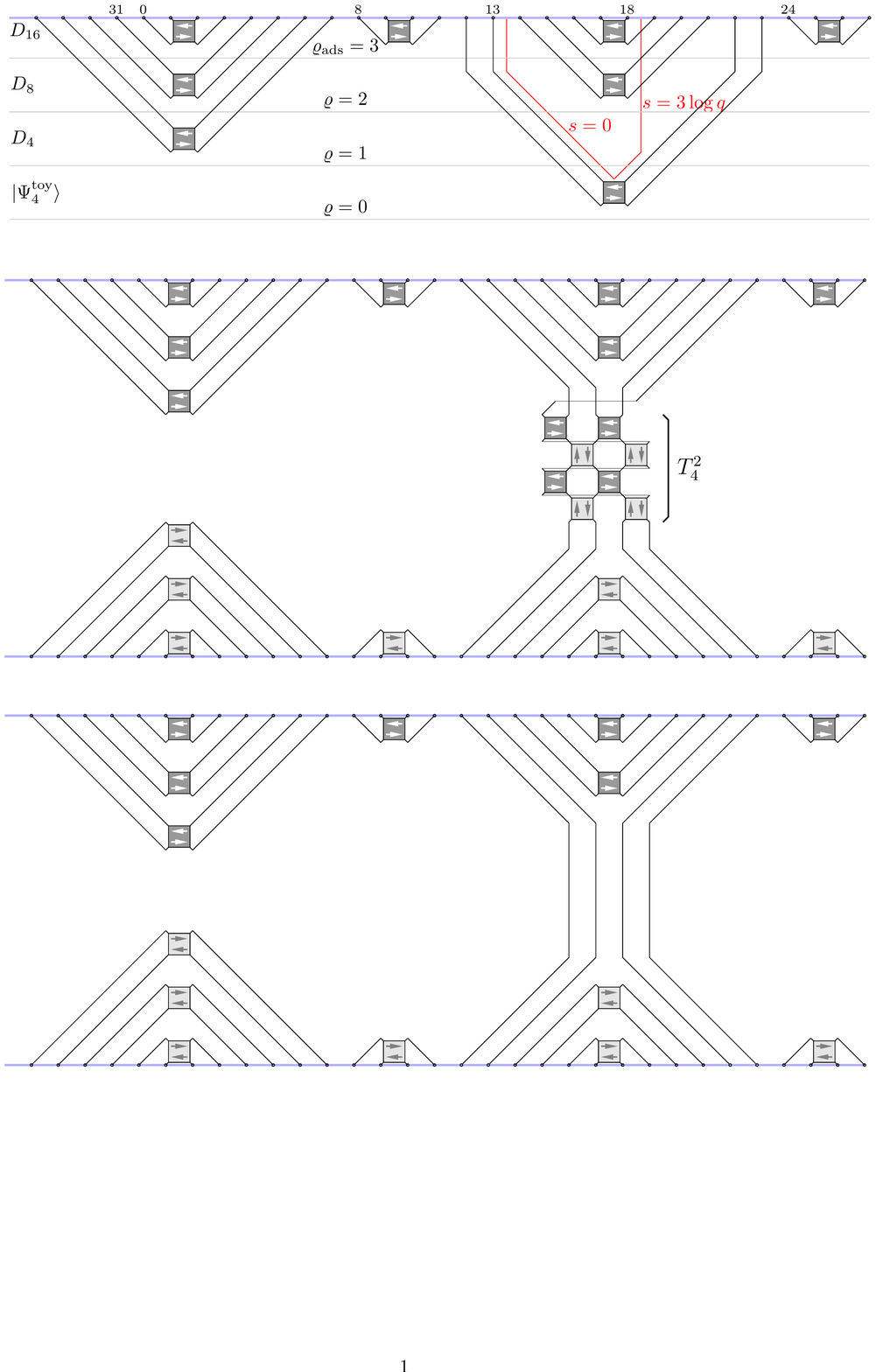}
  \caption{\textbf{Toy AdS state.} The left figure shows (continuum) AdS tiled with equal-size triangles. The right figure displays the the tensor network of the toy AdS state for a chain of size $n=32$ with periodic boundary conditions. Each level of the four recursions is identified by the corresponding value of the radial coordinate $\varrho$. In red, the minimal length paths from boundary locations $x=13,18$ to the centre, and their length $s$. Which shows that this ``toy" version of AdS is very non-smooth.}
  \label{fig:pure_AdS}
\end{figure*}

The toy AdS state is recursively defined as
\begin{align}\label{def:toy 4}
  \ket{\Psi_4^{\rm toy}} 
  &= \frac 1 q 
\begin{tikzpicture}[baseline=-2]
  \draw[blue!30, very thick] (-.5,-.5)--(-.5,.5)--(.5,.5)--(.5,-.5)--(-.5,-.5);
  \portaabar{0}{0}
  \filldraw (.5,.5) circle (.05);
  \filldraw (-.5,.5) circle (.05);
  \filldraw (-.5,-.5) circle (.05);
  \filldraw (.5,-.5) circle (.05);
  \draw (-.7,.8) node {$_0$};
  \draw (.7,.8) node {$_1$};
  \draw (.7,-.8) node {$_2$};
  \draw (-.7,-.8) node {$_3$};
\end{tikzpicture}\ ,
  \\
  \ket{\Psi_{2n}^{\mathrm{toy}}}
  &=
  D_{n}  
  \ket{\Psi_n ^{\mathrm{toy}}}\ ,
\end{align}
which results in
\begin{align}\label{eq: p ads s}
  \ket{\Psi_n ^{\mathrm{toy}}} = 
  \underbrace{D_{\frac n 2} \cdots D_{16} D_8 D_4}_{\varrho_{\rm ads}} 
  \ket{\Psi_4 ^{\mathrm{toy}}}\ .
\end{align} 
Note that the size of the chain is $n=2^{\varrho_{\rm ads}}4$, where the positive integer $\varrho_{\rm ads}$ denotes the number of recursions. 
This tensor network is depicted in Figure~\ref{fig:pure_AdS}, where we can see that each recursion corresponds to a value of the radial coordinate $\varrho=0,1,\ldots, \varrho_{\rm ads}$.
Also, the intermediate state $\ket{\Psi_{2^\varrho 4} ^{\mathrm{toy}}}$ in the recursion \eqref{eq: p ads s} represents a spatial slice of AdS with the radial coordinate restricted to the interval $[0, \varrho]$.

By proceeding as in \eqref{eq: Teven Psi} and \eqref{eq: Todd Psi} we can check that the 4-site toy AdS state \eqref{def:toy 4} is an eigenstate of the evolution operator 
\begin{align}\label{eq:T4 eigen}
  T_4 \ket{\Psi_4^{\rm toy}} = \ket{\Psi_4^{\rm toy}}\, .
\end{align}
Also, by proceeding as in \eqref{CT=TC} we obtain the equality
\begin{align}\label{eq TD=DT}
  T^4_{2n} D_n = D_n T^2_{n} \, ,
\end{align}
which implies that, when the radial coordinate $\varrho$ decreases by one unit, time slows down by a factor of two:
\begin{align}\label{eq: T4 Psi}
  T_{2n}^4 \ket{\Psi_{2n}^{\mathrm{toy}}} 
  = 
  D_{n} T_{n}^2 \ket{\Psi_{n}^{\mathrm{toy}}}\, .
\end{align}
In Section~\ref{sec:matter} we show that the  relationship \eqref{eq: T4 Psi} between time at different radial locations applies to local clocks made of matter. 

\begin{figure*}
  \centering
  \includegraphics[width=18cm]{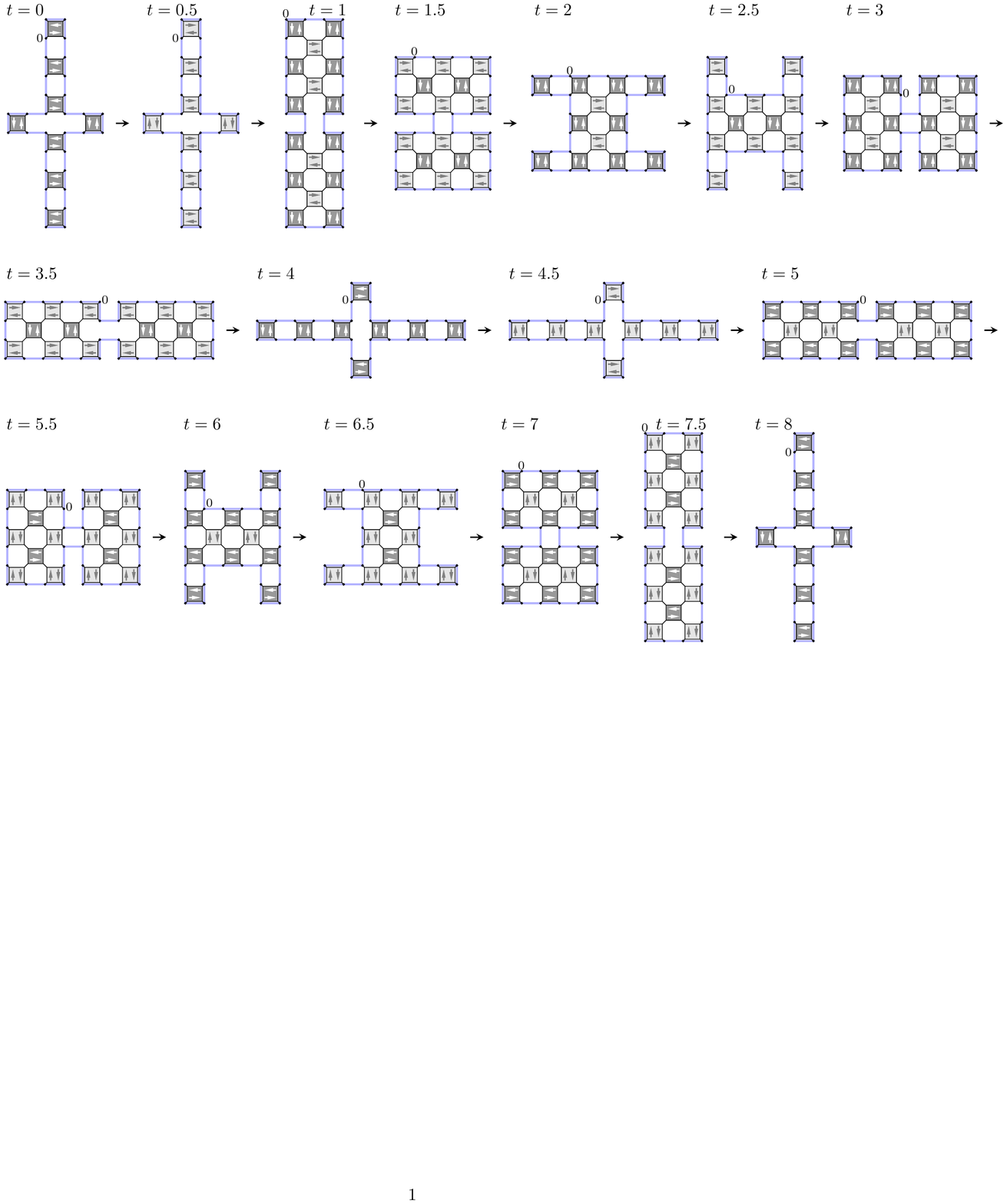}
  \caption{\textbf{Dynamics of the toy AdS state} $\ket{\Psi_{32}^{\mathrm{toy}}}$ represented in Figure~\ref{fig:pure_AdS}. The position of the chain location $x=0$ is represented at each time step.}
  \label{hp_cycle_32}
\end{figure*}

The state $\ket{\Psi_{n}^{\mathrm{toy}}}$ is not an eigenstate of $T_n$, it evolves in time through the orbit shown in Figure~\ref{hp_cycle_32}. 
The corresponding period can be calculated by using equations \eqref{eq:T4 eigen} and \eqref{eq TD=DT}, obtaining 
\begin{align}
  (T_n) ^{2^{\varrho_{\rm ads}+1}} \ket{\Psi_n^{\mathrm{toy}}} 
  =
  \ket{\Psi_n^{\mathrm{toy}}}\ .
\end{align}
That is, the period of this orbit is proportional to the length of the chain $\Delta t = 2^{\varrho_{\rm ads}+1} = \frac n 2$.  
Again, note that this is much shorter than the recurrence time of a typical state in $\H_n$, which would be doubly exponential in the size ($\sim\exp q^n$).
Using \eqref{E delta t} we associate to empty AdS an energy
\begin{align}\label{E ads}
  E_{\rm ads} = \frac 2 n \ .
\end{align}

The geometry of the tensor network of $\ket{\Psi_n ^{\mathrm{toy}}}$ is not regular along the radial direction. For instance, the geodesic distance from a boundary point $x$ to the centre ($\varrho=0$) depends on $x$, as shown in the red lines of Figure~\ref{fig:pure_AdS}. 
Interestingly, the $T$-eigenstate 
\begin{align}\label{eigen T}
  \ket{\Psi_n ^{\mathrm{ads}}}
  = 
  \sum_{t=0}^{\frac n 2 -1} T^t 
  \ket{\Psi_n ^{\mathrm{toy}}}\ ,
\end{align}
has a more regular geometry in the large-$q$ limit.
Specifically, it produces the metric distance 
\begin{align}\label{av dist}
  \Delta s^2_{\Psi^{\rm ads}} = \log^2\! q \left( 
  -2^{2\varrho} \Delta\tau^2 
  +\Delta\varrho^2
  +2^{2\varrho} \frac {\Delta\theta^2} {\pi^2} \right),
\end{align}
to leading order in $q$. 
(The proof of this fact will be presented elsewhere.)
Recall that this distance characterises a discrete geometry, hence, the increments $\Delta\tau, \Delta\varrho, \Delta\theta$ are discrete.
The radial coordinate $\varrho\in [0,\varrho_{\rm ads}] \subset \mathbb Z$ has already been introduced. 
The time coordinate $\tau$ at the boundary is related to the QCA time via $t= \log q\, 2^{\varrho_{\rm ads}} \tau$. Note that the proper time defined by \eqref{av dist} reflects the radial dependence implied by \eqref{eq: T4 Psi}. 
The angular coordinate $\theta \in [0,2\pi]$ is discretised in $2^\varrho 4$ steps of size $\Delta \theta = 2\pi 2^{-\varrho-2}$. Hence, each step has one unit of proper distance and the proper distance of a complete circle is $\log q\, 2^{\varrho}4$, the logarithm of the dimension of the chain in the corresponding intermediate recurrence step.
Finally, note that the distance \eqref{av dist} strongly resembles AdS's distance in global coordinates
\begin{align}\label{AdS distance}
  ds^2_{\rm AdS} = \alpha^2 \left( 
  -\cosh^2\! \varrho\, d\tau^2 
  +d\varrho^2
  +\sinh^2\!\varrho\, d\theta^2 \right)\ .
\end{align}

\subsection{Thermofield double state}

In the previous subsection we saw that eigenstates produce smoother geometries. However, eigenstates require superpositions of tensor networks, and hence, are less easy to visualise. For this reason, in this subsection, we continue using non-eigenstate tensor-network states.

\begin{figure}
  \centering
  \includegraphics[width=88mm]{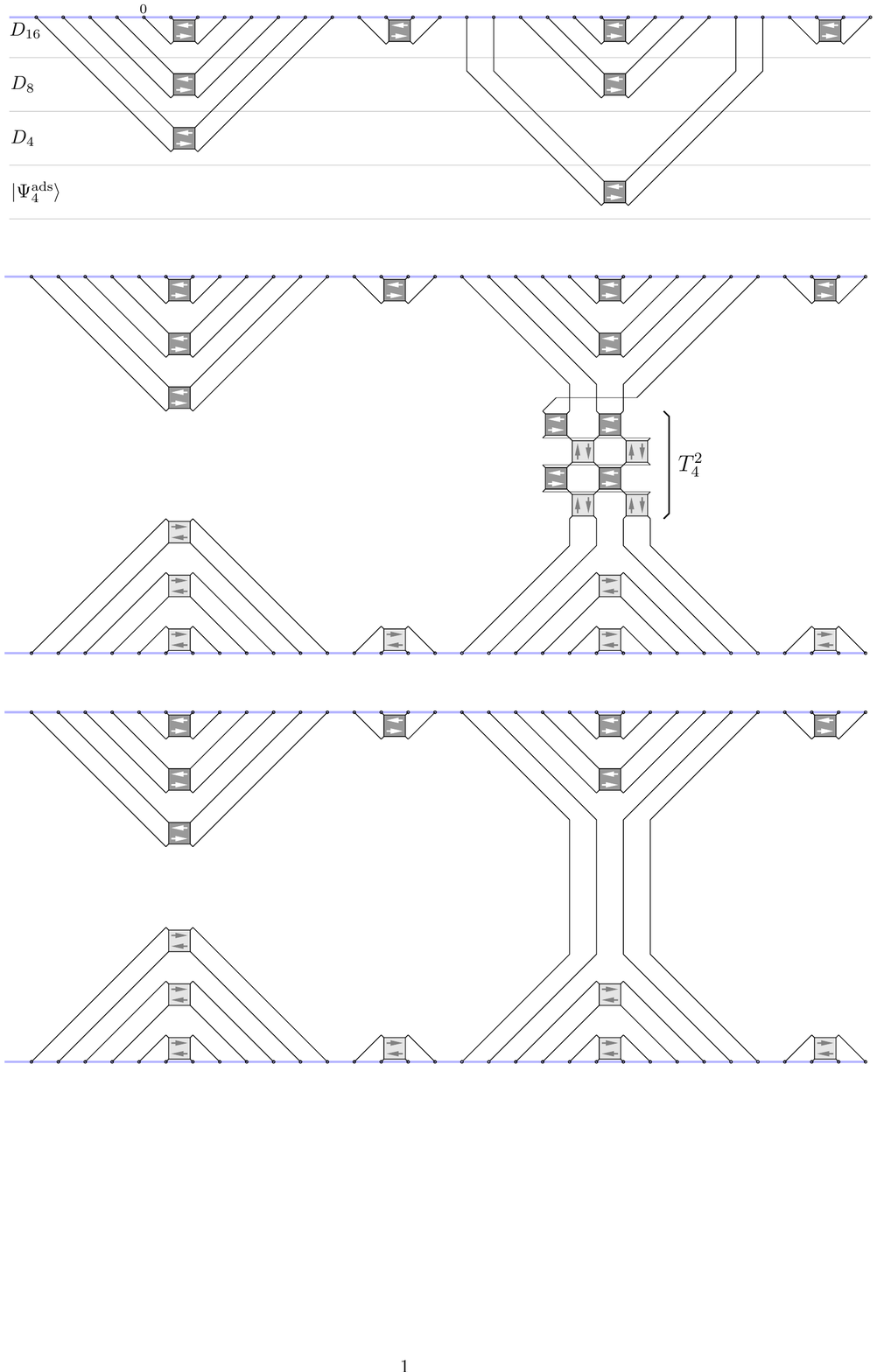}
  \caption{\textbf{Thermofield double state.} This tensor network represents a toy version of a two sided black hole in AdS with boundary radius $\varrho_{\rm ads} =3$ and horizon radius $\varrho_{\rm h} =0$, which implies boundary size $n=32$ and horizon size $a=4$. Periodic boundary conditions are understood in the two chains and the throat (the piece connecting the two symmetric sides). The throat has been growing for 2 local time steps, which implies that the QCA has been evolving for $t= 2\times 2^3 = 16$ time steps.}
  \label{fig:wormhole_AdS}
\end{figure}

The thermofield double (TFD) is a joint state of two identical chains $\H_n \otimes \H_n$ evolving in time via the dynamics $T\otimes T^\intercal$, where $T^\intercal$ is the transpose of $T$ (see Figure~\ref{fig:wormhole_AdS}). This state is characterised by the largest and smallest values of the radial coordinate outside the throat $\varrho\in [\varrho_{\rm h},\varrho_{\rm ads}] \subseteq \mathbb Z$.
These parameters fix the size of each chain to $n= 2^{\varrho_{\rm ads}} 4$, and the area (length) of the horizon to $a \log q$, where we define $a= 2^{\varrho_{\rm h}} 4$. 
The TFD with $n=a$ is the maximally-entangled state between the two chains
\begin{align}
  &\ket{\Psi_{a,a}^{\mathrm{tfd}}}
  = \bigotimes_{x\in \mathbb Z _a} \ket{\psi}_x\ ,
  \\
  &\ket{\psi}_x 
  = \frac 1 {\sqrt q} \sum_{k=1}^q \ket k_x \otimes \ket k_x\ ,
\end{align}
where $\ket{\psi}_x$ is the maximally-entangled state between site $x$ of one chain and site $x$ of the other chain.
When $n>a$ the TFD can be recursively generated via
\begin{align}
  \ket{\Psi_{2n,a}^{\mathrm{tfd}}}
  &=
  D_n \otimes D_n^* 
  \ket{\Psi_{n,a}^{\mathrm{tfd}}}\ ,
\end{align}
where $D_n^*$ is the complex conjugate of $D_n$, defined in \eqref{def:D}.
The TFD can be explicitly written as
\begin{align}
  \ket{\Psi_{n,s}^{\mathrm{tfd}}}
  = 
  (\underbrace{D_{\frac n 2} \cdots D_{2a} D_a}_{\varrho_{\rm ads}-\varrho_{\rm h}}) 
  \otimes
  (\underbrace{D_{\frac n 2} \cdots D_{2a} D_a}_{\varrho_{\rm ads}-\varrho_{\rm h}})^* 
  \ket{\Psi_{a,a}^{\mathrm{tfd}}}\ .
\end{align} 

Next, let us analyse the dynamics of the TFD. Proceeding as in \eqref{CT=TC} we obtain the identity
\begin{align}\label{eq DT=TD}
  (T^\intercal _{2n})^4 D_n^* =   
  D_n^* (T^\intercal _n)^2 \ ,
\end{align}
which is not equivalent to \eqref{eq TD=DT}, although here it produces a similar result: when the radial coordinate $\varrho$ is decreased by one, time slows down by a factor two
\begin{align}
  \nonumber
  (T_{2n} \otimes T^\intercal _{2n})^4 \ket{\Psi_{2n,a}^{\mathrm{tfd}}}
  =
  (D_n \otimes D_n^*) (T_n \otimes T^\intercal _{n})^2
  \ket{\Psi_{n,a}^{\mathrm{tfd}}}\, .
\end{align}
Now we recover the standard fact that the throat wormhole grows linearly in time, since the action of $T \otimes T^\intercal$ cannot be simplified
\begin{align}
  T_a \otimes T^\intercal _a \ket{\Psi_{a,a}^{\mathrm{tfd}}}
  &=
  T_a^2 \otimes \unity \ket{\Psi_{a,a}^{\mathrm{tfd}}}\, ,
\end{align}
as also illustrated in Figure~\ref{fig:wormhole_AdS}.
Note however, that the propper time in the throat is slower than that on the boundary by an exponential factor $2^{(\varrho_{\rm ads} -\varrho_{\rm h})}$.

\subsection{Thermal anti-de Sitter state}

\begin{figure}\begin{center}
  \includegraphics[width=88mm]{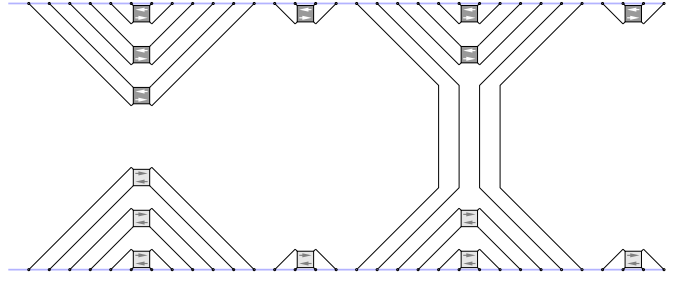}
  \caption{\textbf{Thermal AdS state.} This tensor network represents the toy version of  a simple black hole in AdS, for a chain of size $n=32$ and horizon area $a=4$. The two blue lines correspond to the input and output of the density matrix. 
  Periodic boundary conditions are understood in the chain and the horizon.}
  \label{fig:thermal_AdS}
\end{center}\end{figure}

If we perform the partial trace on the right chain in the TFD then we obtain the mixed state
\begin{align}
  \nonumber
  \rho_{n,a}
  &= {\rm tr}_{\rm right} |\Psi_{n,a}^{\mathrm{tfd}} 
  \rangle\!\langle \Psi_{n,a}^{\mathrm{tfd}}|
  \\ &=
  (\underbrace{D_{\frac n 2} \cdots D_{2a} D_a}_{\varrho_{\rm ads}-\varrho_{\rm h}}) 
  \unity_a 
  (\underbrace{D_{\frac n 2} \cdots D_{2a} D_a}_{\varrho_{\rm ads}-\varrho_{\rm h}}) ^\dagger\ ,
\end{align}
where $\unity_a$ is the identity acting on $\H_a$ (see Figure~\ref{fig:thermal_AdS}).
The fact that $D_n$ are isometries implies that the entropy of $\rho_{n,a}$ is $a \log q$.
When $n=a$ the state $\rho_{a,a}^{\mathrm{ads}}$ is time-independent and maximally mixed, which corresponds to infinite temperature. 

Like in the previous variants of AdS, here we also have the following property. When $n>a$, when the radial coordinate $\varrho$ decreases by one unit, time slows down by a factor two
\begin{align}
  (T_{2n})^4 \rho_{2n,a} (T_{2n}^\dagger)^4
  = 
  D_n\, (T_{n})^2 \rho_{n,a} (T_{n}^\dagger)^2 D_n^\dagger\ .
\end{align}
This implies that the evolution of the thermal AdS state undergoes a cycle of period $\Delta t = 2^{\varrho_{\rm ads}-\varrho_{\rm h} +1} = \frac {2n} a$. Hence, we associate to it an energy
\begin{align}
  \frac E {E_{\rm ads}} = \frac a 4  ,
\end{align}
where we have substituted the energy of pure AdS $E_{\rm ads}$ obtained in \eqref{E ads}.
At this stage, it is not clear how to interpret this identity.

\section{Spaces with matter}
\label{sec:matter}

In the previous section we considered tensor-network states \eqref{eq: p ads s} constructed with the building blocks $\tikz[baseline]{\portabar{0}{.3}}$ and $\tikz[baseline]{\portacbar{0}{.3}}$, and interpreted them as curved empty spaces for the bulk. 
In this section we consider bulk spaces with matter.
We say that a state contains matter (when interpreted as a bulk state) when it cannot be written as a tensor network, or a super-position thereof, constructed with the building blocks $\tikz[baseline]{\portabar{0}{.3}}$ and $\tikz[baseline]{\portacbar{0}{.3}}$.
The states with matter that we analyse are empty spaces with the addition of some (non-building blocks) operators on a small number of links. 
When only one link is affected we interpret it as the position of a particle.

\subsection{Ambiguity in the position of particles}

Let us see that, if the dual unitary $\tikz[baseline]{\portabar{0}{.3}}$ is chaotic (has no solitons), then there are particle states with a well-defined position for the particle.
Let us start by considering the (empty) flat space state
\begin{align}\label{eq:empty space}
  \ket{\Psi^{\rm fl}} = 
\begin{tikzpicture}[baseline=-2]
  \draw[blue!30, very thick] (-2.5,2.5)--(2.5,2.5)--(2.5,-2.5)--(-2.5,-2.5)--(-2.5,2.5);
  \portaabar{-2}{-2} \portaabar{0}{-2} \portaabar{2}{-2}
  \portacabar{-1}{-1} \portacabar{1}{-1}
  \portaabar{-2}{0} \portaabar{0}{0} \portaabar{2}{0}
  \portacabar{-1}{1} \portacabar{1}{1}
  \portaabar{-2}{2} \portaabar{0}{2} \portaabar{2}{2}
  \filldraw (-2.5,-2.5) circle (.05);
  \filldraw (-1.5,-2.5) circle (.05);
  \filldraw (-.5,-2.5) circle (.05);
  \filldraw (.5,-2.5) circle (.05);
  \filldraw (1.5,-2.5) circle (.05);
  \filldraw (2.5,-2.5) circle (.05);
  \filldraw (-2.5,2.5) circle (.05);
  \filldraw (-1.5,2.5) circle (.05);
  \filldraw (-.5,2.5) circle (.05);
  \filldraw (.5,2.5) circle (.05);
  \filldraw (1.5,2.5) circle (.05);
  \filldraw (2.5,2.5) circle (.05);
  \filldraw (-2.5,-1.5) circle (.05);
  \filldraw (-2.5,-.5) circle (.05);
  \filldraw (-2.5,.5) circle (.05);
  \filldraw (-2.5,1.5) circle (.05);
  \filldraw (2.5,-1.5) circle (.05);
  \filldraw (2.5,-.5) circle (.05);
  \filldraw (2.5,.5) circle (.05);
  \filldraw (2.5,1.5) circle (.05);
  \draw (-.5,2.9) node {$_0$};
\end{tikzpicture}\  ,
\end{align}
and apply an arbitrary operator $\tikz{\filldraw[red] (0,.5) circle (.1);} = \a_0\in \A_0$ at the chain site $x=0$ 
\begin{align}\label{eq:a psi}
  \a_0 \ket{\Psi^{\rm fl}} =
\begin{tikzpicture}[baseline=-2]
  \draw[blue!30, very thick] (-2.5,2.5)--(2.5,2.5)--(2.5,-2.5)--(-2.5,-2.5)--(-2.5,2.5);
  \portaabar{-2}{-2} \portaabar{0}{-2} \portaabar{2}{-2}
  \portacabar{-1}{-1} \portacabar{1}{-1}
  \portaabar{-2}{0} \portaabar{0}{0} \portaabar{2}{0}
  \portacabar{-1}{1} \portacabar{1}{1}
  \portaabar{-2}{2} \portaabar{0}{2} \portaabar{2}{2}
  \filldraw (-2.5,-2.5) circle (.05);
  \filldraw (-1.5,-2.5) circle (.05);
  \filldraw (-.5,-2.5) circle (.05);
  \filldraw (.5,-2.5) circle (.05);
  \filldraw (1.5,-2.5) circle (.05);
  \filldraw (2.5,-2.5) circle (.05);
  \filldraw (-2.5,2.5) circle (.05);
  \filldraw (-1.5,2.5) circle (.05);
  \filldraw (-.5,2.5) circle (.05);
  \filldraw (.5,2.5) circle (.05);
  \filldraw (1.5,2.5) circle (.05);
  \filldraw (2.5,2.5) circle (.05);
  \filldraw (-2.5,-1.5) circle (.05);
  \filldraw (-2.5,-.5) circle (.05);
  \filldraw (-2.5,.5) circle (.05);
  \filldraw (-2.5,1.5) circle (.05);
  \filldraw (2.5,-1.5) circle (.05);
  \filldraw (2.5,-.5) circle (.05);
  \filldraw (2.5,.5) circle (.05);
  \filldraw (2.5,1.5) circle (.05);
  %
  \filldraw[red] (-.5,2.5) circle (.1);
\end{tikzpicture}\  .
\end{align}
We interpret this state as the empty space \eqref{eq:empty space} with one particle at its boundary. 
However, the same sate can also be written as 
\begin{align}\label{eq:a psi 2}
  \a_0 \ket{\Psi^{\rm fl}} =
\begin{tikzpicture}[baseline=-2]
  \draw[blue!30, very thick] (-2.5,2.5)--(2.5,2.5)--(2.5,-2.5)--(-2.5,-2.5)--(-2.5,2.5);
  \portaabar{-2}{-2} \portaabar{0}{-2} \portaabar{2}{-2}
  \portacabar{-1}{-1} \portacabar{1}{-1}
  \portaabar{-2}{0} \portaabar{0}{0} \portaabar{2}{0}
  \portacabar{-1}{1} \portacabar{1}{1}
  \portaabar{-2}{2} \portaabar{0}{2} \portaabar{2}{2}
  \filldraw (-2.5,-2.5) circle (.05);
  \filldraw (-1.5,-2.5) circle (.05);
  \filldraw (-.5,-2.5) circle (.05);
  \filldraw (.5,-2.5) circle (.05);
  \filldraw (1.5,-2.5) circle (.05);
  \filldraw (2.5,-2.5) circle (.05);
  \filldraw (-2.5,2.5) circle (.05);
  \filldraw (-1.5,2.5) circle (.05);
  \filldraw (-.5,2.5) circle (.05);
  \filldraw (.5,2.5) circle (.05);
  \filldraw (1.5,2.5) circle (.05);
  \filldraw (2.5,2.5) circle (.05);
  \filldraw (-2.5,-1.5) circle (.05);
  \filldraw (-2.5,-.5) circle (.05);
  \filldraw (-2.5,.5) circle (.05);
  \filldraw (-2.5,1.5) circle (.05);
  \filldraw (2.5,-1.5) circle (.05);
  \filldraw (2.5,-.5) circle (.05);
  \filldraw (2.5,.5) circle (.05);
  \filldraw (2.5,1.5) circle (.05);
  %
  \filldraw[red] (-.5,1.5) circle (.1);
  \filldraw[red] (.5,1.5) circle (.1);
  \draw[red, thick] (-.5,1.5)--(.5,1.5);
\end{tikzpicture}\ ,
\end{align}
where the two-site operator \tikz{\filldraw[red] (-.5,.5) circle (.1);\filldraw[red] (.5,.5) circle (.1);\draw[red, thick] (-.5,.5)--(.5,.5);} is defined by
\begin{align}\label{eq:op 2}
\begin{tikzpicture}[baseline=-2]
  \filldraw[red] (-.5,0) circle (.1);
  \filldraw[red] (.5,0) circle (.1);
  \draw[red, thick] (-.5,0)--(.5,0);
\end{tikzpicture}
=
\begin{tikzpicture}[baseline=-2]
  \portaabar{0}{-.5} \portabar{0}{.5} 
  \filldraw[red] (-.5,0) circle (.1);
\end{tikzpicture}\ ,
\end{align}
which then satisfies
\begin{align}
\begin{tikzpicture}[baseline=-2]
  \portaabar{0}{0}
  \filldraw[red] (-.5,-.5) circle (.1);
  \filldraw[red] (.5,-.5) circle (.1);
  \draw[red, thick] (-.5,-.5)--(.5,-.5);
\end{tikzpicture}
=
\begin{tikzpicture}[baseline=-2]
  \portaabar{0}{0}  
  \filldraw[red] (-.5,.5) circle (.1);
\end{tikzpicture}\ ,
\end{align}
and implies the equality between \eqref{eq:a psi} and \eqref{eq:a psi 2}.
Analogously, the operator 
\begin{align}\label{eq:op 4}
\begin{tikzpicture}[baseline=-2]
  \filldraw[red] (-1.5,0) circle (.1);
  \filldraw[red] (-.5,0) circle (.1);
  \filldraw[red] (.5,0) circle (.1);
  \filldraw[red] (1.5,0) circle (.1);
  \draw[red, thick] (-1.5,0)--(1.5,0);
\end{tikzpicture}
=
\begin{tikzpicture}[baseline=-2]
  \portacabar{-1}{-.5} \portacabar{1}{-.5} 
  \portacbar{-1}{.5} \portacbar{1}{.5} 
  \filldraw[red] (-.5,0) circle (.1);
  \filldraw[red] (.5,0) circle (.1);
  \draw[red, thick] (-.5,0)--(.5,0);
\end{tikzpicture}\ ,
\end{align}
allows to write the same state \eqref{eq:a psi} as
\begin{align}\label{eq:a psi 4}
  \a_0 \ket{\Psi^{\rm fl}} =
\begin{tikzpicture}[baseline=-2]
  \draw[blue!30, very thick] (-2.5,2.5)--(2.5,2.5)--(2.5,-2.5)--(-2.5,-2.5)--(-2.5,2.5);
  \portaabar{-2}{-2} \portaabar{0}{-2} \portaabar{2}{-2}
  \portacabar{-1}{-1} \portacabar{1}{-1}
  \portaabar{-2}{0} \portaabar{0}{0} \portaabar{2}{0}
  \portacabar{-1}{1} \portacabar{1}{1}
  \portaabar{-2}{2} \portaabar{0}{2} \portaabar{2}{2}
  \filldraw (-2.5,-2.5) circle (.05);
  \filldraw (-1.5,-2.5) circle (.05);
  \filldraw (-.5,-2.5) circle (.05);
  \filldraw (.5,-2.5) circle (.05);
  \filldraw (1.5,-2.5) circle (.05);
  \filldraw (2.5,-2.5) circle (.05);
  \filldraw (-2.5,2.5) circle (.05);
  \filldraw (-1.5,2.5) circle (.05);
  \filldraw (-.5,2.5) circle (.05);
  \filldraw (.5,2.5) circle (.05);
  \filldraw (1.5,2.5) circle (.05);
  \filldraw (2.5,2.5) circle (.05);
  \filldraw (-2.5,-1.5) circle (.05);
  \filldraw (-2.5,-.5) circle (.05);
  \filldraw (-2.5,.5) circle (.05);
  \filldraw (-2.5,1.5) circle (.05);
  \filldraw (2.5,-1.5) circle (.05);
  \filldraw (2.5,-.5) circle (.05);
  \filldraw (2.5,.5) circle (.05);
  \filldraw (2.5,1.5) circle (.05);
  %
  \filldraw[red] (-1.5,.5) circle (.1);
  \filldraw[red] (-.5,.5) circle (.1);
  \filldraw[red] (.5,.5) circle (.1);
  \filldraw[red] (1.5,.5) circle (.1);
  \draw[red, thick] (-1.5,.5)--(1.5,.5);
\end{tikzpicture}\ .
\end{align}
And, if we use the operator 
\begin{align}\label{eq:op 2 v}
\begin{tikzpicture}[baseline=-2]
  \filldraw[red] (0,.5) circle (.1);
  \filldraw[red] (0,-.5) circle (.1);
  \draw[red, thick] (0,.5)--(0,-.5);
\end{tikzpicture}
=
\begin{tikzpicture}[baseline=-2]
  \portaabar{.5}{0} \portabar{-.5}{0}
  \filldraw[red] (0,.5) circle (.1);
\end{tikzpicture}\ ,
\end{align}
then we can write the same state as
\begin{align}\label{eq:a psi 2'}
  \a_0 \ket{\Psi^{\rm fl}} =
\begin{tikzpicture}[baseline=-2]
  \draw[blue!30, very thick] (-2.5,2.5)--(2.5,2.5)--(2.5,-2.5)--(-2.5,-2.5)--(-2.5,2.5);
  \portaabar{-2}{-2} \portaabar{0}{-2} \portaabar{2}{-2}
  \portacabar{-1}{-1} \portacabar{1}{-1}
  \portaabar{-2}{0} \portaabar{0}{0} \portaabar{2}{0}
  \portacabar{-1}{1} \portacabar{1}{1}
  \portaabar{-2}{2} \portaabar{0}{2} \portaabar{2}{2}
  \filldraw (-2.5,-2.5) circle (.05);
  \filldraw (-1.5,-2.5) circle (.05);
  \filldraw (-.5,-2.5) circle (.05);
  \filldraw (.5,-2.5) circle (.05);
  \filldraw (1.5,-2.5) circle (.05);
  \filldraw (2.5,-2.5) circle (.05);
  \filldraw (-2.5,2.5) circle (.05);
  \filldraw (-1.5,2.5) circle (.05);
  \filldraw (-.5,2.5) circle (.05);
  \filldraw (.5,2.5) circle (.05);
  \filldraw (1.5,2.5) circle (.05);
  \filldraw (2.5,2.5) circle (.05);
  \filldraw (-2.5,-1.5) circle (.05);
  \filldraw (-2.5,-.5) circle (.05);
  \filldraw (-2.5,.5) circle (.05);
  \filldraw (-2.5,1.5) circle (.05);
  \filldraw (2.5,-1.5) circle (.05);
  \filldraw (2.5,-.5) circle (.05);
  \filldraw (2.5,.5) circle (.05);
  \filldraw (2.5,1.5) circle (.05);
  \filldraw[red] (.5,2.5) circle (.1);
  \filldraw[red] (.5,1.5) circle (.1);
  \draw[red, thick] (.5,2.5)--(.5,1.5);
\end{tikzpicture}\ .
\end{align}
Clearly, there is a large number of ways of writing the same state.

If $\a$ is a soliton of $\tikz[baseline]{\portabar{0}{.3}}$, or a linear combination thereof, then the operators 
\tikz{\filldraw[red] (-.5,.5) circle (.1);\filldraw[red] (.5,.5) circle (.1);\draw[red, thick] (-.5,.5)--(.5,.5);}, \tikz{\filldraw[red] (-.5,.5) circle (.1);\filldraw[red] (.5,.5) circle (.1);\filldraw[red] (1.5,.5) circle (.1);\filldraw[red] (2.5,.5) circle (.1);\draw[red, thick] (-.5,.5)--(2.5,.5);} 
and \tikz[baseline=-3]{\filldraw[red] (.5,.5) circle (.1);\filldraw[red] (.5,-.5) circle (.1);\draw[red, thick] (.5,.5)--(.5,-.5);}
defined in \eqref{eq:op 2}, \eqref{eq:op 4} and \eqref{eq:op 2 v} act non-trivially on a single site only.
Contrary, if $\tikz[baseline]{\portabar{0}{.3}}$ is generic then, in addition of having no solitons, we expect that all alternative ways of writing the one-particle state \eqref{eq:a psi} involve operators with terms acting non-trivially in more than one site, like for example $\tikz{\filldraw[red] (-.5,.5) circle (.1);\filldraw[red] (.5,.5) circle (.1);\draw[red, thick] (-.5,.5)--(.5,.5);} = \b \otimes \c +\cdots$
This fact eliminates the ambiguity of the particle position. Therefore we use the following prescription: If a state can be written as a tensor network of the building blocks, and one extra tensor on a single link, then we interpret this state as a space with one particle at the position of mentioned link. 

This prescription allows us to put a particle at any location in the bulk (not necessarily the boundary), like for example
\begin{align}
\begin{tikzpicture}[baseline=-2]
  \draw[blue!30, very thick] (-2.5,2.5)--(2.5,2.5)--(2.5,-2.5)--(-2.5,-2.5)--(-2.5,2.5);
  \portaabar{-2}{-2} \portaabar{0}{-2} \portaabar{2}{-2}
  \portacabar{-1}{-1} \portacabar{1}{-1}
  \portaabar{-2}{0} \portaabar{0}{0} \portaabar{2}{0}
  \portacabar{-1}{1} \portacabar{1}{1}
  \portaabar{-2}{2} \portaabar{0}{2} \portaabar{2}{2}
  \filldraw (-2.5,-2.5) circle (.05);
  \filldraw (-1.5,-2.5) circle (.05);
  \filldraw (-.5,-2.5) circle (.05);
  \filldraw (.5,-2.5) circle (.05);
  \filldraw (1.5,-2.5) circle (.05);
  \filldraw (2.5,-2.5) circle (.05);
  \filldraw (-2.5,2.5) circle (.05);
  \filldraw (-1.5,2.5) circle (.05);
  \filldraw (-.5,2.5) circle (.05);
  \filldraw (.5,2.5) circle (.05);
  \filldraw (1.5,2.5) circle (.05);
  \filldraw (2.5,2.5) circle (.05);
  \filldraw (-2.5,-1.5) circle (.05);
  \filldraw (-2.5,-.5) circle (.05);
  \filldraw (-2.5,.5) circle (.05);
  \filldraw (-2.5,1.5) circle (.05);
  \filldraw (2.5,-1.5) circle (.05);
  \filldraw (2.5,-.5) circle (.05);
  \filldraw (2.5,.5) circle (.05);
  \filldraw (2.5,1.5) circle (.05);
  \filldraw[red] (.5,-.5) circle (.1);
\end{tikzpicture}\ .
\end{align}
However, In the following subsection we see that this prescription cannot be applied to all single-particle states.

\subsection{Dynamics of spaces with one particle}

\begin{figure*}
  \centering  
  \includegraphics[width=17cm]{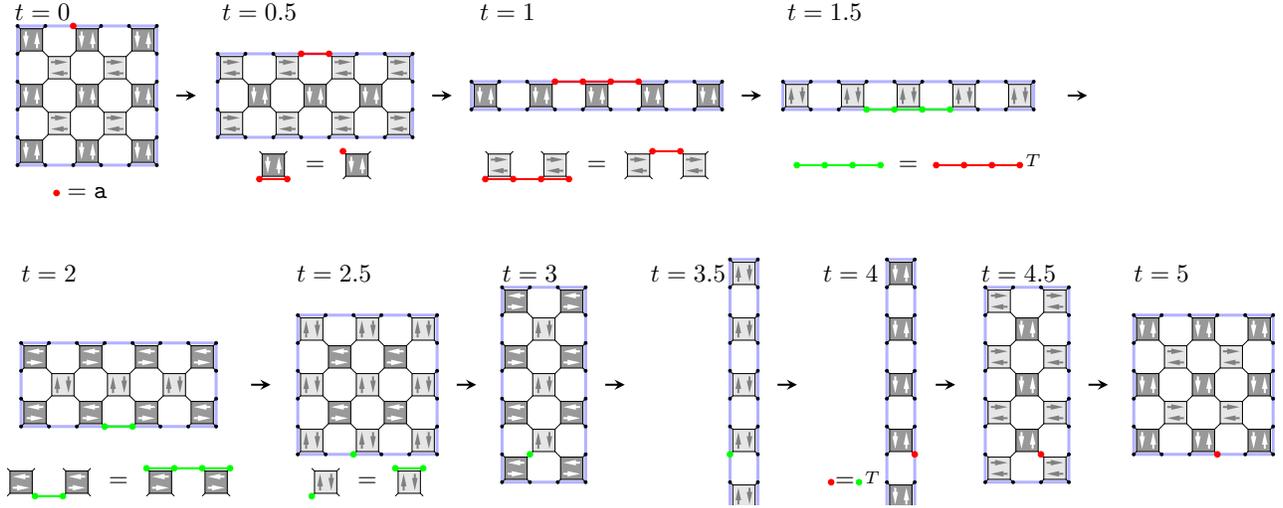}
  \caption{\textbf{Dynamics of flat space with one particle} - sequence of states at different times $t$. At $t=0$, initial state of geometry (in grey) and particle (red dot represents arbitrary operator $\a$). At each half time step $t$, state of geometry and particle, and below, definition of the operator representing the particle in terms of operators defined in previous time steps. Each green operator is the transpose of the red operator with the same form.
  At $t=2$ and $t=2.5$ the operators shrink - this requires the dual unitary 
  and the operator $\a$ to be real, because then operator equalities at $t=2$ and $t=2.5$ are equivalent to those at $t=1$ and $t=0.5$ respectively.
  At $t=5$, the geometry returns to its original state but the particle is in the antipodal position. Hence the period is $\Delta t=10$.}
  \label{flat_m_cycle} 
\end{figure*} 

We have already seen the dynamics of the empty-space state $\ket{\Psi^{\rm fl}}$ in Figure~\ref{flat_g_cycle}.
Now, let us explore what happens when we add one particle $\a_0 \ket{\Psi^{\rm fl}}$ at the boundary site $x=0$.
Figure~\ref{flat_m_cycle} shows that the geometry of $T^t \a_0 \ket{\Psi^{\rm fl}}$ evolves like that of $T^t \ket{\Psi^{\rm fl}}$, if we represent the particle with certain inserted operators, also defined in Figure~\ref{flat_m_cycle}.
This evolution has a very short period ($\Delta t=10$) when the dual unitary and the operator are real
\begin{align}\label{DU real}
  \tikz[baseline=-2]{\portabar{0}{0}}
  =
  \tikz[baseline=-2]{\portabarc{0}{0}}\ ,
  \qquad
  \a = \a^*.
\end{align}
(The same holds if, instead of real, the operator is symmetric $\a= \a^\intercal$.) 
This implies that the subsequent operators
\begin{align}
\begin{tikzpicture}
  \filldraw[red] (-1,-3) circle (.1);
  \filldraw[red] (-2,-3) circle (.1);
  \filldraw[red] (-1,-3)--(-2,-3);
  \draw(0,-3) node {$=$};  
  \portabar{1.5}{-2.5} \portabart{1.5}{-3.5}
  \filldraw[red] (1,-3) circle (.1);
\end{tikzpicture}  
\qquad
\begin{tikzpicture}
  \filldraw[red] (-1,-3) circle (.1);
  \filldraw[red] (-2,-3) circle (.1);
  \filldraw[red] (-3,-3) circle (.1);
  \filldraw[red] (-4,-3) circle (.1);
  \filldraw[red] (-1,-3)--(-4,-3);
  \draw(0,-3) node {$=$};  
  \portacabars{1.5}{-2.5} \portacabar{1.5}{-3.5}
  \portacabars{3.5}{-2.5} \portacabar{3.5}{-3.5}
  \filldraw[red] (2,-3) circle (.1);
  \filldraw[red] (3,-3) circle (.1);
  \filldraw[red] (2,-3)--(3,-3);
\end{tikzpicture}  
\end{align}
are real too; and it makes the operator equalities at $t=2$ and $t=2.5$ the transposition of equalities at $t=1$ and $t=.5$ respectively, rendering them equivalent. This implies the shrinking of operators at $t=2$ and $2.5$, and a return to the original $\a$. Note that the recurrence time of $\a_0 \ket{\Psi^{\rm fl}} \in \H_n$ is $t= \frac n 2$, while that of a typical state in $\H_n$ is $t\sim \exp q^n$.

Figure~\ref{flat_m_cycle} also shows that some intermediate states cannot be written with a single-site tensor representing the particle. Hence, as mentioned above, the position of the particle in these states is not well defined. 

When the particle $\a_x$ is located at other points of the boundary of flat space, the corresponding state $\a_x \ket{\Psi^{\rm fl}}$ experiences an evolution similar to that with $x=0$. Specifically, for any $x$, the particle reaches the antipodal point at $t= \frac n 4$, and the period of the evolution is $\Delta t= \frac n 2$. Figure~\ref{flat_m_corner} shows the evolution of the same flat space when the initial location of the particle is a corner ($x=-2$). 

Figure~\ref{ads_m_cycle} shows the dynamics of toy AdS with a particle at the boundary location $x=0$, that is $\a_0 \ket{\Psi_{32} ^{\mathrm{toy}}}$. In this case we also observe that the particle reaches the antipodal point at $t= \frac n 4$ and the recurrence happens at $t= \frac n 2$. However, contrary to flat space, certain initial positions of the particle (e.g.~$\a_9 \ket{\Psi_{32} ^{\mathrm{toy}}}$) generate a state with recurrence time much longer than $\Delta t = \frac n 2$. In the following section we observe that two-particle states also give rise to this second type of dynamics, with longer recurrence times.



\subsection{One vs two-particle dynamics}

In this subsection we compare the dynamics of states with different numbers of particles. We do so by numerically obtaining the dimension of the effective subspace explored by the evolution of each such state. This is done with the dynamics of the four-layer circuit \eqref{eq:T gauge} generated by a randomly sampled dual unitary $\tikz[baseline]{\porta{0}{.3}}$ with real entries and local dimension $q=3$. We use the four-layer circuit to get rid of the constraint $\tikz[baseline]{\portat{0}{.3}} = \tikz[baseline]{\portas{0}{.3}}$ and, in this way, enlarge the size of the set of dual unitaries. This fact increases the similarity between the properties of different instances of the random dual unitary, due to concentration of measure.

For the sake of simplicity we perform the above-described comparison with the small empty-space state
\begin{align}
  \ket{\Psi}
  =  
\begin{tikzpicture}[baseline=-2]
  \draw[blue!30, very thick] (-.5,-.5)--(-.5,.5)--(2.5,.5)--(2.5,-.5)--(-.5,-.5);
  \portadrr{0}{0} \portadr{2}{0}
  \filldraw (.5,.5) circle (.05);
  \filldraw (-.5,.5) circle (.05);
  \filldraw (-.5,-.5) circle (.05);
  \filldraw (.5,-.5) circle (.05);
  \filldraw (2.5,.5) circle (.05);
  \filldraw (1.5,.5) circle (.05);
  \filldraw (1.5,-.5) circle (.05);
  \filldraw (2.5,-.5) circle (.05);
  \draw (.5,.9) node {$_0$};
  \draw (1.5,.9) node {$_1$};
  \draw (-.8,.85) node {$_{-1}$};
\end{tikzpicture}
  \ \in \H_8 \ .
\end{align} 
We calculate its dynamics by proceeding as in \eqref{eq: Teven Psi} and \eqref{eq: Todd Psi} but with the four-layer circuit $T= T_{[4]} T_{[3]} T_{[2]} T_{[1]}$ described in \eqref{eq:T gauge}, obtaining the cycle 
\begin{equation}\label{eq:0 particles}
\begin{array}{ll}
  T_{[1]}  \ket{\Psi}
  =  
\begin{tikzpicture}[baseline=-2]
  \draw[blue!30, very thick] (-.5,-1.5)--(-.5,1.5)--(.5,1.5)--(.5,-1.5)--(-.5,-1.5);
  \porta{0}{1} \portarr{0}{-1}
  \filldraw (-.5,1.5) circle (.05);
  \filldraw (.5,1.5) circle (.05);
  \filldraw (.5,.5) circle (.05);
  \filldraw (.5,-.5) circle (.05);
  \filldraw (.5,-1.5) circle (.05);
  \filldraw (-.5,-1.5) circle (.05);
  \filldraw (-.5,-.5) circle (.05);
  \filldraw (-.5,.5) circle (.05);
  \draw (-.5,1.9) node {$_0$};
\end{tikzpicture}\ ,
  &T_{[2]} T_{[1]}  \ket{\Psi}
  =  
\begin{tikzpicture}[baseline=-2]
  \draw[blue!30, very thick] (-.5,-1.5)--(-.5,1.5)--(.5,1.5)--(.5,-1.5)--(-.5,-1.5);
  \portacs{0}{1} \portad{0}{-1}
  \filldraw (-.5,1.5) circle (.05);
  \filldraw (.5,1.5) circle (.05);
  \filldraw (.5,.5) circle (.05);
  \filldraw (.5,-.5) circle (.05);
  \filldraw (.5,-1.5) circle (.05);
  \filldraw (-.5,-1.5) circle (.05);
  \filldraw (-.5,-.5) circle (.05);
  \filldraw (-.5,.5) circle (.05);
  \draw (-.5,1.9) node {$_0$};
\end{tikzpicture}\ ,  
\\
  T_{[3]} T_{[2]} T_{[1]}  \ket{\Psi}
  =  
\begin{tikzpicture}[baseline=-2]
  \draw[blue!30, very thick] (-.5,-.5)--(-.5,.5)--(2.5,.5)--(2.5,-.5)--(-.5,-.5);
  \portasr{2}{0} \portatq{0}{0}
  \filldraw (.5,.5) circle (.05);
  \filldraw (-.5,.5) circle (.05);
  \filldraw (-.5,-.5) circle (.05);
  \filldraw (.5,-.5) circle (.05);
  \filldraw (2.5,.5) circle (.05);
  \filldraw (1.5,.5) circle (.05);
  \filldraw (1.5,-.5) circle (.05);
  \filldraw (2.5,-.5) circle (.05);
  \draw (.5,.9) node {$_0$};
\end{tikzpicture}\ ,
  &T \ket{\Psi}
  =  
\begin{tikzpicture}[baseline=-2]
  \draw[blue!30, very thick] (-.5,-.5)--(-.5,.5)--(2.5,.5)--(2.5,-.5)--(-.5,-.5);
  \portadrr{0}{0} \portadr{2}{0}
  \filldraw (.5,.5) circle (.05);
  \filldraw (-.5,.5) circle (.05);
  \filldraw (-.5,-.5) circle (.05);
  \filldraw (.5,-.5) circle (.05);
  \filldraw (2.5,.5) circle (.05);
  \filldraw (1.5,.5) circle (.05);
  \filldraw (1.5,-.5) circle (.05);
  \filldraw (2.5,-.5) circle (.05);
  \draw (.5,.9) node {$_0$};
\end{tikzpicture}\ .
\end{array}
\end{equation} 
We observe that the period is $\Delta t=1$, and so, $\ket\Psi$ is an eigenstate of $T$.
If we add a particle at the boundary location $x=0$ by applying an operator $\tikz{\filldraw[red] (0,.6) circle (.1);} = \a \in \A$ then we obtain the state 
\begin{align}
  \a_0 \ket{\Psi}
  =  
\begin{tikzpicture}[baseline=-2]
  \draw[blue!30, very thick] (-.5,-.5)--(-.5,.5)--(2.5,.5)--(2.5,-.5)--(-.5,-.5);
  \portadrr{0}{0} \portadr{2}{0}
  \filldraw (-.5,.5) circle (.05);
  \filldraw[red] (.5,.5) circle (.1);
  \filldraw (-.5,-.5) circle (.05);
  \filldraw (.5,-.5) circle (.05);
  \filldraw (2.5,.5) circle (.05);
  \filldraw (1.5,.5) circle (.05);
  \filldraw (1.5,-.5) circle (.05);
  \filldraw (2.5,-.5) circle (.05);
  \draw (.5,.9) node {$_0$};
  \draw (1.5,.9) node {$_1$};
  \draw (-.8,.85) node {$_{-1}$};
\end{tikzpicture}\ ,
\end{align} 
which has the similar evolution
\begin{equation}\label{eq:1 particles}
\begin{array}{ll}
  T_{[1]}  \a_0 \ket{\Psi}
  =  
\begin{tikzpicture}[baseline=-2]
  \draw[blue!30, very thick] (-.5,-1.5)--(-.5,1.5)--(.5,1.5)--(.5,-1.5)--(-.5,-1.5);
  \porta{0}{1} \portarr{0}{-1}
  \filldraw (-.5,1.5) circle (.05);
  \filldraw (.5,1.5) circle (.05);
  \filldraw (.5,.5) circle (.05);
  \filldraw (.5,-.5) circle (.05);
  \filldraw (.5,-1.5) circle (.05);
  \filldraw (-.5,-1.5) circle (.05);
  \filldraw (-.5,-.5) circle (.05);
  \filldraw[green] (-.5,.5) circle (.1);
  \draw (-.5,1.9) node {$_0$};
\end{tikzpicture}\ ,
  &T_{[2]} T_{[1]}  \a_0 \ket{\Psi}
  =  
\begin{tikzpicture}[baseline=-2]
  \draw[blue!30, very thick] (-.5,-1.5)--(-.5,1.5)--(.5,1.5)--(.5,-1.5)--(-.5,-1.5);
  \portacs{0}{1} \portad{0}{-1}
  \filldraw (-.5,1.5) circle (.05);
  \filldraw (.5,1.5) circle (.05);
  \filldraw[red] (.5,.5) circle (.1);
  \filldraw (.5,-.5) circle (.05);
  \filldraw (.5,-1.5) circle (.05);
  \filldraw (-.5,-1.5) circle (.05);
  \filldraw (-.5,-.5) circle (.05);
  \filldraw (-.5,.5) circle (.05);
  \draw (-.5,1.9) node {$_0$};
\end{tikzpicture}\ ,  
\\
  T_{[3]} T_{[2]} T_{[1]} \a_0 \ket{\Psi}
  =  
\begin{tikzpicture}[baseline=-2]
  \draw[blue!30, very thick] (-.5,-.5)--(-.5,.5)--(2.5,.5)--(2.5,-.5)--(-.5,-.5);
  \portasr{2}{0} \portatq{0}{0}
  \filldraw[green] (1.5,.5) circle (.1);
  \filldraw (-.5,.5) circle (.05);
  \filldraw (-.5,-.5) circle (.05);
  \filldraw (.5,-.5) circle (.05);
  \filldraw (2.5,.5) circle (.05);
  \filldraw (.5,.5) circle (.05);
  \filldraw (1.5,-.5) circle (.05);
  \filldraw (2.5,-.5) circle (.05);
  \draw (.5,.9) node {$_0$};
\end{tikzpicture}\ ,
  &T \a_0 \ket{\Psi}
  =  
\begin{tikzpicture}[baseline=-2]
  \draw[blue!30, very thick] (-.5,-.5)--(-.5,.5)--(2.5,.5)--(2.5,-.5)--(-.5,-.5);
  \portadrr{0}{0} \portadr{2}{0}
  \filldraw (.5,.5) circle (.05);
  \filldraw (-.5,.5) circle (.05);
  \filldraw (-.5,-.5) circle (.05);
  \filldraw[red] (1.5,-.5) circle (.1);
  \filldraw (2.5,.5) circle (.05);
  \filldraw (1.5,.5) circle (.05);
  \filldraw (.5,-.5) circle (.05);
  \filldraw (2.5,-.5) circle (.05);
  \draw (.5,.9) node {$_0$};
\end{tikzpicture}\ ,
\end{array}
\end{equation} 
where $\tikz{\filldraw[green] (0,.6) circle (.1);} = \a^\intercal \in \A$ is the transpose.
That is, after one time step, the particle is in the antipodal location ($x=4$), so the period of the state $\a_0 \ket{\Psi}$ is $\Delta t=2$.
We have checked that the same dynamical behaviour happens for any initial location $x \in \mathbb Z_8$ of the particle.
In summary, all single-particle states generate a closed orbit of period $\Delta t=2$, which allows us to construct exact eigenstates via \eqref{eq:eigen}. 
Because of the existence of simple eigenstates, we consider the zero and one-particle subspace ``integrable".

Next, let us compare the above results with the dynamics of the following two and three-particle states
\begin{align}\label{eq:2 3 particle}
  \begin{tikzpicture}
  \draw[blue!30, very thick] (.5,6.2)--(3.5,6.2)--(3.5,5.2)--(.5,5.2)--(.5,6.2);
  \portadrr{1}{5.7} \portadr{3}{5.7}
  \filldraw[red] (.5,6.2) circle (.1);
  \filldraw[red] (1.5,6.2) circle (.1);
  \filldraw (2.5,6.2) circle (.05);
  \filldraw (3.5,5.2) circle (.05);
  \filldraw (3.5,5.2) circle (.05);
  \filldraw (2.5,5.2) circle (.05);
  \filldraw (1.5,5.2) circle (.05);
  \filldraw (.5,5.2) circle (.05);
  \end{tikzpicture}\ ,
  \qquad
  \begin{tikzpicture}
  \draw[blue!30, very thick] (.5,8.3)--(3.5,8.3)--(3.5,7.3)--(.5,7.3)--(.5,8.3);
  \portadrr{1}{7.8} \portadr{3}{7.8}
  \filldraw (.5,8.3) circle (.05);
  \filldraw[red] (1.5,8.3) circle (.1);
  \filldraw[red] (2.5,8.3) circle (.1);
  \filldraw (3.5,7.3) circle (.05);
  \filldraw (3.5,7.3) circle (.05);
  \filldraw (2.5,7.3) circle (.05);
  \filldraw (1.5,7.3) circle (.05);
  \filldraw (.5,7.3) circle (.05);
  \end{tikzpicture}\ ,
  \qquad
  \begin{tikzpicture}
  \draw[blue!30, very thick] (.5,11.5)--(3.5,11.5)--(3.5,10.5)--(.5,10.5)--(.5,11.5);
  \portadrr{1}{11} \portadr{3}{11}
  \filldraw[red] (.5,11.5) circle (.1);
  \filldraw[red] (1.5,11.5) circle (.1);
  \filldraw (2.5,11.5) circle (.05);
  \filldraw[red] (2.5,11.5) circle (.1);
  \filldraw (3.5,10.5) circle (.05);
  \filldraw (2.5,10.5) circle (.05);
  \filldraw (1.5,10.5) circle (.05);
  \filldraw (.5,10.5) circle (.05);
\end{tikzpicture}\ ,    
\end{align}
generated with a random matrix $\tikz{\filldraw[red] (0,.6) circle (.1);} = \a \in \A$ with real entries.
For each such pure state $|\Psi\rangle\! \langle\Psi |$ we numerically calculate its time average after $t$ time steps
\begin{align}
  \rho(t) = \frac 1 {t+1}  \sum_{r=0}^{t} T^{r} |\Psi\rangle\! \langle\Psi | T^{-r}\ .
\end{align}
And for each such mixed state we calculate the effective dimension of its support
\begin{align}
  d_{\mathrm{eff}}(t) = \left[ \mathrm{tr} \rho^2(t) \right]^{-1}\ ,
\end{align}
which takes into account the different weights of the eigenvalues, as in the second-order Renyi entropy via $\log_2 d_{\mathrm{eff}} = h_2 (\rho) = -\log_2 \rho^2$. For example, if a states $\rho$ is proportional to a projector $\rho^2 \propto \rho$ then $d_{\mathrm{eff}}$ is equal to the dimension of the projector. 
The effective dimension of the above two and three-particle states is plotted in Figure~\ref{fig:entropies}. We observe that the corresponding curves attain a much larger value than those of the zero and one-particle states, which have $d_{\mathrm{eff}} (\infty) =1,2$; and take a longer time to equilibrate. In addition, the lack of convergence to an integer value suggests that the corresponding evolutions do not generate a closed orbit, and hence, do not have simple eigenstates associated to them.
This establishes a sharp distinction between states with one particle or less on one side, and two particles or more on the other. 
In other words, the dynamics of a particle strongly depends on the presence of other particles, implying an interaction between the two particles. This interaction can be interpreted as gravitational back-reaction.

The fact that in figures~\ref{flat_m_cycle}, \ref{flat_m_corner} and~\ref{ads_m_cycle} the underlaying geometry (i.e. the bare tensor network) seems to be unaffected by the presence of the particle is not in contradiction with the above-mentioned back-reaction, since there is freedom on how we represent states. That is, two different underlying tensor networks might represent the same state if the matter is represented by different inserted operators in each of them.

Figure~\ref{fig:entropies} also shows that the behaviour of the states \eqref{eq:2 3 particle} is similar to that of a random state in $\H_8$. 
This opens the possibility that the dynamics in the two or more particles subspace is quantum-chaotic.
However, the results of the following subsection suggest that the contrary is true.

\begin{figure}\centering
  \includegraphics[width=89mm]{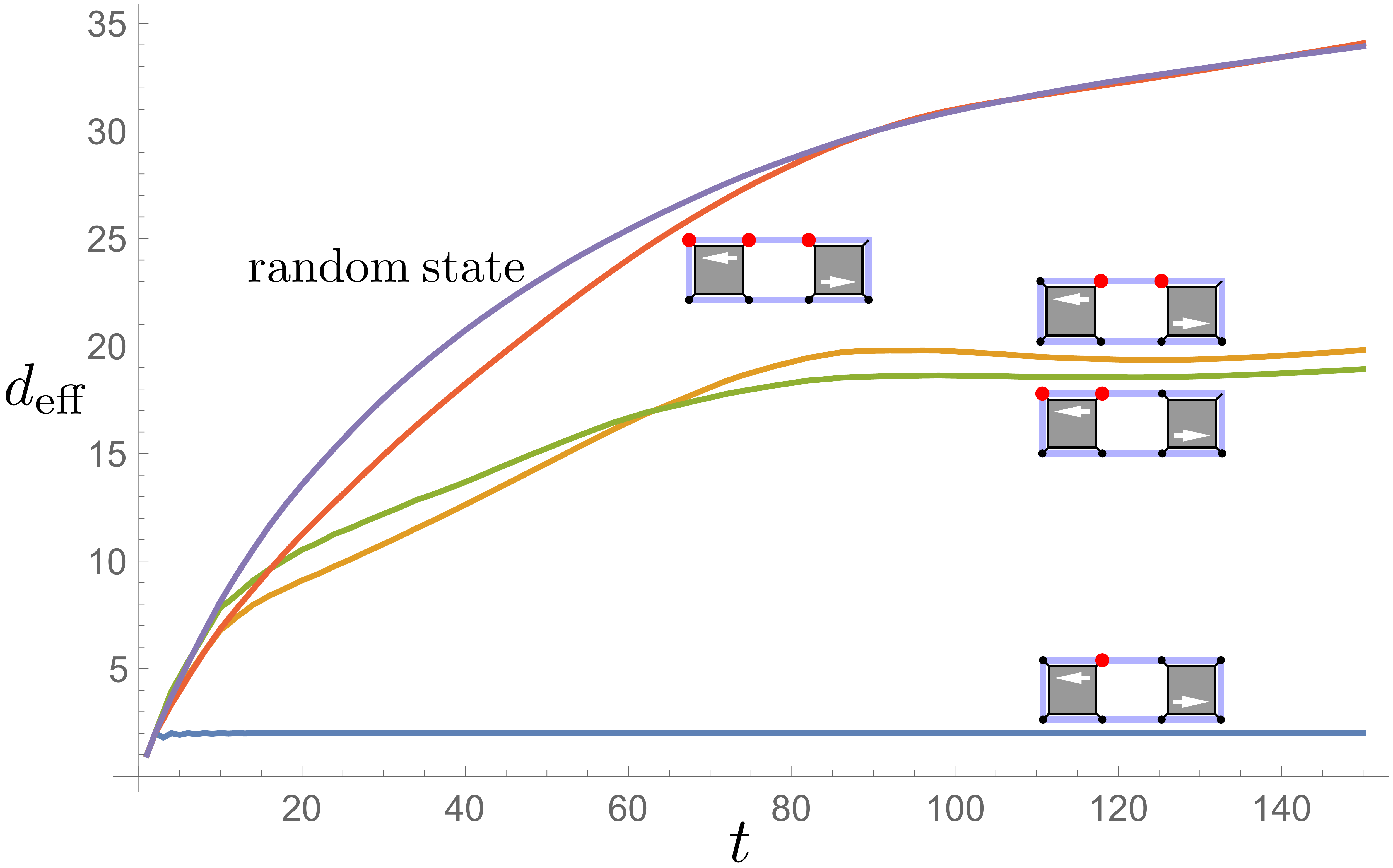}
  \vspace{-6mm}
  \caption{\textbf{Effective dimension of 1,2,3 particles.} Each curve shows the dimension $d_{\mathrm{eff}} (t)$ of the effective subspace as a function of time $t$, generated by the evolution of the tensor-network state next to it. The top curve is generated by a random state with real entries. The dynamics is generated by the four-layer circuit \eqref{eq:T gauge} with a real random dual unitary. We observe a qualitative distinction between the single-particle state and the rest. The differences between the other four states are not essential, because they vary according to the instance of random dual unitary.}
  \label{fig:entropies}
\end{figure}

\subsection{Absence of quantum chaos}

First of all, let us see that our method for sampling random dual unitaries generates an ensemble qualitatively similar to that of Haar-random unitaries (with no duality constraint). Actually, since we generate random dual unitaries with real entries, we need to compare these with Haar-random orthogonal matrices. In order to do so, we generate a set $\mathcal U$ of 85 random dual unitaries $\u \in \mathcal U$ with real entries, and we calculate the spectral form factor
\begin{align}\label{eq:sff}
  K_{\mathcal U} (t) = \frac 1 {|\mathcal U|} \sum_{\u \in \mathcal U} \left| \mathrm{tr} (\u^t) \right|^2\ .
\end{align}
The result are the red points in Figure~\ref{fig:SFF u}, which are contrasted with the form factor of the orthogonal group
\begin{align}
  K_{{\rm SO}(d)} (t) = 
  \left\{\begin{array}{ll}
    2t -t\log_2 \!\left( 1-2 t/d \right) & \mbox{ if $t<d$} \\
    2 d -t\log_2 \!\left( \frac{2 t+d}{2t-d} \right)  & \mbox{ if $t\geq d$}
  \end{array}\right.\ ,
\end{align}
with $d=q^2=9$.
We observe in Figure~\ref{fig:SFF u} that the two form factors are qualitatively similar. In particular, both display the so called ``dip" for $t<9$. 
This behaviour signals the presence of quantum chaos in random dual unitaries.

\begin{figure}\centering
  \includegraphics[width=70mm]{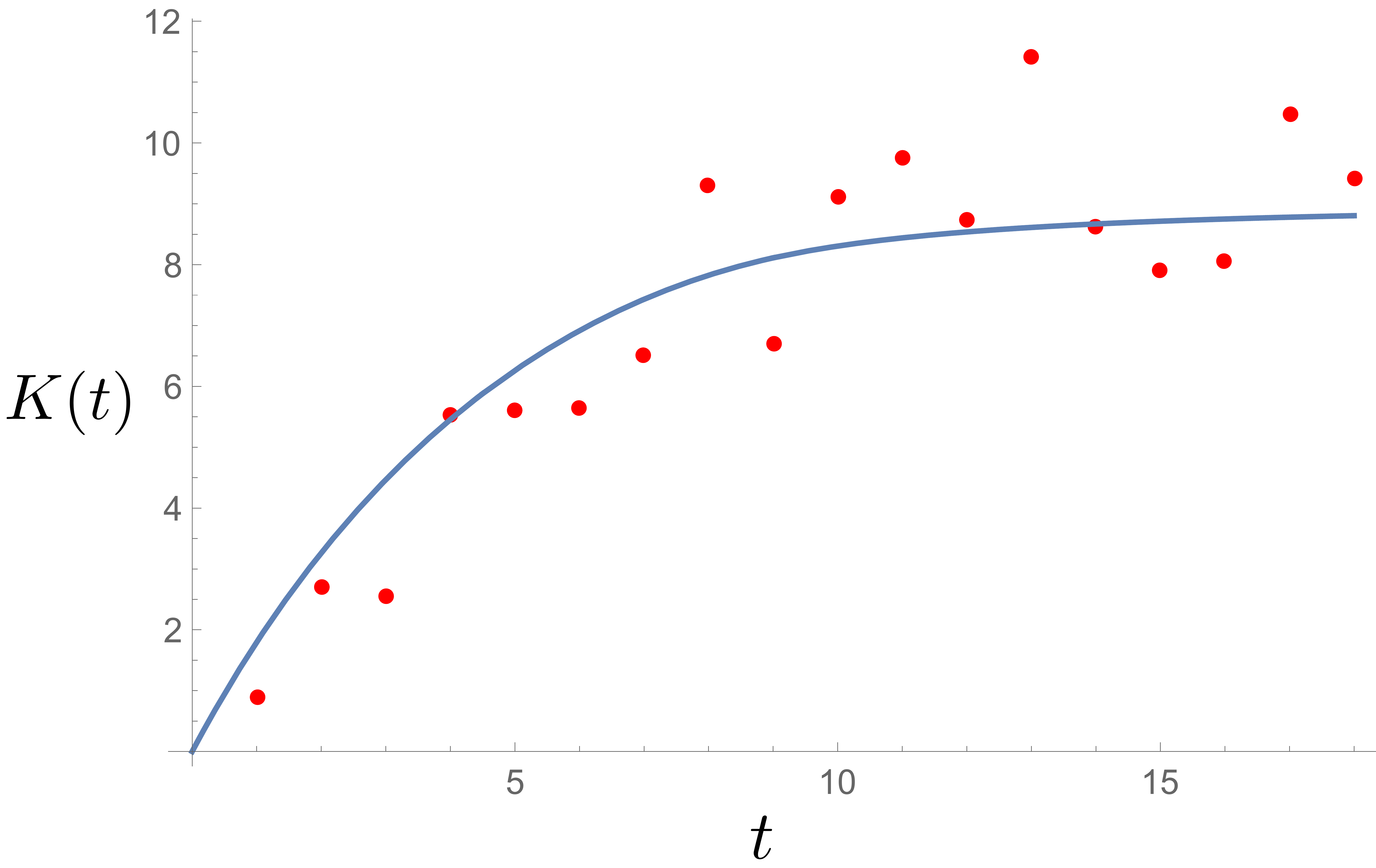}
  \vspace{-5mm}
  \caption{\textbf{Spectral form factor of dual unitary} with real entries and local dimension $q=3$, averaged over 85 random instances (red dots). Spectral form factor of the orthogonal group ${\rm SO}(d)$ of dimension $d=q^2=9$ (blue line). Both plots display the initial ``dip" and are qualitatively similar. This signals the presence of quantum chaos in random dual unitaries.}
  \label{fig:SFF u}
\end{figure}

With the above mentioned 85 instances $\u\in \mathcal U$, we construct 85 instances of the four-layer evolution operator $T$ for the chains with $n=4$ and $n=8$. Using formula \eqref{eq:sff} but replacing $\u$ by $T$, we calculate the corresponding form factors and plot them in Figure~\ref{fig:SFF T}. We observe that both form factors are essentially flat, with no ``dip", a behaviour characteristic of Poisson level statistics, which appears in ``integrable" systems. This is very surprising, because both evolution operators are constructed with unitaries $\u$ which, as discussed above, display quantum chaos. Hence, we conclude that the special structure of the conformal circuit \eqref{eq:T gauge} cancels the chaos present in the building block $\tikz[baseline]{\porta{0}{.3}}$.

It is proven in \cite{Farshi_22} that the asymptotic value of the form factor is
\begin{align}
  K(\infty) = \sum_E g_E^2\ ,
\end{align}
where $E$ are the eigenvalues of the evolution operator and $g_E$ the corresponding degeneracy.
In the absence of degeneracies ($g_E=1$) we have $K(\infty) = q^n$, which is the standard value in generic models including random unitaries.
However, the two approximately constant plots in Figure~\ref{fig:SFF T} show values for $K(t)$ much larger than the Hilbert-space dimension $q^n$.
This signals that the spectrum of $T$ is highly structured in a way that leads to large degeneration. This structure will be studied in future work.

\begin{figure}\centering
  \includegraphics[width=88mm]{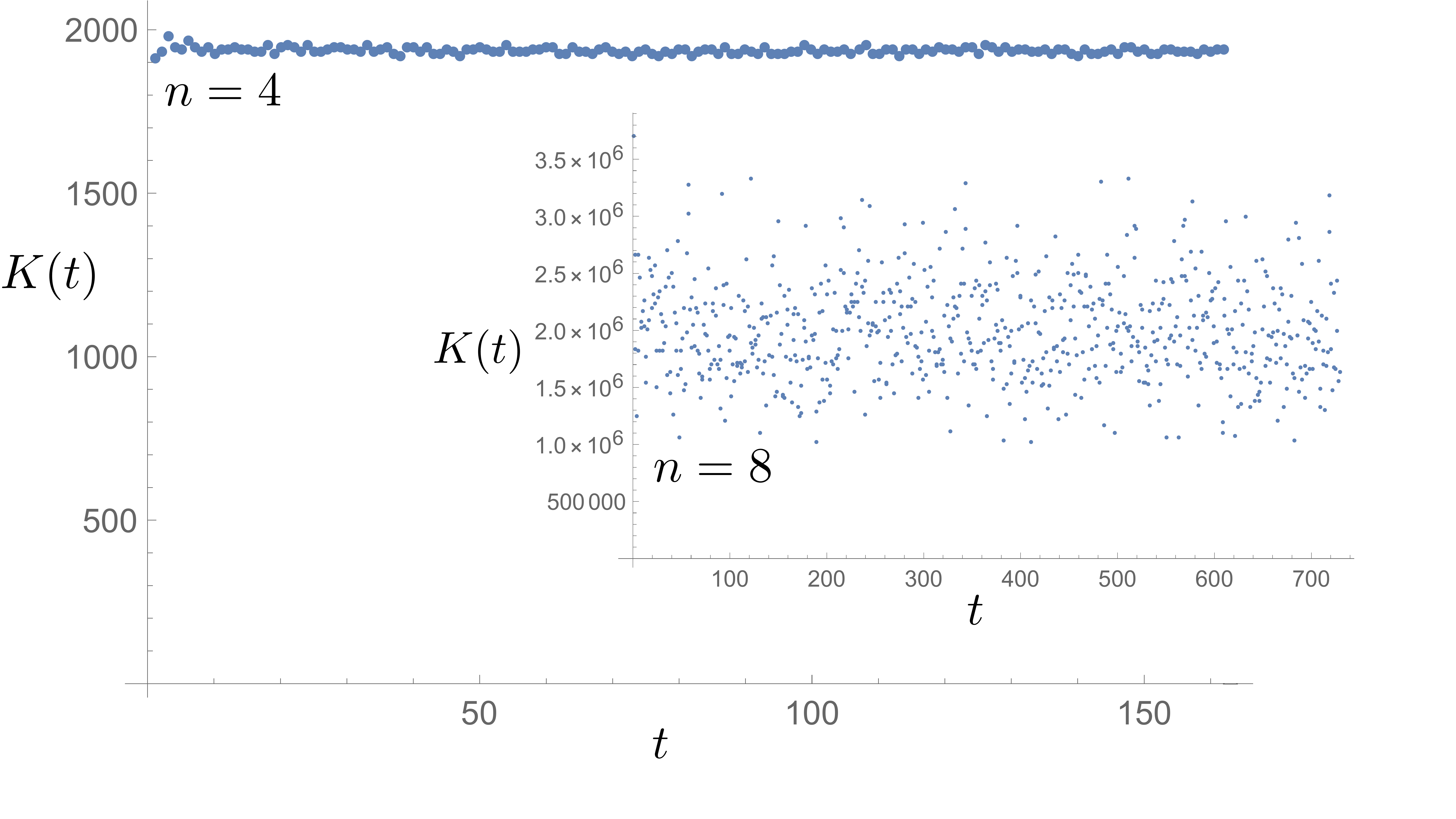}
  \vspace{-5mm}
  \caption{\textbf{Spectral form factor of the evolution operator} $T$ of four layers for the chains with $n=4$ and $n=8$. The two plots are very similar, but they look different because the $n=4$ shows all the vertical scale while the $n=8$ only shows a narrow range. Both plots are essentially flat with no ``dip", a characteristic of ``integrable" systems with Poisson level statistics.}
  \label{fig:SFF T}
\end{figure}

\section{Lorentz transformations}
\label{sec:lorentz transf}

In this section we define the operators $R_l$ and $L_l$ mentioned in Section~\ref{sub LT}, which jointly implement a contraction and a Lorentz boost towards the right and left respectively. 
The treatment of Lorentz transformations is simpler in the infinite chain, so in this section we assume $n=\infty$ and $x\in \mathbb Z$.
Let us now mention a subtle issue. In the infinite chain, global operators like $T, S, R_l, L_l: \H_\infty \to \H_\infty$ are not well-defined. Fortunately, they have a well-defined adjoint action on any operator supported on a finite region; for example, $T$ maps $\a\in \A_0$ onto
\begin{align}
  T\a T^\dagger 
  =
  \begin{tikzpicture}[baseline=-2.5]
    \portaabar{0}{-1}\portabar{0}{1}
    \cables{-1}{3.2}\cables{1}{3.2}
    \draw (-.5,-.5)--(-.5,-.3);
    \draw (-.5,.5)--(-.5,.3);
    \draw (-.5,0) node {$\a$};
    \draw (.5,-.5)--(.5,.5);
    \portacbar{-1}{2}\portacbar{1}{2}
    \portacabar{-1}{-2}\portacabar{1}{-2}
    \cables{-1}{-2}\cables{1}{-2}
    \draw (-1.5,1.5)--(-1.5,-1.5);
    \draw (1.5,1.5)--(1.5,-1.5);
  \end{tikzpicture}\ .
\end{align}
Therefore, in this section, any global operator $\H_\infty \to \H_\infty$ is understood as an adjoint action on the quasi-local algebra. 
 
\subsection{Local Lorentz contractions}

The operators $R_l, L_l$ are constructed with the building blocks $J_x, K_x$, which we call local Lorentz contractions.
For any even $x$ we define the local Lorentz right-boost contraction isometry as
\begin{align}
  J_x = q^{-\frac 1 2} \cdots
\begin{tikzpicture}[baseline=10]
  \draw[thin,color=black!10] (-1.5,-.7) -- (-1.5,3.7);
  \draw[thin,color=black!10] (-.5,-.7) -- (-.5,3.7);
  \draw[thin,color=black!10] (.5,-.7) -- (.5,3.7);
  \draw[thin,color=black!10] (1.5,-.7) -- (1.5,3.7);
  \draw[thin,color=black!10] (2.5,-.7) -- (2.5,3.7);
  \draw[thin,color=black!10] (3.5,-.7) -- (3.5,3.7);
  \draw[thin,color=black!10] (4.5,-.7) -- (4.5,3.7);
  \draw[thin,color=black!10] (5.5,-.7) -- (5.5,3.7);
  \draw[thin,color=black!10] (6.5,-.7) -- (6.5,3.7);
  \draw[thin,color=black!10] (7.5,-.7) -- (7.5,3.7);
  \portabar{0}{0}\cables{0}{0}
  \barra{-1}{0}\cables{-1}{0}
  \portacbar{1}{1}
  \portabar{2}{0}\cables{2}{0}
  \portacbar{3}{1}
  \portabar{4}{0}\cables{4}{0}
  \portacbar{5}{1}
  \portabar{6}{0} \cables{6}{0}
  \portacbar{7}{1} \cables{7}{0}
  \antibarra{0}{2}\antibarra{-1}{3}\cables{-1}{4.2}
  \antibarra{1}{2}\antibarra{0}{3}
  \antibarra{2}{2}\antibarra{1}{3}\cables{1}{4.2}
  \antibarra{3}{2}\antibarra{2}{3}
  \antibarra{4}{2}\antibarra{3}{3}\cables{3}{4.2}
  \antibarra{5}{2}\antibarra{4}{3}
  \antibarra{6}{2}\antibarra{5}{3}\cables{5}{4.2}
  \antibarra{7}{2}\antibarra{6}{3}\cables{7}{4.2}
  \antibarra{7}{3}
  \draw[thin] (-2.5,-.7) -- (-2.5,3.7);
  \draw[thin] (-3.5,-.7) -- (-3.5,3.7);
  \draw (-.4,-1) node {$_x$};
  \draw (-.4,4) node {$_x$};
\end{tikzpicture}  
  \cdots
\end{align}
It is a contraction because, as shown in \eqref{action R}, it has a non-trivial kernel. Also, as can be seen in the picture, in some sense there are two more bottom legs than top legs (although both numbers are infinite).

If we act with $J_x$ on the time-translation operator $T$ defined in \eqref{eq:2 layer} we obtain
\begin{align}
   \label{eq:JxT}
  J_x T &= q^{-\frac 1 2} \cdots
\begin{tikzpicture}[baseline=10]
  \portabar{0}{0}\cables{0}{-2}
  \barra{-1}{0}\cables{-1}{-2}
  \portacbar{1}{1}
  \portabar{2}{0}\cables{2}{-2}
  \portacbar{3}{1}
  \portabar{4}{0}\cables{4}{-2}
  \portacbar{5}{1}
  \portabar{6}{0} 
  \cables{6}{-2}
  \antibarra{0}{2}\antibarra{-1}{3}\cables{-1}{4.2}
  \antibarra{1}{2}\antibarra{0}{3}
  \antibarra{2}{2}\antibarra{1}{3}\cables{1}{4.2}
  \antibarra{3}{2}\antibarra{2}{3}
  \antibarra{4}{2}\antibarra{3}{3}\cables{3}{4.2}
  \antibarra{5}{2}\antibarra{4}{3}
  \antibarra{6}{2}\antibarra{5}{3}\cables{5}{4.2}
  \antibarra{6}{3}\cables{6}{4.2}
  \draw[thin] (-2.5,-.5) -- (-2.5,3.7);
  \draw[thin] (-3.5,-.5) -- (-3.5,3.7);
  \cables{-3}{-2}\portacbar{-3}{-1}
  \portabar{-2}{-2} \portacbar{-1}{-1}
  \portabar{0}{-2} \portacbar{1}{-1}
  \portabar{2}{-2} \portacbar{3}{-1}
  \portabar{4}{-2} \portacbar{5}{-1}
  \portabar{6}{-2} 
\end{tikzpicture}  
  \cdots
  \\ \nonumber
  &= q^{-\frac 1 2} \cdots
\begin{tikzpicture}[baseline=10]
  \cables{0}{-2}
  \barra{0}{-1}\cables{-1}{-2}
  \portacbar{1}{1}
  \portabar{2}{0}\cables{2}{-2}
  \portacbar{3}{1}
  \portabar{4}{0}\cables{4}{-2}
  \portacbar{5}{1}
  \portabar{6}{0} 
  \cables{6}{-2}
  \barra{-1}{-1} \barra{0}{0}
  \antibarra{0}{2}\antibarra{-1}{3}\cables{-1}{4.2}
  \antibarra{1}{2}\antibarra{0}{3}
  \antibarra{2}{2}\antibarra{1}{3}\cables{1}{4.2}
  \antibarra{3}{2}\antibarra{2}{3}
  \antibarra{4}{2}\antibarra{3}{3}\cables{3}{4.2}
  \antibarra{5}{2}\antibarra{4}{3}
  \antibarra{6}{2}\antibarra{5}{3}\cables{5}{4.2}
  \antibarra{6}{3}\cables{6}{4.2}
  \draw[thin] (-2.5,-.5) -- (-2.5,3.7);
  \draw[thin] (-3.5,-.5) -- (-3.5,3.7);
  \cables{-3}{-2}\portacbar{-3}{-1}
  \portabar{-2}{-2} 
  \portabar{0}{-2} \portacbar{1}{-1}
  \portabar{2}{-2} \portacbar{3}{-1}
  \portabar{4}{-2} \portacbar{5}{-1}
  \portabar{6}{-2} 
\end{tikzpicture}   
  \cdots
  \\ \nonumber 
  &=q^{-\frac 1 2} \cdots
\begin{tikzpicture}[baseline=10]
  \cables{0}{-2} \cables{-1}{-2}
  \portacbar{1}{1}
  \portabar{2}{0}\cables{2}{-2}
  \portacbar{3}{1}
  \portabar{4}{0}\cables{4}{-2}
  \portacbar{5}{1}
  \portabar{6}{0} 
  \cables{6}{-2}
  \barra{-1}{-1} \barra{0}{0} \barra{1}{-2}
  \antibarra{0}{2}\antibarra{-1}{3}\cables{-1}{4.2}
  \antibarra{1}{2}\antibarra{0}{3}
  \antibarra{2}{2}\antibarra{1}{3}\cables{1}{4.2}
  \antibarra{3}{2}\antibarra{2}{3}
  \antibarra{4}{2}\antibarra{3}{3}\cables{3}{4.2}
  \antibarra{5}{2}\antibarra{4}{3}
  \antibarra{6}{2}\antibarra{5}{3}\cables{5}{4.2}
  \antibarra{6}{3}\cables{6}{4.2}
  \draw[thin] (-2.5,-.5) -- (-2.5,3.7);
  \draw[thin] (-3.5,-.5) -- (-3.5,3.7);
  \cables{-3}{-2}\portacbar{-3}{-1}
  \portabar{-2}{-2} 
  \barra{0}{-2} \barra{1}{-1}
  \portabar{2}{-2} \portacbar{3}{-1}
  \portabar{4}{-2} \portacbar{5}{-1}
  \portabar{6}{-2} 
\end{tikzpicture}   
  \cdots 
  \\ \nonumber
  &= q^{-\frac 1 2} \cdots
\begin{tikzpicture}[baseline=22]
  \barra{1}{0}
  \portabar{2}{0}\cables{1}{0}
  \portacbar{3}{1}
  \portabar{4}{0}\cables{3}{0}
  \portacbar{5}{1}
  \portabar{6}{0}\cables{5}{0}
  \cables{6}{0}
  \cables{-3}{6.2}
  \cables{-1}{6.2}
  \antibarra{2}{2}\antibarra{1}{3}\cables{1}{6.2}
  \antibarra{3}{2}\antibarra{2}{3}
  \antibarra{4}{2}\antibarra{3}{3}\cables{3}{6.2}
  \antibarra{5}{2}\antibarra{4}{3}
  \antibarra{6}{2}\antibarra{5}{3}\cables{5}{6.2}
  \antibarra{6}{3}\cables{6}{6.2}
  \draw[thin] (-2.5,-.7) -- (-2.5,3.5);
  \draw[thin] (-3.5,-.7) -- (-3.5,3.5);
  \draw[thin] (-1.5,-.7) -- (-1.5,3.5);
  \draw[thin] (-.5,-.7) -- (-.5,3.5);
  \portacbar{-3}{5}
  \portabar{-2}{4}\portacbar{-1}{5}
  \portabar{0}{4} \portacbar{1}{5}
  \portabar{2}{4} \portacbar{3}{5}
  \portabar{4}{4} \portacbar{5}{5}
  \portabar{6}{4} 
  \draw (-.4,-1) node {$_x$};
\end{tikzpicture}  
  \cdots = T J_{x+2}
\end{align}
which can be summarised by the following equality
\begin{align}\label{comm:LST}
  T J_x T^\dagger &= J_{x-2}\ ,
  \\ \label{comm:LST 2}
  S J_x S^\dagger &= J_{x+2}\ .
\end{align}
The second equality is just the application of the space-translation operator $S$.
Using these identities and the dual-unitary constraints, we can calculate the action of $J_x$ on a (trace-less) operator $\a_y\in \A_y$ at an arbitrary location $y$,
\begin{align}\label{action R}
  J_x \a_y J_x^\dagger = \left\{
  \begin{array}{ll}
    \a_y & \mbox{ if } y\leq x-2
    \\
    T\a_{y-2} T^\dagger & \mbox{ if } y\geq x+2
    \\
    0 & \mbox{ if } y=x\pm1
    \\
    \eta_+ (\a)_{(x-1, x)} & \mbox{ if } y=x
  \end{array}\right. ,
\end{align}
where we define the completely-positive map
\begin{align}
  \label{def:eta} 
  \eta_+(\a)
  &=\v_{x-1} \Omega_+(\a)_{x-1} \v_{x-1}^\dagger
  \\ \nonumber
  &= q^{-1}
  \v_{x-1}\mathrm{tr}_{x-2} (\u_{x-2} \a_{x-2} \u_{x-2}^\dagger) \v_{x-1}^\dagger 
  \in \A_{x-1, x} \ .
\end{align}

Analogously, for any odd $x$ we define the local Lorentz left-boost contraction as
\begin{align}
  K_x = q^{-\frac 1 2} \cdots
\begin{tikzpicture}[baseline=10]
  \draw[thin,color=black!10] (1.5,-.7) -- (1.5,3.7);
  \draw[thin,color=black!10] (.5,-.7) -- (.5,3.7);
  \draw[thin,color=black!10] (-.5,-.7) -- (-.5,3.7);
  \draw[thin,color=black!10] (-1.5,-.7) -- (-1.5,3.7);
  \draw[thin,color=black!10] (-2.5,-.7) -- (-2.5,3.7);
  \draw[thin,color=black!10] (-3.5,-.7) -- (-3.5,3.7);
  \draw[thin,color=black!10] (-4.5,-.7) -- (-4.5,3.7);
  \draw[thin,color=black!10] (-5.5,-.7) -- (-5.5,3.7);
  \draw[thin,color=black!10] (-6.5,-.7) -- (-6.5,3.7);
  \draw[thin,color=black!10] (-7.5,-.7) -- (-7.5,3.7);
  \portabar{0}{0}\cables{0}{0}
  \antibarra{1}{0}\cables{1}{0}
  \portacbar{-1}{1}
  \portabar{-2}{0}\cables{-2}{0}
  \portacbar{-3}{1}
  \portabar{-4}{0}\cables{-4}{0}
  \portacbar{-5}{1}
  \portabar{-6}{0} \cables{-6}{0}
  \portacbar{-7}{1} \cables{-7}{0}
  \barra{0}{2}\barra{1}{3}\cables{1}{4.2}
  \barra{-1}{2}\barra{0}{3}
  \barra{-2}{2}\barra{-1}{3}\cables{-1}{4.2}
  \barra{-3}{2}\barra{-2}{3}
  \barra{-4}{2}\barra{-3}{3}\cables{-3}{4.2}
  \barra{-5}{2}\barra{-4}{3}
  \barra{-6}{2}\barra{-5}{3}\cables{-5}{4.2}
  \barra{-7}{2}\barra{-6}{3}\cables{-7}{4.2}
  \barra{-7}{3}
  \draw[thin] (2.5,-.7) -- (2.5,3.7);
  \draw[thin] (3.5,-.7) -- (3.5,3.7);
  \draw (.5,-1) node {$_x$};
  \draw (.5,4) node {$_x$};
\end{tikzpicture}  
  \cdots
\end{align}
By proceeding as in \eqref{eq:JxT} we obtain
\begin{align}\label{comm:KST}
  T K_x T^\dagger &= K_{x+2}\ ,
  \\ \label{comm:KST 2}
  S K_x S^\dagger &= K_{x+2}\ .
\end{align}
The action of $K_x$ on a (trace-less) operator $\a_y\in \A_y$ at an arbitrary location $y$ is 
\begin{align}
  K_x \a_y K_x^\dagger = \left\{
  \begin{array}{ll}
    T\a_{y+2}T^\dagger & \mbox{ if } y\leq x-2
    \\
    \a_y & \mbox{ if } y\geq x+2
    \\
    0 & \mbox{ if } y=x\pm1
    \\
    \eta_- (\a)_{(x,x+1)} & \mbox{ if } y=x
  \end{array}\right. ,
\end{align}
where we define completely-positive map
\begin{align}
  \nonumber
  \eta_-(\a)
  &=\v_{x} \Omega_-(\a)_{x+1} \v_{x}^\dagger
  \\ &= q^{-1}
  \v_{x}\mathrm{tr}_{x+2} (\u_{x+1} \a_{x+2} \u_{x+1}^\dagger) \v_{x}^\dagger 
  \in \A_{x, x+1} \ .
\end{align}

\subsection{Global Lorentz contractions}

For any even integer $l>0$ we define the the (global) Lorentz right-boost contraction isometry as
\begin{align}\label{def:B_l}
  R_l &= \cdots J_{5l}\, J_{3l}\, J_{l}\, \tilde J_{-l}\, \tilde J_{-3l}\, \tilde J_{-5 l} \cdots  
\end{align}
where we also define $\tilde J_x = S\, T^{-1} J_{x}$.
The reason for using $\tilde J_x$ instead of $J_x$ when $x<0$ is that $\tilde J_x$ is corrected with a spacetime translation so that $B_l$ acts trivially around the origin $x\in [-(l-2),l-2] \subseteq \mathbb Z$. 
Recall that pure Lorentz transformations fix the origin $(x,t)=(0,0)$.
A calculation similar to that in \eqref{eq:JxT} yields
\begin{align}
  \tilde J_x = q^{-\frac 1 2} \cdots
\begin{tikzpicture}[baseline=10]
  \draw[thin,color=black!10] (1.5,-.7) -- (1.5,3.7);
  \draw[thin,color=black!10] (.5,-.7) -- (.5,3.7);
  \draw[thin,color=black!10] (-.5,-.7) -- (-.5,3.7);
  \draw[thin,color=black!10] (-1.5,-.7) -- (-1.5,3.7);
  \draw[thin,color=black!10] (-2.5,-.7) -- (-2.5,3.7);
  \draw[thin,color=black!10] (-3.5,-.7) -- (-3.5,3.7);
  \draw[thin,color=black!10] (-4.5,-.7) -- (-4.5,3.7);
  \draw[thin,color=black!10] (-5.5,-.7) -- (-5.5,3.7);
  \draw[thin,color=black!10] (-6.5,-.7) -- (-6.5,3.7);
  \draw[thin,color=black!10] (-7.5,-.7) -- (-7.5,3.7);
  \portacabar{0}{0}\cables{0}{0}
  \antibarra{1}{0}\cables{1}{0}
  \portaabar{-1}{1}
  \portacabar{-2}{0}\cables{-2}{0}
  \portaabar{-3}{1}
  \portacabar{-4}{0}\cables{-4}{0}
  \portaabar{-5}{1}
  \portacabar{-6}{0} \cables{-6}{0}
  \portaabar{-7}{1} \cables{-7}{0}
  \barra{0}{2}\barra{1}{3}\cables{1}{4.2}
  \barra{-1}{2}\barra{0}{3}
  \barra{-2}{2}\barra{-1}{3}\cables{-1}{4.2}
  \barra{-3}{2}\barra{-2}{3}
  \barra{-4}{2}\barra{-3}{3}\cables{-3}{4.2}
  \barra{-5}{2}\barra{-4}{3}
  \barra{-6}{2}\barra{-5}{3}\cables{-5}{4.2}
  \barra{-7}{2}\barra{-6}{3}\cables{-7}{4.2}
  \barra{-7}{3}
  \draw[thin] (2.5,-.7) -- (2.5,3.7);
  \draw[thin] (3.5,-.7) -- (3.5,3.7);
  \draw (.5,-1) node {$_x$};
  \draw (.5,4) node {$_x$};
\end{tikzpicture}  
  \cdots
\end{align}

Analogously, for any odd integer $l>0$, we define the the (global) Lorentz left-boost contraction isometry as
\begin{align}\label{def:L_l}
  L_l &= 
  \cdots K_{-5l}\, K_{-3l}\, K_{-l}\, \tilde K_{l}\, \tilde K_{3l}\, \tilde K_{5 l} \cdots
\end{align}
where we also define $\tilde K_x = S\, T^{-1} K_{x}$ so that the origin is fixed.
The following lemma specifies the action of $R_l$ and $L_l$ on any operator of the form $\a(x,t) = T^t \a_x T^{-t}$.

\begin{Lemma}\label{lemma:lorentz}
If $l$ is an even positive integer, $\a\in \A$ a local operator and $x$ a location such that $|x-lm|>1$ for all odd integers $m$, then
\begin{align}\label{eq:R trans f}
  R_l \a(x,t) R_l^\dagger
  =
  \a \big[ x-2f_l (x-2t),t+f_l (x-2t) \big]\ ,
\end{align}
where we define the function $f_l(x)= \left\lfloor\frac{x-l}{2l} \right\rfloor +1$, plotted in Figure \ref{fig:f(x)}.
If $l$ is an odd positive integer, $\a\in \A$ a local operator and $x$ a location such that $|x-lm|>1$ for all odd integers $m$, then
\begin{align}\label{eq:R trans 2}
  L_l \a(x,t) L_l^\dagger
  =
  \a \big[ x-2f_l (x+2t),t-f_l (x+2t) \big]\ .
\end{align}
\end{Lemma}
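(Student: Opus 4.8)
The plan is to reduce to a single spatial site at time zero, record how each local contraction $J_x$ (and $\tilde J_x$) acts on an already-evolved operator, and then compose. Since $R_l$ commutes with the diagonal translation, $R_l(ST)=(ST)R_l$, and conjugation by $ST$ sends $\a(x,t)\mapsto\a(x+2,t+1)$, we have $\a(x,t)=(ST)^t\,\a_{x-2t}\,(ST)^{-t}$ and hence
\begin{equation}
R_l\,\a(x,t)\,R_l^\dagger=(ST)^t\big(R_l\,\a_{x-2t}\,R_l^\dagger\big)(ST)^{-t}.
\end{equation}
Writing $u=x-2t$ for the right-moving light-cone coordinate, it suffices to prove the $t=0$ identity $R_l\,\a_u\,R_l^\dagger=\a\big(u-2f_l(u),f_l(u)\big)$; applying $(ST)^t$ to it reproduces \eqref{eq:R trans f}. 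This reduction is also why the hypothesis is naturally a statement about the invariant $u=x-2t$ read off at the reduced point.

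First I would record the single-factor action on an operator already of the form $\a(x',t')$. Using $J_{x_0}T^{t'}=T^{t'}J_{x_0+2t'}$ (a consequence of \eqref{comm:LST}) together with \eqref{action R}, one finds, with $u'=x'-2t'$,
\begin{equation}
J_{x_0}\,\a(x',t')\,J_{x_0}^\dagger=
\begin{cases}
\a(x',t') & u'\le x_0-2,\\
\a(x'-2,t'+1) & u'\ge x_0+2,
\end{cases}
\end{equation}
and the singular values $0$ or $\eta_+(\a)$ when $u'\in\{x_0,x_0\pm1\}$; the analogous computation for $\tilde J_{x_0}=S\,T^{-1}J_{x_0}$ gives the mirror rule (trivial for $u'\ge x_0+2$, and $u'\mapsto u'+4$ for $u'\le x_0-2$). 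The structural point is that, off the singular set, \emph{every} factor preserves the form $\a(\cdot,\cdot)$, leaves $v'=x'+2t'$ invariant, and changes $u'$ by exactly $\mp4$. The whole computation therefore collapses to a one-dimensional dynamics on $u$.

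Next I would compose the factors and track $u$. For $u>0$ only the $J_{(2k+1)l}$ contribute ($\tilde J_{x_0}$ with $x_0<0$ acts trivially on $u>0$). Processing them from the largest relevant threshold inward, and using \eqref{comm:LST}--\eqref{comm:LST 2} to keep the bookkeeping consistent, I would show by induction on the cell index $k$ fixed by $(2k-1)l<u<(2k+1)l$ that exactly the $k$ factors with thresholds $(2k-1)l,\dots,3l,l$ fire, each sending $u\mapsto u-4$; the net map is then $u\mapsto u-4k=u-4f_l(u)$, and, $v$ being conserved, this is precisely $(u,0)\mapsto(u-2f_l(u),f_l(u))$. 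The range $|u|<l$ gives $f_l=0$, so $R_l$ is trivial near the origin (the purpose of the $\tilde J$-correction), and $u<0$ is symmetric with the $\tilde J$'s and $f_l\le0$.

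The main obstacle is precisely this inductive bookkeeping: I must verify that each intermediate coordinate $u-4j$ stays off the singular set $\{x_0,x_0\pm1\}$ of every threshold it passes, and that the number of firings is exactly $f_l(u)$, neither more nor fewer. A short estimate shows that at the tightest starting point the gap of the intermediate value above the next threshold is $2j(l-2)+2\ge2$, which is exactly where the hypothesis $|x-lm|>1$ (and the evenness/positivity of $l$) is used, the bound being tight at $l=2$. Finally, the statement for $L_l$ with odd $l$ follows by the mirror argument: from \eqref{comm:KST} one gets $K_{x_0}T^{t'}=T^{t'}K_{x_0-2t'}$, so the relevant invariant is now the left-moving coordinate $v=x+2t$, and the same induction yields $v\mapsto v-4f_l(v)$ with $u$ conserved, which is exactly \eqref{eq:R trans 2}.
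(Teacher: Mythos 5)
Your overall strategy coincides with the paper's own proof: reduce to $t=0$ using $R_l(ST)=(ST)R_l$, apply the single-factor actions \eqref{action R} and \eqref{eq:tildR} to the factors of \eqref{def:B_l}, count the factors that ``fire'' to obtain $f_l$, and treat $L_l$ as the mirror image. Your single-factor rule (fire iff $u'\ge x_0+2$, with $u'\mapsto u'-4$ and $v'$ conserved) is correctly derived from \eqref{action R} together with the printed relation \eqref{comm:LST}, and your explicit attention to the singular set is more careful than the paper, which compresses the whole composition into ``the last equality follows from \eqref{action R}''.

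However, there is a genuine gap in the composition step, and it is not cosmetic. Conjugation by $R_l=\cdots J_{5l}J_{3l}J_l\tilde J_{-l}\cdots$ must be evaluated inside-out, since $(AB)\,\a\,(AB)^\dagger=A(B\a B^\dagger)A^\dagger$: the factor $J_l$ acts on $\a_u$ first, then $J_{3l}$, then $J_{5l}$. Your ``processing from the largest relevant threshold inward'' computes the conjugation action of the \emph{reversed} product, and no commutation relations among the $J_{lm}$ themselves are established that would license this reordering. The order matters under the rule you derived: processed smallest-first, the $(j+1)$-th factor $J_{(2j+1)l}$ meets the coordinate $u-4j$, so by your own rule it fires only when $u-4j\ge (2j+1)l+2$, i.e.\ $u\ge(2j+1)(l+2)$ --- effective thresholds at odd multiples of $l+2$, not of $l$. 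At $u=3l+2$ (allowed by the premise, with $f_l(u)=2$) the second factor then fails to fire and the output is $\a(u-2,1)$ rather than $\a(u-4,2)$, contradicting \eqref{eq:R trans f}; your estimate $2j(l-2)+2\ge 2$ is valid only for the reversed order. To be fair, this tension is inherited from the paper itself: its proof also evaluates inside-out (as the product forces) and simply asserts the $f_l$ count, which is incompatible with \eqref{eq:JxT}--\eqref{comm:LST} as printed. The lemma, its premise $|x-lm|>1$, and the forced inside-out order are mutually consistent only if the index shift goes the other way, $J_xT=TJ_{x-2}$ (equivalently $TJ_xT^\dagger=J_{x+2}$). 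With that sign the firing condition of every factor is expressed through the conserved left-moving coordinate: each $J_{(2j+1)l}$ fires iff $x\ge(2j+1)l+2$, the singular sets are exactly $|x-(2j+1)l|\le 1$, and no inductive estimate is needed at all; the same remark applies to your mirror argument for $L_l$. So the repair is to rerun your induction in the licensed (inside-out) order with the corrected sign; as written, your bookkeeping rests on an evaluation order that the definition \eqref{def:B_l} does not permit.
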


\begin{figure}
  \centering
  \includegraphics[width=8cm]{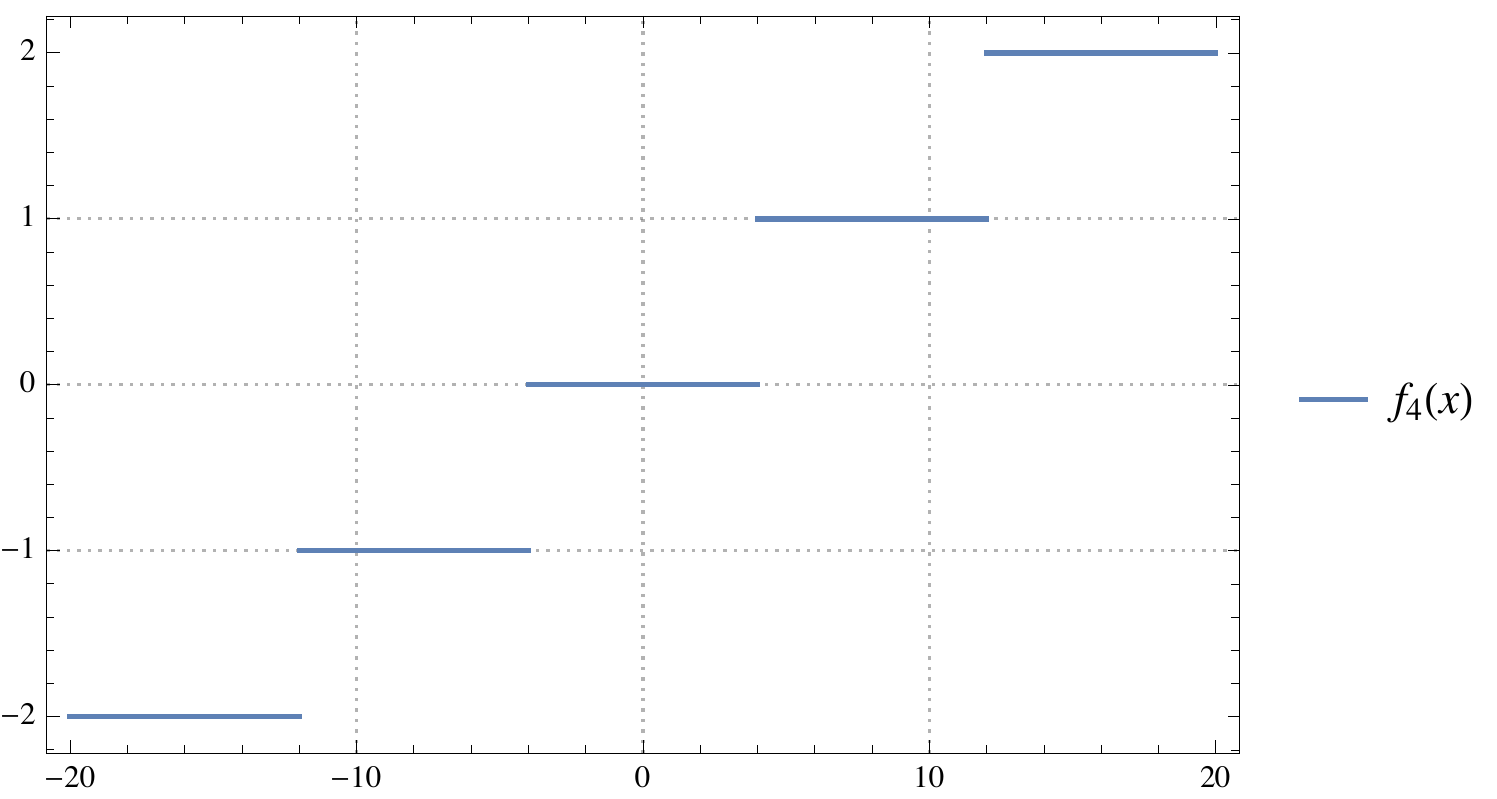}    
  \caption{\textbf{Plot of the function} $f_l (x)$ for $l=4$.\hspace{15mm}}
  \label{fig:f(x)}
\end{figure}

\noindent
We can write the coordinate transformations of the lemma as
\begin{align}
  \left.\begin{array}{l}
    x' = x-2f_l (x \mp 2t) 
    \\
    t' = t \pm f_l (x \mp 2t)
  \end{array}\right\}\ ,
\end{align}
where the upper and lower signs correspond to the right and left boosts respectively.
In the limit $|x|\gg l$ we can use the approximation 
\begin{align}
  f_l (x) \approx \frac x {2l}\ ,
\end{align}
to write the coordinates transformation as
\begin{align}\label{B-transf}
  \left.\begin{array}{l}
    x' \approx \left(1-\frac 1 l \right) x \pm \frac 2 l t
    \\
    t' \approx \left(1-\frac 1 l \right) t \pm \frac 1 {2l} x    
  \end{array}\right\}\ .
\end{align}
Note that, independently of the sign $\pm$, this transformation preserves Minkowski's metric up to a scale factor
\begin{align}
  (2t')^2-x'^2 = \left(1-\frac 2 l \right) \left[ (2t)^2-x^2 \right]\ .
\end{align}
(Recall that, in this model, the speed of light is $c=2$.)
We can also obtain the velocity of the Lorentz boost in \eqref{B-transf} by first, undoing the contraction by dividing $(x',t')$ by the scale factor $\sqrt{1-1/l}$, and second, comparing the resulting transformation to a standard Lorentz boost. This results in the velocity
\begin{align}
  v = \frac {\pm 2} {\sqrt{4-2 l +l^2}}\ .
\end{align}

In Lemma~\ref{lemma:lorentz}, the premise $|x-lm|>1$ warrants that $R_l$ and $L_l$ perform a pure spacetime transformation, leaving the local operator $\a(x,t)$ unaltered (apart from evolving it in time). On the other hand, when $x=lm\pm 1$ for some odd $m$, we have that $R_l \a(x,t) R_l^{\dagger} =0$ as a consequence of \eqref{action R}. And when $x=lm$ for some odd $m$, in addition to a spacetime transformation $(x,t) \mapsto (x',t')$, the local operator $\a$ is processed by the complete positive maps $\eta_\pm$ and $\tilde\eta_\pm$ defined in \eqref{def:eta} and \eqref{def:tildeeta}.
However, in the limit $l\gg 1$, most locations $(x,t)$ satisfy the premise.

\begin{proof}[Proof of Lemma~\ref{lemma:lorentz}]
In order to prove \eqref{eq:R trans f} we need to write the action of $\tilde J_x = S\, T^{-1} J_{x}$ on a (trace-less) operator $\a_y\in \A_y$ at an arbitrary location $y$, 
\begin{align}\label{eq:tildR}
  \tilde J_x \a_{y} \tilde J_x^\dagger = 
  \left\{
  \begin{array}{ll}
    T^\dagger \a_{y+2} T & \mbox{ if } y\leq x-2
    \\
    \a_y & \mbox{ if } y\geq x+2    
    \\
    0 & \mbox{ if } y=x\pm 1
    \\
    \tilde\eta_+ (\a_x) & \mbox{ if } y=x
  \end{array}  \right.
\end{align}
where
\begin{align}
  \label{def:tildeeta}
  \tilde \eta_+(\a)
  &=\u_{x-2}^\dagger \Omega_+(\a)_{x-1} \u_{x-2}
  \\  \nonumber
  &= q^{-1}
  \v_{x-1}\mathrm{tr}_{x-2} (\u_{x-2} \a_{x-2} \u_{x-2}^\dagger) \v_{x-1}^\dagger 
  \in \A_{x-2, x-1} \ .
\end{align}
The above has been obtained by applying the translation $S\, T^{-1}$ to the action \eqref{action R}.
Importantly, the premise of the lemma ($|x-lm|>1$ for all odd numbers $m$) implies that only the first two cases in \eqref{action R} and \eqref{eq:tildR} are relevant, which simplifies this proof.

In what follows we analyse the action of $B_l$ on a local operator $\a_x$ for the three cases where $f_l (x)$ is equal, larger or smaller than zero, separately.
If $f_l (x) =0$ then $\a_x$ commutes with all operators $J_{lm}$ and $\tilde J_{-lm}$ where $m$ is a positive odd integer, therefore \eqref{def:B_l} implies $R_l \a_x R_l^\dagger =\a_x$. 
If $f_l(x) \geq 1$ then $\a_x$ commutes with all operators $\tilde J_{-lm}$ and $J_{lm}$, except for the $J_{lm}$ with $m\in \{1, 3, \dots, 2f_l(x)-1 \}$. Note that $f_l (x)$ counts the number of non-commuting factors in $B_l$.
This implies
\begin{align}
  \nonumber
  R_l \a_x R_l^\dagger 
  &=
  (J_{[2f_l(x)-1]l} \cdots J_{l}) \a_x 
  (J_{[2f_l(x)-1]l} \cdots J_{l})^\dagger
  \\ \label{RaR+} &= 
  \a[x-2 f_l(x), f_l(x)] \ ,
\end{align}
where the last equality follows from \eqref{action R}.
If $f_l (x) \leq -1$ then the operators which do not commute with $\a_x$ are $\tilde J_{ml}$ with $m\in \{2f_l(x)+2, \ldots, -3, -1\}$. Note that $|f_l (x)|$ counts the number of non-commuting factors in $B_l$.
Simlarly, we obtain
\begin{align}
  \nonumber
  R_l \a_x R_l^\dagger 
  &=
  (\tilde J_{[2f_l(x)+1]l} \cdots \tilde J_{-l}) \a_x
  (\tilde J_{[2f_l(x)+1]l} \cdots \tilde J_{-l})^\dagger 
  \\ \label{RaR-} &= 
  \a[x-2f_l(x),f_l(x)]\ ,
\end{align}
where the last equality follows from \eqref{eq:tildR}.

Finally, we invoke (\ref{comm:LST}-\ref{comm:LST 2}) to obtain the algebraic identity
\begin{align}
  R_l (ST) = (ST) R_l\ ,
\end{align}
which allows to generalise \eqref{RaR+} and \eqref{RaR-} to the case $\a(x,t)$ with $t\neq 0$,
\begin{align}
  \nonumber
  & \hspace{-5mm} R_l \a(x,t) R_l^\dagger
  \\ \nonumber&= 
  (B_l S^t T^t) \a(x-2t,0) (B_l S^t T^t)^\dagger
  \\ \nonumber &=
  (S^t T^t B_l) \a_{x-2t} (B_l S^t T^t)^\dagger
  \\ \nonumber &=
  (S^t T^t) \a[x-2t-2f_l(x-2t), f_l (x-2t)] (S^t T^t)^\dagger
  \\ &= 
  \a[x-2f_l (x-2t),t+f_l (x-2t)]\ .
\end{align}  
This concludes the proof of \eqref{eq:R trans f}.

In order to prove \eqref{eq:R trans 2} we proceed analogously, but this time, we use the action of $\tilde K_x = S^{-1} T^{-1} K_x$ on an arbitrary local operator
\begin{align}\label{eq:tildK}
  \tilde K_x \a_{y} \tilde K_x^\dagger = 
  \left\{
  \begin{array}{ll}
    \a_y & \mbox{ if } y\leq x-2
    \\
    T^\dagger \a_{y-2} T & \mbox{ if } y\geq x+2    
    \\
    0 & \mbox{ if } y=x\pm 1
    \\
    \tilde\eta_- (\a_x) & \mbox{ if } y=x
  \end{array}  \right.
\end{align}
where
\begin{align}
  \tilde \eta_-(\a)
  &=\u_{x+1}^\dagger \Omega_- (\a)_{x+1} \u_{x+1}
  \\   \nonumber
  &= q^{-1}
  \v_{x-1}\mathrm{tr}_{x-2} (\u_{x-2} \a_{x-2} \u_{x-2}^\dagger) \v_{x-1}^\dagger 
  \in \A_{x+1, x+2} \ .
\end{align}
Also, we use the fact that (\ref{comm:KST}-\ref{comm:KST 2}) imply
\begin{align}
  L_l (S^{-1}T) = (S^{-1}T) L_l\ ,
\end{align}
to complete the proof.
\end{proof}


\section{Outlook}

In this work we have introduced conformal QCAs, which are discrete-spacetime versions of CFTs, and studied their properties as models of holography. We have obtained several results, but we have mostly opened venues for future research. In what follows we enumerate some of the problems that will be addressed in future work.
\begin{itemize}

  \item How much of the CFT phenomenology is covered by conformal QCAs? Do they have a continuum limit?
  
  \item Characterise the algebraic structure of scale and Lorentz transformations in conformal QCAs. In this work, the operators implementing these transformations ($C,D,R_l, L_l$) have been constructed so that classical geometries (i.e.~tensor-network states) are mapped to classical geometries, and not superpositions thereof. The reason for this is that it simplifies the visualisation of the dynamics of geometry and matter. However, if we relax this property, other constructions exist which generate a more structured algebra. 
  
  \item What is the mechanism that produces Poisson level statistics on the spectrum of a conformal QCA constructed with a random dual unitary (which has Wigner-Dyson level statistics)?

  \item Our framework allows for calculating the time evolution of arbitrary (discrete) geometries. However, it is still not clear whether this dynamics corresponds to a discrete version of Einstein's field equations. If this is the case then conformal QCAs will provide a new perspective on quantum gravity.

\end{itemize}




\section{Acknowledgements}

I am thankful to Diego Blas, Sougato Bose, Tom Holden-Dye, Arijeet Pal and Andrea Russo for valuable discussion. This work has been supported by the UK’s Engineering and Physical Sciences Research Council (grant number EP/R012393/1).

\bibliography{bib_adscft}

\begin{figure*}
  \centering  
  \includegraphics[width=175mm]{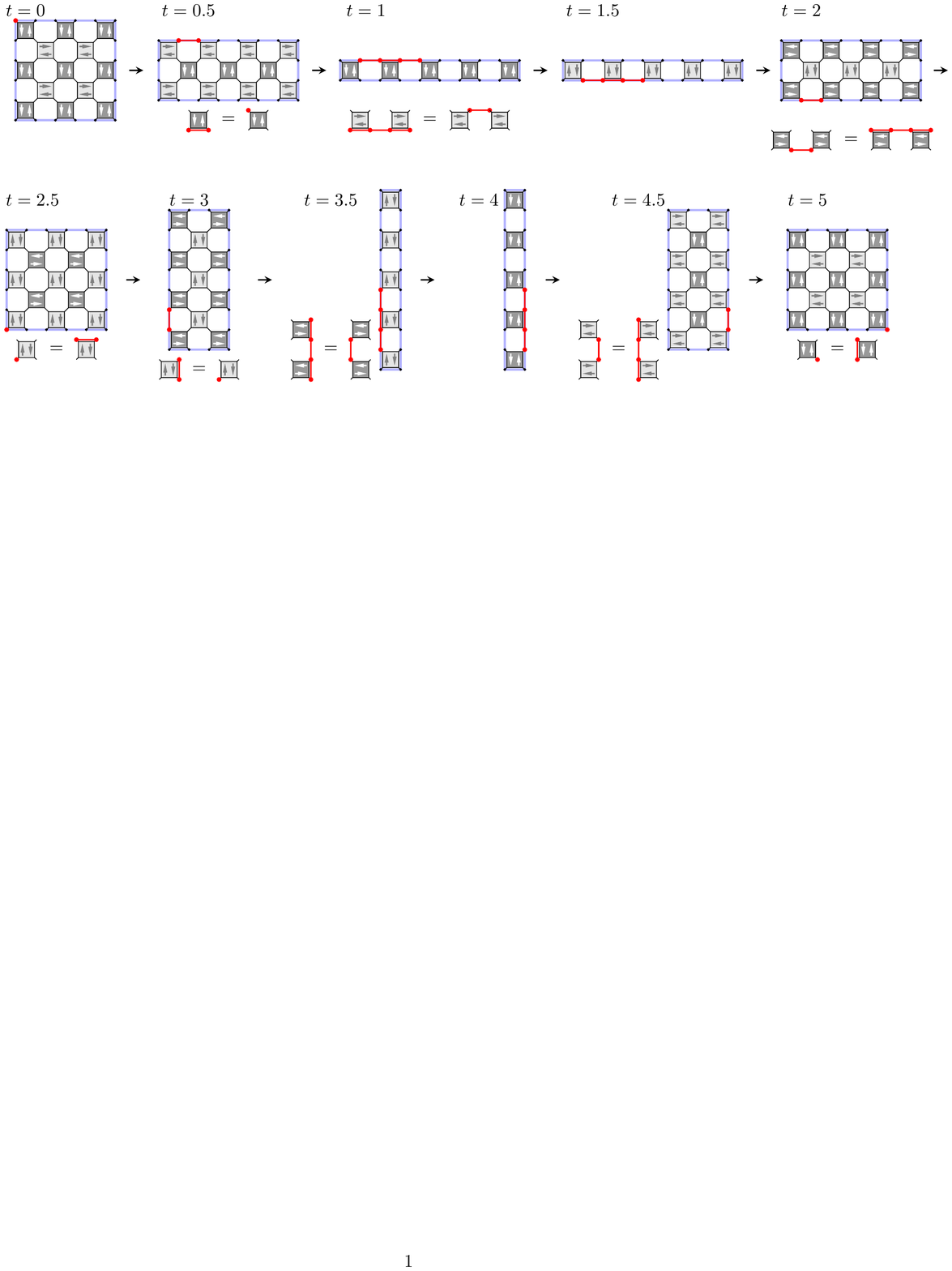}
  \caption{\textbf{Dynamics of flat space with one particle at the corner.} Same assumptions (the dual unitary and $\a$ are real) and notation than in Figure~\ref{flat_m_cycle}. Operator inserted at $t=.5$ is the transpose of that at $t=2$, and the same relation holds for the pairs of times $(0,2.5)$, $(1,1.5)$, $(3,4.5)$, $(3.5,4)$ and $(2.5,5)$. The particle reaches the antipodal point at $t=5$, and the period is $\Delta t= \frac n 2 =10$.}
  \label{flat_m_corner} 
\end{figure*} 

\begin{figure*}
  \centering  
  \includegraphics[width=15cm]{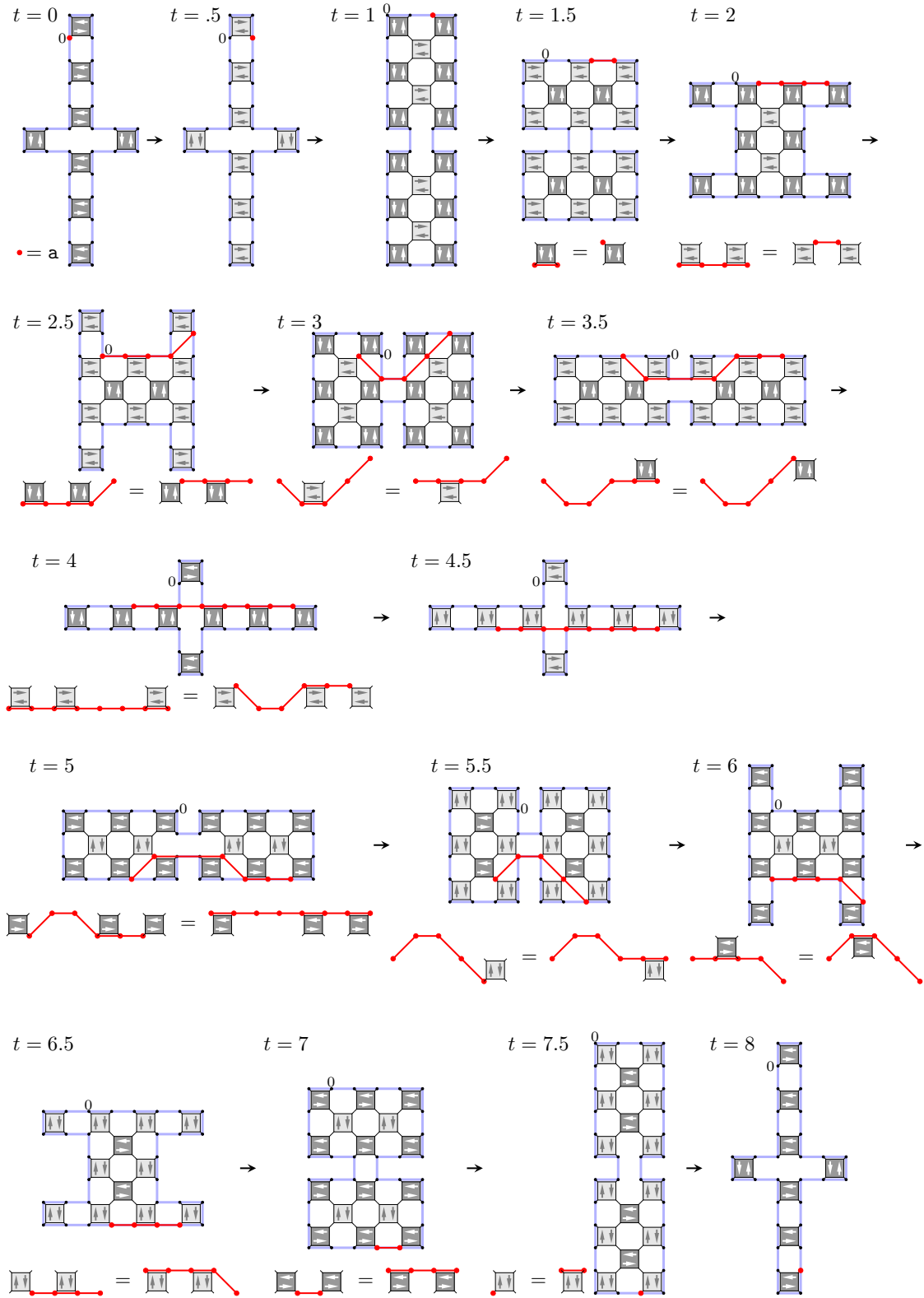}
  \caption{\textbf{Dynamics of toy AdS with one particle.} Same assumptions (the dual unitary and $\a$ are real) and notation than Figure~\ref{flat_m_cycle}. Operator inserted at $t=1.5$ is the transpose of that at $t=7$, and the same relation holds for the pairs of times $(2,6.5)$, $(2.5,6)$, $(3,5.5)$, $(3.5,5)$ and $(4,4.5)$. The particle reaches the antipodal point at $t= \frac n 4 = 8$, and the period is $\Delta t= \frac n 2 =16$.}
  \label{ads_m_cycle} 
\end{figure*} 

\end{document}